
\documentclass[12pt,a4paper]{article}
\usepackage{amssymb}
\usepackage{graphicx}
\usepackage{amsmath}
\usepackage{bbm}
\usepackage[sort&compress, square, numbers]{natbib}
\setcounter{MaxMatrixCols}{10}

\setlength{\oddsidemargin}{-0cm}
\setlength{\evensidemargin}{-0cm}
\setlength{\topmargin}{-1cm}
\setlength{\textheight}{23true cm} \setlength{\textwidth}{16truecm}

\newtheorem{theorem}{Theorem}

\newtheorem{lemma}[theorem]{Lemma}

\newtheorem{remark}[theorem]{Remark}

\newenvironment{proof}[1][Proof]{\noindent\textbf{#1.} }{\ \rule{0.5em}{0.5em}}

\begin{document}

\title{ Removing Gaussian Noise by Optimization of
Weights in Non-Local Means}
\author{Qiyu Jin$^{a,b}$ \and Ion Grama$^{a,b}$ \and Quansheng Liu$^{a,b}$ \\
\\
{\small {{qiyu.jin@univ-ubs.fr} \ \ \ \ \ \ \ {ion.grama@univ-ubs.fr}}}\\
{\small {{quansheng.liu@univ-ubs.fr} } }\\
{\small {$^{a}$Universit\'{e} de Bretagne-Sud, Campus de Tohaninic, BP 573,
56017 Vannes, France }}\\
{\small {\ $^{b}$Universit\'{e} Europ\'{e}enne de Bretagne, France } }}
\date{}
\maketitle

\begin{abstract}
A new image denoising algorithm to deal with the additive Gaussian white
noise model is given. Like the non-local means method, the filter is based
on the weighted average of the observations in a neighborhood, with weights
depending on the similarity of local patches. But in contrast to the
non-local means filter, instead of using a fixed Gaussian kernel, we propose
to choose the weights by minimizing a tight upper bound of mean square error. This approach makes it possible to define the weights adapted to the function at
hand, mimicking the weights of the oracle filter. Under some regularity
conditions on the target image, we show that the obtained estimator
converges at the usual optimal rate. The proposed algorithm is parameter free in the sense that it automatically calculates the bandwidth of the smoothing kernel; it is fast and its
implementation is straightforward. The performance of the new filter is
illustrated by numerical simulations.
\end{abstract}

\medskip \vspace{.5em} {\textit{Keywords}:\,\relax} {\ Non-local means,
image denoising, optimization weights, oracle, statistical estimation.}

\section{Introduction}
 We deal with the additive
Gaussian noise model%
\begin{equation}
Y(x)=f(x)+\varepsilon (x),\;x\in \mathbf{I,}  \label{s1Y1}
\end{equation}%
where $\mathbf{I}$ is a uniform $N\times N$ grid of pixels on the unit square, $Y=\left( Y\left( x\right) \right) _{x\in \mathbf{I}}$ is the observed
image brightness, $f:[0,1]^{2}\rightarrow \mathbf{R}_{+}$ is an unknown target
regression function and $\varepsilon =\left( \varepsilon \left( x\right)
\right) _{x\in \mathbf{I}}$ are independent and identically
distributed (i.i.d.) Gaussian random variables with mean $0$ and standard
deviation $\sigma >0.$ Important denoising techniques for the model (\ref%
{s1Y1}) have been developed in recent years, see for example Buades,
Coll and Morel (2005 \citep{Bu}), Kervrann (2006 \citep{kervrann2006optimal}), Lou,
Zhang, Osher and Bertozzi (2010 \citep{lou2010image}), Polzehl and Spokoiny
(2006 \citep{polzehl2006propagation}), Garnett, Huegerich and Chui (2005 \citep{Garnett2005universal}%
), Cai, Chan, Nikolova (2008 \citep{cai2008two}),  Katkovnik,  Foi,  Egiazarian, and  Astola (
2010 \citep{Katkovnik2010local}), Dabov,  Foi,  Katkovnik and
Egiazarian (2006 \citep{buades2009note}). A
significant step in these developments was the introduction of the Non-Local
Means filter by Buades, Coll and Morel \citep{Bu} and its variants (see e.g.
\citep{kervrann2006optimal}, \citep{kervrann2008local}, \citep{lou2010image}).\ In these filters,
the basic idea is to estimate the unknown image $f(x_{0})$ by a weighted
average of the form
\begin{equation}
\widetilde{f}_{w}(x_{0})=\sum_{x\in \mathbf{I}}w(x)Y(x),  \label{s1fx}
\end{equation}%
where $w=\left( w\left( x\right) \right) _{x\in \mathbf{I}}$ are some
non-negative weights satisfying $\sum_{x\in \mathbf{I}}w(x)=1.$ The choice of
the weights $w$ are based essentially on two criteria: a local criterion so
that the weights are as a decreasing function of the distance to the estimated
pixel, and a non-local criterion which gives more important weights to the
pixels whose brightness is close to the brightness of the estimated pixel (see e.g. Yaroslavsky (1985 \citep{yaroslavsky1985digital}) and Tomasi and Manduchi (1998 \citep{tomasi1998bilateral})). The
non-local approach has been further completed by a fruitful idea which
consists in attaching small regions, called data patches, to each pixel and
comparing these data patches instead of the pixels themselves.

The methods based on the non-local criterion consist of a comparatively novel
direction which is less studied in the literature. In this paper we shall
address two  problems related to this criterion.

The first problem is
how to choose data depending on weights $w$ in (\ref{s1fx}) in some optimal
way. Generally, the weights $w$ are defined through some priory fixed kernels,
often the Gaussian one, and the important problem of the choice of the
kernel has not been addressed so far for the non-local approach. Although
the choice of the Gaussian kernel seems to show reasonable numerical
performance, there is no particular reason to restrict ourselves only to
this type of kernel. Our theoretical results and the accompanying
simulations show that another kernel should be preferred. In addition to
this, for the obtained optimal kernel we shall also be interested in deriving a locally adaptive rule for the bandwidth choice.
The second  problem that we shall address is the convergence of the
obtained filter to the true image. Insights can be found in \citep{Bu}, \citep%
{kervrann2006optimal}, \citep{kervrann2008local} and  \citep{li2010new}, however the problem of convergence of
the Non-Local Means Filter has not been completely settled so far. In this
paper, we shall give some new elements of the proof of the convergence of the
constructed filter, thereby giving a theoretical justification of the
proposed approach from the asymptotic point of view.

Our main idea is to produce a very tight upper bound of the mean square error
\begin{equation*}
R\left( \widetilde{f}_{w}(x_{0})\right) =\mathbb{E}\left( \widetilde{f}%
_{w}(x_{0})-f(x_{0})\right) ^{2}
\end{equation*}%
in terms of the bias and variance and to minimize this upper bound in $w$
under the constraints $w\geq 0$ and $\sum_{x\in \mathbf{I}}w(x)=1.$
In contrast to the usual approach where a specific class of target functions is considered,  here we give a bound of the bias depending only on the target function $f$ at hand, instead of using just a bound expressed in terms of the parameters of the class.
We first
obtain an explicit formula for the optimal weights $w^{\ast }$ in terms of
the unknown function $f.$ In order to get a computable filter, we estimate $%
w^{\ast }$ by some adaptive weights $\widehat{w}$ based on data patches from
the observed image $Y.$ We thus obtain a new filter, which we call \textit{%
Optimal Weights} Filter. To justify theoretically our filter,\ we prove that
it achieves the optimal rate of convergence under some regularity conditions
on $f.$ Numerical results show that Optimal Weights Filter outperforms the
typical Non-Local Means Filter, thus giving a practical justification that
the optimal choice of the kernel improves the quality of the denoising,
while all other conditions are the same.

We would like to   point out that related optimization problems for non
parametric signal and density recovering have been proposed earlier in Sacks
and Ylvysaker (1978 \citep{Sacks1978linear}), Roll (2003 \citep{Roll2003local}), Roll and Ljung
(2004 \citep{Roll2004}), Roll, Nazin and Ljung (2005 \citep{roll2005nonlinear}), Nazin,
Roll, Ljung and Grama (2008 \citep{Nazin2008direct}). In these papers the weights
are optimized over a given class of regular functions and thus depend only
on some parameters of the class. This approach corresponds to the minimax setting, where the resulting minimax estimator has the best rate
of convergence corresponding to the worst image in the given class of
images. If the image happens to have better regularity than the
worst one, the minimax estimator will exhibit a slower rate of convergence
than expected. The novelty of our work is to find the optimal weights
depending on the image $f$ at hand, which implicates that our
Optimal Weights Filter automatically attains the optimal rate of convergence
for each particular image $f.$ Results of this type are related to the "oracle"  concept developed in Donoho and Johnstone (1994 \citep%
{donoho1994ideal}).

Filters with data-dependent weights have been previously studied in many
papers, among which we mention Polzehl and Spokoiny (2000 \citep{polzehl2000adaptive}%
, 2003 \citep{polzehl2003image}, 2006 \citep{polzehl2006propagation}), Kervrann (2006 \citep%
{kervrann2006optimal} and 2007 \citep{kervrann2010bayesian}). Compared with these filters our
algorithm is straightforward to implement and gives a quality of denoising
which is close to that of the best recent methods (see Table \ref{Table
compar}). The weight optimization approach can also be applied with these
algorithms to improve them. In particular, we can use it with recent versions
of the Non-Local Means Filter, like the BM3D (see 2006 \citep{buades2009note}, 2007 \citep{Dabov2007color,Dabov2007image});
however this is beyond the scope of the present paper and will be done
elsewhere.

The paper is organized as follows. Our new filter based on  the optimization of weights in the introduction in Section \ref{Sec:constr} where we present the main idea and the algorithm.  Our main theoretical results are
presented in Section \ref{Sec:main} where we give the rate of convergence of
the constructed estimators. In Section \ref{Sec:simulations}, we present our
simulation results with a brief analysis. Proofs of the main results are
deferred to Section \ref{Sec:Appendix Proofs}.

To conclude this section, let us set some important notations to be used
throughout the paper. The Euclidean norm of a vector $x=\left(
x_{1},...,x_{d}\right) \in \mathbf{R}^{d}$ is denoted by $%
\left\Vert x\right\Vert _{2}=\left( \sum_{i=1}^{d}x_{i}^{2}\right) .$ The
supremum norm of $x$ is denoted by $\Vert x\Vert _{\infty }=\sup_{1\leq
i\leq d}\left\vert x_{i}\right\vert .$ The cardinality of a set $\mathbf{A}$ is denoted $\text{card}\, \mathbf{A}$. For a positive integer $N$ the uniform $N\times N$-grid of pixels on the unit square is defined
by
\begin{equation}
\mathbf{I}=\left\{ \frac{1}{N},\frac{2}{N},\cdots ,\frac{N-1}{N},1\right\}
^{2}.  \label{def I}
\end{equation}%
Each element $x$ of the grid $\mathbf{I}$ will be called pixel. The number of pixels
is $n=N^{2}.$ For any pixel $x_{0}\in \mathbf{I}$ and a given $h>0,$ the
square window of pixels
\begin{equation}
\mathbf{U}_{x_{0},h}=\left\{ x\in \mathbf{I:\;}\Vert x-x_{0}\Vert _{\infty
}\leq h\right\}  \label{def search window}
\end{equation}%
will be called \emph{search window} at $x_{0}.$ We naturally take $h$ as a multiple of $\frac{1}{N}$ ($ h=\frac{k}{N}$ for some $k\in \{ 1, 2,\cdots,N\}$). The size of the square
search window $\mathbf{U}_{x_{0},h}$ is the positive integer number
$$%
M=nh^{2}=\mathrm{card\ }\mathbf{U}_{x_{0},h}.
$$
 For any pixel $x\in \mathbf{U}%
_{x_{0},h}$ and a given $\eta >0$ a second square window of pixels
\begin{equation}
\mathbf{V}_{x,\eta }=\left\{ y\in \mathbf{I:\;}\Vert y-x\Vert _{\infty }\leq
\eta \right\}  \label{def patch}
\end{equation}%
will be called for short a \emph{patch window} at $x$ in order to be
distinguished from the search window $\mathbf{U}_{x_{0},h}.$ Like $h$, the parameter $\eta$ is also taken as a multiple of $\frac{1}{N}$. The size of the
patch window $\mathbf{V}_{x,\eta }$ is the positive integer
$$m=n\eta
^{2}=\mathrm{card\ }\mathbf{V}_{x_{0},\eta }.
$$
 The vector $\mathbf{Y}%
_{x,\eta }=\left( Y\left( y\right) \right) _{y\in \mathbf{V}_{x,\eta }}$
formed by the the values of the observed noisy image $Y$ at pixels in the
patch $\mathbf{V}_{x,\eta }$ will be called simply \emph{data patch} at $%
x\in \mathbf{U}_{x_{0},h}.$ Finally, the positive part of  a real number $%
a$ is denoted by $a^{+},$ that is%
\begin{equation*}
a^{+}=\left\{
\begin{array}{cc}
a & \text{if }a\geq 0, \\
0 & \text{if }a<0.%
\end{array}%
\right.
\end{equation*}

\section{\label{Sec:constr}Construction of the estimator}

Let $h>0$ be fixed. For any pixel $x_{0}\in \mathbf{I}$ consider a family of
weighted estimates $\widetilde{f}_{h,w}(x_{0})$ of the form
\begin{equation}
\widetilde{f}_{h,w}(x_{0})=\sum_{x\in \mathbf{U}_{x_{0},h}}w(x)Y(x),
\label{s2fx}
\end{equation}%
where the unknown weights satisfy
\begin{equation}
w(x)\geq 0\;\;\;\text{and\ \ \ }\sum_{x\in \mathbf{U}_{x_{0},h}}w(x)=1.
\label{s2wx}
\end{equation}%
The usual bias plus variance decomposition of the mean square error
gives%
\begin{equation}
\mathbb{E}\left( \widetilde{f}_{h,w}(x_{0})-f(x_{0})\right)
^{2}=Bias^{2}+Var,  \label{s2ef}
\end{equation}%
with%
\begin{equation*}
Bias^{2}=\left( \sum_{x\in \mathbf{U}_{x_{0},h}}w(x)\left(
f(x)-f(x_{0})\right) \right) ^{2}\;\;\text{and\ \ }Var=\sigma ^{2}\sum_{x\in
\mathbf{U}_{x_{0},h}}w(x)^{2}.
\end{equation*}%
The decomposition (\ref{s2ef}) is commonly used to construct asymptotically
minimax estimators over some given classes of functions in the nonparametric
function estimation. In order to highlight the difference between the
approach proposed in the present paper and the previous work, suppose that $f$
belongs to the class of functions satisfying the H\"{o}lder condition $%
|f(x)-f(y)|\leq L\Vert x-y\Vert _{\infty }^{\beta },\,\,\,\forall x,\,y\in
\mathbf{I}.$ In this case, it is easy to see that%
\begin{equation}
\mathbb{E}\left( \widetilde{f}_{h,w}(x_{0})-f(x_{0})\right) ^{2}\leq \left(
\sum_{x\in \mathbf{U}_{x_{0},h}}w(x)L\left\vert x-x_{0}\right\vert ^{\beta
}\right) ^{2}+\sigma ^{2}\sum_{x\in \mathbf{U}_{x_{0},h}}w(x)^{2}.
\label{expect E}
\end{equation}%
Optimizing further the weights $w$ in the obtained upper bound gives an
asymptotically minimax estimate with weights depending on the unknown
parameters $L$ and $\beta $ (for details see  \citep{Sacks1978linear}).
With our approach the bias term $Bias^{2}$ will be bounded in terms of the
unknown function $f$ itself. As a result we obtain some "oracle" weights $w$
adapted to the unknown function $f$ at hand, which will be estimated
further using data patches from the image $Y.$

First, we shall address the problem of determining the "oracle" weights. With
this aim denote%
\begin{equation}
\rho _{f,x_{0}}\left( x\right) \equiv \left\vert f(x)-f(x_{0})\right\vert .
\label{s2rx}
\end{equation}%
Note that the value $\rho _{f,x_{0}}\left( x\right) $ characterizes the
variation of the image brightness of the pixel $x$ with respect to the pixel
$x_{0}.$ From the decomposition (\ref{s2ef}), we easily obtain a tight upper
bound in terms of the vector $\rho _{f,x_{0}}:$%
\begin{equation}
\mathbb{E}\left( \widetilde{f}_{h}(x_{0})-f(x_{0})\right) ^{2}\leq g_{\rho
_{f,x_{0}}}(w),  \label{s2ef-vers2}
\end{equation}%
where
\begin{equation}
g_{\rho _{f,x_{0}}}(w)=\left( \sum_{x\in \mathbf{U}_{x_{0},h}}w(x)\rho
_{f,x_{0}}(x)\right) ^{2}+\sigma ^{2}\sum_{x\in \mathbf{U}%
_{x_{0},h}}w(x)^{2}.  \label{s2gw}
\end{equation}

From the following theorem we can obtain the form of the weights $w$ which
minimize the function $g_{\rho _{f,x_{0}}}(w)$ under the constraints (\ref%
{s2wx}) in terms of the values $\rho _{f,x_{0}}\left( x\right) .$ For the
sake of generality, we shall formulate the result for an arbitrary
non-negative function $\rho \left( x\right) ,$ $x\in \mathbf{U}_{x_{0},h}.$
Define the objective function
\begin{equation}
g_{\rho }(w)=\left( \sum_{x\in \mathbf{U}_{x_{0},h}}w(x)\rho (x)\right)
^{2}+\sigma ^{2}\sum_{x\in \mathbf{U}_{x_{0},h}}w(x)^{2}.
\label{def gw}
\end{equation}%
Introduce into consideration the strictly increasing function%
\begin{equation}
M_{\rho }\left( t\right) =\sum_{x\in \mathbf{U}_{x_{0},h}}\rho (x)(t-\rho
(x))^{+},\ \ \ t\geq 0.
\label{def mt}
\end{equation}%
Let $K_{\text{tr}}$ be the usual triangular kernel:
\begin{equation}
K_{\text{tr}}\left( t\right) =\left( 1-\left\vert t\right\vert \right)
^{+},\quad t\in \mathbb{R}^{1}.
\label{def kernel}
\end{equation}

\begin{theorem}
\label{Th weights 001}Assume that $\rho \left( x\right) ,$ $x\in \mathbf{U}%
_{x_{0},h}$, is a  non-negative function. Then the
unique weights which minimize $g_{\rho }(w)$ subject to (\ref{s2wx}) are
given by
\begin{equation}
w_{\rho }(x)=\frac{K_{\text{tr}}(\frac{\rho (x)}{a})}{\sum\limits_{y\in
\mathbf{U}_{x_{0},h}}K_{\text{tr}}(\frac{\rho (x)}{a})},\quad x\in \mathbf{U}%
_{x_{0},h},  \label{eq th weights 001}
\end{equation}%
where the bandwidth $a>0$ is the unique solution on $(0,\infty)$ of the equation
\begin{equation}
M_{\rho }\left( a \right) =\sigma ^{2}.  \label{eq th weights 002}
\end{equation}
\end{theorem}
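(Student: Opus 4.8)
The plan is to read the objective in (\ref{def gw}) as a convex program on the probability simplex and to recover the minimizer from its Karush--Kuhn--Tucker (KKT) conditions. First I would record that $g_\rho$ is strictly convex: writing $S=\sum_{x\in\mathbf{U}_{x_0,h}}w(x)\rho(x)$, its Hessian in $w$ is $2\rho\rho^{\top}+2\sigma^2 I$, which is positive definite because $\sigma^2>0$. Since the feasible set defined by (\ref{s2wx}) is convex and compact, a minimizer exists and, by strict convexity, is unique; this already settles the uniqueness assertion. Because the constraints are affine, the KKT conditions are necessary and sufficient, so it suffices to exhibit one feasible point satisfying them and to check that it has the asserted form.

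Next I would write down stationarity. Introducing a multiplier $\lambda$ for the equality $\sum_x w(x)=1$ and multipliers $\mu(x)\ge 0$ for $w(x)\ge 0$, with complementary slackness $\mu(x)w(x)=0$, differentiating $g_\rho$ gives, at the optimum, $2S\,\rho(x)+2\sigma^2 w(x)-\lambda-\mu(x)=0$ for every $x$, where $S$ is evaluated at the optimal $w$. On the support $\{w>0\}$ one has $\mu(x)=0$, hence $w(x)=(\lambda-2S\rho(x))/(2\sigma^2)$, while off the support the sign condition $\mu(x)\ge 0$ forces $\rho(x)\ge \lambda/(2S)$. Writing $a=\lambda/(2S)$, the two cases merge into the single positive-part formula $w(x)=\frac{S}{\sigma^2}(a-\rho(x))^{+}$, and since $\rho\ge 0$ this equals $\frac{Sa}{\sigma^2}K_{\mathrm{tr}}(\rho(x)/a)$ with $K_{\mathrm{tr}}$ as in (\ref{def kernel}). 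In particular the weights are automatically nonnegative and proportional to $K_{\mathrm{tr}}(\rho(x)/a)$, so normalization yields exactly (\ref{eq th weights 001}).

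It then remains to pin down $a$ by self-consistency, and here I would first check $S>0$ (otherwise $a$ is not well defined). If $S=0$, stationarity forces $w$ to be constant on its support and, via the sign condition, constant over all of $\mathbf{U}_{x_0,h}$, whence $S=\frac{1}{M}\sum_x\rho(x)>0$ as soon as $\rho\not\equiv 0$, a contradiction; thus the displayed formula is legitimate with $a>0$. Imposing the two feasibility relations on $w(x)=\frac{S}{\sigma^2}(a-\rho(x))^{+}$ gives $S\sum_x(a-\rho(x))^{+}=\sigma^2$ from normalization, and, from the identity $S=\sum_x w(x)\rho(x)$ after cancelling the common factor $S$, the relation $\sigma^2=\sum_x\rho(x)(a-\rho(x))^{+}=M_\rho(a)$, which is precisely (\ref{eq th weights 002}).

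The main remaining point, and the one I expect to require the most care, is that (\ref{eq th weights 002}) determines $a$ uniquely. Here I would use that $M_\rho$ in (\ref{def mt}) is continuous and piecewise linear, with $M_\rho(0)=0$ and right derivative $\sum_{x:\rho(x)<t}\rho(x)$, which is strictly positive once $t$ exceeds $\min\{\rho(x):\rho(x)>0\}$; hence $M_\rho$ is strictly increasing on the range where it is positive and tends to $+\infty$, since its slope stabilizes at $\sum_x\rho(x)>0$. Therefore $M_\rho$ attains the value $\sigma^2>0$ at exactly one point $a>0$. Feeding this $a$ back into the positive-part formula and normalizing produces a feasible point meeting every KKT condition, which by convexity is the unique global minimizer, completing the argument.
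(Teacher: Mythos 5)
Your proof is correct and follows essentially the same route as the paper's: a KKT/Lagrange-multiplier analysis of the strictly convex program yielding weights of the form $c\,(a-\rho(x))^{+}$, followed by the reduction of the two feasibility relations to the single equation $M_{\rho}(a)=\sigma^{2}$ and a monotonicity argument for the uniqueness of $a$. If anything you are slightly more careful than the paper on two points it glosses over --- verifying that $S=\sum_{x}w(x)\rho(x)>0$ so that $a=\lambda/(2S)$ is well defined (the paper's substitution $b=\lambda a$ tacitly assumes $\lambda\neq 0$), and noting that $M_{\rho}$ is strictly increasing only where it is positive --- though like the paper you implicitly exclude the degenerate case $\rho\equiv 0$, for which $M_{\rho}\equiv 0$ and the bandwidth equation has no solution.
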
\par
Theorem \ref{Th weights 001} can be obtained from a result of Sacks and Ylvysaker \citep{Sacks1978linear}. The proof is deferred to Section  \ref{Sec: proof of Th weights 001}.
\begin{remark}\label{calculate a}
The value of $a>0$ can be calculated as follows. We sort the set $\{\rho(x)\, | \, x \in \mathbf{U}_{x_0,h}\}$  in the ascending order $0=\rho_{1}\leq\rho_{2}\leq\cdots\leq\rho_{M}<\rho_{M+1}=+\infty$, where $M=Card \, \mathbf{U}_{x_0,h}$.
Let
\begin{equation}
a_k=\frac{\sigma^2+\sum\limits_{i=1}^{k}\rho_i^2}
{\sum\limits_{i=1}^{k}\rho_i}, \quad 1\leq k\leq M,
\label{a k}
\end{equation}
and
\begin{equation}
k^*=\max\{1\leq k \leq M\, |\, a_k\geq \rho_{k} \}=\min\{1\leq k \leq M \, | \, a_{k}< \rho_{k} \}-1,
\label{k star}
\end{equation}
with the convention that $a_k=\infty$ if  $\rho_k=0$ and that $\min \varnothing =M+1$.
Then  the solution $a>0$ of (\ref{eq th weights 002}) can be expressed as $a=a_{k^*}$; moreover, $k^*$ is the unique integer $k\in \{1,\cdots, M\}$ such that $a_k\geq \rho_k$ and $a_{k+1}<\rho_{k+1}$ if $k<M$.
\end{remark}
\par
The proof of the remark is deferred to Section  \ref{Sec: proof of remark}.
\par
Let $x_{0}\in \mathbf{I.}$ Using the optimal weights given by Theorem \ref{Th
weights 001}, we first introduce the following non computable approximation
of the true image, called "oracle":
\begin{equation}
f_{h}^{\ast }(x_{0})=\frac{\sum_{x\in \mathbf{U}_{x_{0},h}}K_{\text{tr}}(%
\frac{\rho _{f,x_{0}}\left( x\right) }{a})Y(x)}{\sum\limits_{y\in \mathbf{U}%
_{x_{0},h}}K_{\text{tr}}(\frac{\rho _{f,x_{0}}\left( x\right) }{a})},
\label{oracle}
\end{equation}%
where the bandwidth $a$ is the solution of the equation $M_{\rho
_{f,x_{0}}}\left( a\right) =\sigma ^{2}.$ A computable filter can be
obtained by estimating the unknown function $\rho _{f,x_{0}}\left( x\right) $
and the bandwidth $a$ from the data as follows.

Let $h>0$ and $\eta >0$ be fixed numbers. For any $x_{0}\in \mathbf{I}$ and
any $x\in \mathbf{U}_{x_{0},h}$ consider a distance between the data patches
$\mathbf{Y}_{x,\eta }=\left( Y\left( y\right) \right) _{y\in \mathbf{V}%
_{x,\eta }}$ and $\mathbf{Y}_{x_{0},\eta }=\left( Y\left( y\right) \right)
_{y\in \mathbf{V}_{x_{0},\eta }}$ defined by
\begin{equation*}
d^{2}\left( \mathbf{Y}_{x,\eta },\mathbf{Y}_{x_{0},\eta }\right) =\frac{1}{m}%
\left\Vert \mathbf{Y}_{x,\eta }-\mathbf{Y}_{x_{0},\eta }\right\Vert _{2}^{2},
\end{equation*}%
where $m=\mathrm{card\ }\mathbf{V}_{x,\eta }$, and $\left\Vert \mathbf{Y}_{x,\eta }-\mathbf{Y}_{x_{0},\eta }\right\Vert _{2}^{2}=\sum\limits_{x_0+z\in \mathbf{V}_{x_0,\eta}} \left( Y(x+z)-Y(x_0+z)\right)^2$. Since Buades, Coll and Morel
\citep{Bu} the distance $d^{2}\left( \mathbf{Y}_{x,\eta },\mathbf{Y}%
_{x_{0},\eta }\right) $ is known to be a flexible tool to measure the
variations of the brightness of the image $Y.$
As
\begin{equation*}
Y(x+z)-Y(x_0+z)=f(x+z)-f(x_0+z)+\epsilon(x+z)-\epsilon(x_0+z)
\end{equation*}
we have
\begin{equation*}
\mathbb{E}(Y(x+z)-Y(x_0+z))^2=(f(x+z)-f(x_0+z))^2+2\sigma^2.
\end{equation*}
If we use the approximation
\begin{equation*}
(f(x+z)-f(x_0+z))^2\approx (f(x)-f(x_0))^2=\rho^2_{f,x_0}(x)
\end{equation*}
and  the law of large numbers, it seems reasonable that
\begin{equation*}
\rho^2_{f,x_0}(x)\approx d^2(\mathbf{Y}_{x,\eta},\mathbf{Y}_{x_0,\eta})-2\sigma^2.
\end{equation*}
But our simulations show that a much better approximation is
\begin{equation}
\rho_{f,x_0}(x) \approx \widehat{\rho}_{x_0}(x)=\left(d(\mathbf{Y}_{x,\eta},\mathbf{Y}_{x_0,\eta})
-\sqrt{2}\sigma\right)^+.
\label{rho-estim}
\end{equation}
The fact that $\widehat{\rho}_{x_0}(x)$ is a good estimator of $\rho_{f,x_0}$ will be justified by  convergence theorems: cf. Theorems \ref{Th adapt 001} and \ref{Th adapt 002} of Section \ref{Sec:main}.
Thus our Optimal Weights Filter
is defined by%
\begin{equation}
\widehat{f}(x_{0})=\widehat{f}_{h,\eta }(x_{0})=\frac{\sum_{x\in \mathbf{U}%
_{x_{0},h}}K_{\text{tr}}(\frac{\widehat{\rho }_{x_{0}}\left( x\right) }{%
\widehat{a}})Y(x)}{\sum\limits_{y\in \mathbf{U}_{x_{0},h}}K_{\text{tr}}(%
\frac{\widehat{\rho }_{x_{0}}\left( x\right) }{\widehat{a}})},
\label{OWFilter}
\end{equation}%
where the bandwidth $\widehat{a}>0$ is the solution of the equation $M_{%
\widehat{\rho }_{x_{0}}}\left( \widehat{a}\right) =\sigma ^{2}$, which can be calculated as in Remark \ref{calculate a} (with $\rho(x)$ and $a$ replaced by $\widehat{\rho}_{x_0}(x)$ and $\widehat{a}$ respectively). We end this section
by giving an algorithm for computing the filter (\ref{OWFilter}). The input
values of the algorithm are the image $Y\left( x\right) ,$ $x\in \mathbf{I}$
, the variance of the noise $\sigma $ and two numbers $m$ and $M$
representing  the sizes of the patch window and  the search window
respectively.

\bigskip

\textbf{Algorithm :}\quad Optimal Weights Filter

Repeat for each $x_0\in \mathbf{I}$

\quad \quad give an initial value of $\widehat {a}$: $\widehat{a}=1$ (it can be an arbitrary positive number).

\quad \quad compute $\{\widehat{\rho} _{x_0}(x)\,|\,  x\in \mathbf{U}_{x_0,h}\}$ by (\ref{rho-estim})

/\emph{compute the bandwidth }$\widehat{a}$\emph{\ at }$x_{0}$

\quad \quad reorder $\{\widehat{\rho }_{x_0}(x)\,|\,  x\in \mathbf{U}_{x_0,h}\}$ as increasing sequence, say

\quad \quad\quad \quad  $\widehat{\rho}_{x_0}(x_1)\leq\widehat{\rho}_{x_0}(x_2)\leq \cdots \leq \widehat{\rho}_{x_0}(x_M)$ \ \ \ \

\quad \quad loop from $k=1$ to $M$

\quad \quad \quad \quad if $\sum_{i=1}^{k}\widehat{\rho }_{x_0}(x_{i})>0$

\quad \quad \quad \quad \quad \quad if $\frac{\sigma ^{2}+\sum_{i=1}^{k}%
\widehat{\rho }_{x_0}^{2}(x_{i})}{\sum_{i=1}^{k}\widehat{\rho }_{x_0}(x_{i})}\geq
\widehat{\rho }(x_{k})$ then $\widehat{a}=\frac{\sigma ^{2}+\sum_{i=1}^{k}%
\widehat{\rho }_{x_0}^{2}(x_{i})}{\sum_{i=1}^{k}\widehat{\rho}_{x_0}(x_i)}$

\quad \quad \quad \quad \quad \quad else quit loop

\quad \quad \quad \quad else continue loop

\quad \quad end loop

/\emph{compute the estimated weights }$\widehat{w}$\emph{\ at }$x_{0}$

\quad \quad compute $\widehat{w}(x_{i})=\frac{K_{\text{tr}}(1-\widehat{\rho }_{x_0}%
(x_{i})/\widehat{a})^{+}}{\sum_{x_{i}\in \mathbf{U}_{x_{0},h}}K_{\text{tr}%
}(1-\widehat{\rho }_{x_0}(x_{i})/\widehat{a})^{+}}$

/\emph{compute the filter }$\widehat{f}$\emph{\ at }$x_{0}$

\quad \quad compute $\widehat{f}(x_{0})=\sum_{x_{i}\in \mathbf{U}_{x_{0},h}}%
\widehat{w}(x_{i})Y(x_{i})$.

\bigskip

\noindent \bigskip

\par
The proposed algorithm is computationally fast and its implementation is
straightforward compared to more sophisticated algorithms developed in recent years. Notice that an important issue in the non-local means filter is the choice of the bandwidth parameter in the Gaussian kernel; our algorithm is parameter free in the sense that it automatically chooses the bandwidth.
\par The numerical simulations show that our filter outperforms the
classical non-local means filter under the same conditions. The overall
performance of the proposed filter compared to its simplicity is very good
which can be a big advantage in some practical applications. We hope that
optimal weights that we deduced can be useful with more complicated
algorithms and can give similar improvements of the denoising quality.
However, these investigations are beyond the scope of the present paper. A
detailed analysis of the performance of our filter is given in Section \ref%
{Sec:simulations}.

\section{\label{Sec:main}Main results}

In this section, we  present two theoretical results.
\par
The first result is a
mathematical justification of the "oracle" filter introduced in the previous
section. It shows that despite the fact that we minimized an upper bound of
the mean square error instead of the mean square error itself, the
obtained "oracle" still has the optimal rate of convergence. Moreover, we
show that the weights optimization approach possesses the following important
adaptivity property: our procedure automatically chooses the correct
bandwidth $a>0$ even if the radius $h>0$ of the search window $U_{x_{0},h}$ is
larger than necessary.

The second result shows the convergence of the Optimal Weights Filter $%
\widehat{f}_{h,\eta}$ under some more restricted conditions than those formulated
in Section \ref{Sec:constr}. To prove the convergence, we split the image
into two independent parts. From the first one, we construct  the
"oracle" filter; from the second one, we estimate the weights. Under
some regularity assumptions on the target image we are able to show that the
resulting filter has nearly the optimal rate of convergence.

Let $\rho \left( x\right) ,$ $x\in \mathbf{U}_{x_{0},h},$ be an arbitrary
non-negative function and let $w_{\rho }$ be the optimal weights given by (\ref{eq th weights
001}). Using these weights $w_{\rho }$
we define the family of estimates%
\begin{equation}
f_{h}^{\ast }(x_{0})=\sum_{x\in \mathbf{U}_{x_{0},h}}w_{\rho }(x)Y(x)
\label{s2fx3}
\end{equation}%
depending on the unknown function $\rho .$ The next theorem shows that one
can pick up a useful estimate from the family $f_{h}^{\ast }$ if the the
function $\rho $ is close to the "true" function $\rho _{f,x_{0}}(x)=\left\vert
f\left( x\right) -f\left( x_{0}\right) \right\vert ,$ i.e. if
\begin{equation}
\rho \left( x\right) =\left\vert f\left( x\right) -f\left( x_{0}\right)
\right\vert +\delta _{n},  \label{s2fx3a}
\end{equation}%
where $\delta _{n}\geq 0$ is a small deterministic error. We shall prove the
convergence of the estimate $f_{h}^{\ast }$ under the local H\"{o}lder
condition
\begin{equation}
|f(x)-f(y)|\leq L\Vert x-y\Vert _{\infty }^{\beta },\,\,\,\forall x,\,y\in
\mathbf{U}_{x_{0},h},  \label{Local Holder cond}
\end{equation}%
where $\beta >0$ is a constant, $h>0,$ and $x_{0}\in \mathbf{I}.$
\par
In the
following, $c_{i}>0$ $(i\geq 1)$ denotes a positive constant, and $O(a_n)$  $(n\geq 1)$ denotes a number bounded by $c\cdot a_n$ for some constant $c>0$. All the constants $c_i>0$ and $c>0$ depend only on $L$, $\beta$ and $\sigma$; their values can be different from line to line.
\begin{theorem}
\label{Th oracle 001} Assume that $h=c_{1}n^{-\frac{1}{2\beta +2}}$ with $c_{1}>c_{0}=\left( %
\frac{\sigma ^{2}\left( \beta +2\right) \left( 2\beta +2\right) }{8L^{2}\beta }\right)
^{\frac{1}{2\beta +2}}$, or  $h\geq
c_{1}n^{-\alpha }$ with $0\leq \alpha <\frac{1}{2\beta +2}$ and $c_{1}>0$.  Suppose that
 $f$ satisfies the local H\"{o}%
lder's condition (\ref{Local Holder cond}) and that $\delta _{n}=O\left( n^{-%
\frac{\beta }{2+2\beta }}\right) .$ Then
\begin{equation}
\mathbb{E}\left( f_{h}^{\ast }(x_{0})-f(x_{0})\right) ^{2}=O\left( n^{-\frac{%
2\beta }{2+2\beta }}\right) .  \label{s2ef2}
\end{equation}%
\end{theorem}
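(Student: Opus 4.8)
The plan is to control the mean square error $\mathbb{E}(f_h^*(x_0)-f(x_0))^2$ by using the upper bound $g_{\rho_{f,x_0}}(w_\rho)$ established in the construction, where the weights $w_\rho$ are optimal for the surrogate function $\rho$ defined in \eqref{s2fx3a}. By \eqref{s2ef-vers2} and the fact that $w_\rho$ minimizes $g_{\rho}(w)$, the key inequality will be
\begin{equation*}
\mathbb{E}\left( f_{h}^{\ast }(x_0)-f(x_0)\right)^2 \leq g_{\rho_{f,x_0}}(w_\rho) \leq g_{\rho}(w_\rho) + (\text{error from } \delta_n),
\end{equation*}
so that it suffices to bound $g_\rho(w_\rho)$ and to show that replacing $\rho_{f,x_0}$ by $\rho = \rho_{f,x_0}+\delta_n$ costs only a term of the correct order. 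First I would establish a clean expression for the minimal value $g_\rho(w_\rho)$: evaluating \eqref{def gw} at the optimal weights \eqref{eq th weights 001} and using the defining equation $M_\rho(a)=\sigma^2$, I expect the minimum to simplify to something proportional to the bandwidth $a$ times $\sigma^2$, i.e. $g_\rho(w_\rho)$ is of order $a^2$ up to constants. Thus the whole problem reduces to controlling the bandwidth $a>0$ solving $M_\rho(a)=\sigma^2$.

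The central step is therefore to show that $a = O(n^{-\beta/(2+2\beta)})$, which would immediately give the claimed rate since $g_\rho(w_\rho)$ is of order $a^2 = O(n^{-2\beta/(2+2\beta)})$. To bound $a$ from above I would exhibit a candidate value $t_0$ of order $n^{-\beta/(2+2\beta)}$ and show $M_\rho(t_0)\geq \sigma^2$; since $M_\rho$ is strictly increasing, this forces $a\leq t_0$. The lower bound on $M_\rho(t_0)$ comes from counting how many pixels $x\in \mathbf{U}_{x_0,h}$ have $\rho(x)$ small. Using the local H\"older condition \eqref{Local Holder cond}, a pixel at $\ell_\infty$-distance $r$ from $x_0$ satisfies $\rho_{f,x_0}(x)\leq L r^\beta$, hence $\rho(x)=\rho_{f,x_0}(x)+\delta_n\leq Lr^\beta+\delta_n$; for $r$ up to some fraction of $t_0^{1/\beta}$ these $\rho(x)$ values are bounded by a constant multiple of $t_0$, and the number of such pixels is of order $n r^2 \sim n\, t_0^{2/\beta}$. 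Plugging into \eqref{def mt}, each such term contributes at least a constant multiple of $t_0\cdot t_0 = t_0^2$ relative to $t$, so $M_\rho(t_0)\gtrsim (n\,t_0^{2/\beta})\cdot t_0^2$; choosing $t_0 = c_1' n^{-\beta/(2+2\beta)}$ makes $n\, t_0^{2/\beta+2} = n^{1 - (2+2\beta)/(2+2\beta)}\cdot(\text{const}) $ of constant order, so $M_\rho(t_0)$ exceeds $\sigma^2$ for $c_1'$ large enough. This is where the two hypotheses on $h$ enter: the search window must be \emph{large enough} (i.e. $h$ not too small, which is exactly the condition $c_1>c_0$ or $h\geq c_1 n^{-\alpha}$) so that $\mathbf{U}_{x_0,h}$ actually contains the $\sim n\,t_0^{2/\beta}$ pixels needed to accumulate mass $\sigma^2$ in $M_\rho$.

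The remaining step is to verify that the surrogate error $\delta_n$ does not spoil the estimate. Since $\delta_n = O(n^{-\beta/(2+2\beta)})$ is of the same order as the target $a$, I expect that wherever $\rho_{f,x_0}$ is replaced by $\rho=\rho_{f,x_0}+\delta_n$, the discrepancy in the bias term $\sum_x w_\rho(x)\rho_{f,x_0}(x) \leq \sum_x w_\rho(x)\rho(x)$ is already absorbed, and separately one checks that $\delta_n^2$ itself is $O(n^{-2\beta/(2+2\beta)})$, contributing only a harmless additive term of the correct order. I anticipate the main obstacle to be the counting argument for the lower bound on $M_\rho(t_0)$: one must carefully handle the discreteness of the grid $\mathbf{I}$ (pixels are spaced $1/N$ apart, $n=N^2$), ensure that $t_0^{1/\beta}$ corresponds to a genuine number of grid pixels and not a fraction, and confirm that the regime for $t_0$ is compatible with the window radius $h$ under both stated hypotheses. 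Getting the constant $c_0=\big(\sigma^2(\beta+2)(2\beta+2)/(8L^2\beta)\big)^{1/(2\beta+2)}$ to come out exactly will require tracking the constants in this counting estimate precisely rather than up to an unspecified factor.
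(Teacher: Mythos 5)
Your reduction of the problem to the bandwidth $a$ rests on a step that fails. You propose to lower-bound $M_\rho(t_0)=\sum_{x}\rho(x)(t_0-\rho(x))^+$ by counting the pixels within distance $\sim t_0^{1/\beta}$ of $x_0$ and crediting each with a contribution $\gtrsim t_0^2$. But the local H\"older condition (\ref{Local Holder cond}) only gives the \emph{upper} bound $\rho_{f,x_0}(x)\leq L\|x-x_0\|_\infty^\beta$; it gives no lower bound on $\rho(x)$. A pixel with $\rho(x)\ll t_0$ contributes only $\rho(x)\,t_0\ll t_0^2$, so your count does not force $M_\rho(t_0)\geq\sigma^2$. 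Worse, the claim you want to prove is false: if $f$ is constant on $\mathbf{U}_{x_0,h}$ and $\delta_n$ is very small, then $M_\rho(t)=\delta_n(t-\delta_n)^+\,\mathrm{card}\,\mathbf{U}_{x_0,h}$ and the solution $a$ of $M_\rho(a)=\sigma^2$ is enormous (infinite when $\delta_n=0$), not $O(n^{-\beta/(2\beta+2)})$. The theorem's conclusion still holds in that case, but only because the exact value of the minimum is $g_\rho(w_\rho)=\sigma^2 a/\sum_x(a-\rho(x))^+$, and the denominator grows with $a$; your plan to pass through $g_\rho(w_\rho)\leq a^2$ (which is a correct inequality, by the way, since $\sigma^2=M_\rho(a)\leq a\sum_x(a-\rho(x))^+$) throws away exactly the cancellation that saves this case.

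The paper sidesteps this by never analyzing the bandwidth of the actual $\rho$. Since $w_\rho$ minimizes $g_\rho$, one has $g_\rho(w_\rho)\leq g_\rho(\overline{w})$ for the weights $\overline{w}$ that are optimal for the explicit majorant $\overline{g}(w)$ obtained by replacing $\rho(x)$ with $L\|x-x_0\|_\infty^\beta$ (plus the inequality $g_\rho(w)\leq 2\overline{g}(w)+2\delta_n^2$ to absorb $\delta_n$, which is the one part of your plan that is fine). The bandwidth computation --- the Riemann-sum counting you describe, the role of the hypotheses on $h$ in ensuring the window contains the effective support $\overline{h}=(a/L)^{1/\beta}$, and the emergence of the constant $c_0$ --- is then carried out for the known function $L\|x-x_0\|_\infty^\beta$, where both upper and lower bounds on each term are available and your counting argument becomes legitimate. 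So the fix is not a technicality about grid discreteness, as you anticipated, but a reorganization: do the bandwidth/counting analysis for the majorant, and transfer the bound to $g_\rho(w_\rho)$ by the minimizing property of $w_\rho$ rather than by estimating its own bandwidth.
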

\par
 The proof will be given in Section  \ref{Sec: proof of Th oracle 001}.
 \par
Recall that the bandwidth $h$ of order $n^{-\frac{1}{2+2\beta }}$ is
required to have the optimal minimax rate of convergence $O\left( n^{-\frac{2\beta }{%
2+2\beta }}\right) $ of the mean squared error for estimating the function $f
$ of global H\"{o}lder smoothness $\beta $  (cf. e.g.
 \cite{FanGijbels1996}). To better understand the adaptivity property of
the oracle $f_{h}^{\ast }(x_{0}),$ assume that the image $f$ at $x_{0}$ has
H\"{o}lder smoothness $\beta $ (see \citep{Wh}) and that $h\geq c_{0}n^{-\alpha }$ with $0\leq
\alpha <\frac{1}{2\beta +2},$ which means that the radius $h>0$ of the search
window $U_{x_{0},h}$ has been chosen larger than the "standard" $n^{-\frac{1%
}{2\beta +2}}.$ Then, by Theorem \ref{Th oracle 001}, the rate of convergence
of the oracle is still of order $n^{-\frac{\beta }{2+2\beta }},$ contrary to
the global case mentioned above. If we choose a sufficiently large
search window $U_{x_{0},h},$ then the oracle $f_{h}^{\ast }(x_{0})$ will
have a rate of convergence which depends only on the unknown maximal local
smoothness $\beta $ of the image $f.$ In particular, if $\beta $ is very
large, then the rate will be close to $n^{-1/2},$ which ensures good
estimation of the flat regions in cases where the regions are indeed flat.
More generally, since Theorem \ref{Th oracle 001} is valid for arbitrary $%
\beta ,$ it applies for the maximal local H\"{o}lder smoothness $\beta _{x_{0}}$
at $x_{0},$ therefore the oracle $f_{h}^{\ast }(x_{0})$ will exhibit the
best rate of convergence of order $n^{-\frac{2\beta _{x_{0}}}{2+2\beta
_{x_{0}}}}$ at $x_{0}.$ In other words, the procedure adapts to the best
rate of convergence at each point $x_{0}$ of the image.

We justify by simulation results that the difference
between the oracle $f_{h}^{\ast }$ computed with $\rho =\rho
_{f,x_{0}}=\left\vert f\left( x\right) -f\left( x_{0}\right) \right\vert ,$
and the true image $f$, is extremely small (see Table\ \ref{Table oracle}).
This shows that, at least from the practical point of view, it is justified
to optimize the upper bound $g_{\rho _{f,x_{0}}}(w)$ instead of optimizing
the mean square error $\mathbb{E}\left( f^{\ast }_h(x_{0})-f(x_{0})\right) ^{2}$ itself.

The estimate $f_{h}^*$ with the choice $\rho \left(
x\right) =\rho_{f,x_0}\left( x\right) $ will be called oracle filter. In
particular for the oracle filter $f_{h}^{\ast },$ under the conditions of
Theorem \ref{Th oracle 001}, we have%
\begin{equation*}
\mathbb{E}\left( f_{h}^{\ast }(x_{0})-f(x_{0})\right) ^{2}\leq g_{\rho}\left(
w_{\rho}\right) \leq cn^{-\frac{2\beta }{2+2\beta }}.
\end{equation*}

Now, we turn to the study of the convergence of the Optimal Weights Filter. Due to the difficulty in dealing with the dependence of the weights we shall consider a slightly modified version of the proposed algorithm:  we divide  the set of pixels  into two independent parts,  so that the weights are constructed from the one part,  and  the estimation of the target function is a weighted mean  along  the other part. More precisely,  assume  that $x_{0}\in \mathbf{I},$ $h>0$ and $\eta >0.$ To prove the
convergence we split the set of pixels into two parts $\mathbf{I}=\mathbf{I}%
_{x_{0}}^{\prime }\cup \mathbf{I}_{x_{0}}^{\prime \prime },$ where
\begin{equation}
\mathbf{I}_{x_{0}}^{\prime }=\left\{ x_{0}+\left( \frac{i}{N},\frac{j}{N}%
\right) \in \mathbf{I}:i+j\text{ is even }\right\}
\label{set I1}
\end{equation}%
is the set of pixels with an even sum of coordinates $i+j$ and $\mathbf{I}%
_{x_{0}}^{\prime \prime }=\mathbf{I}\diagdown \mathbf{I}_{x_{0}}^{\prime }.$
Denote $\mathbf{U}_{x_{0},h}^{\prime }=\mathbf{U}_{x_{0},h}\cap \mathbf{I}%
_{x_{0}}^{\prime }$ and $\mathbf{V}_{x,\eta }^{\prime \prime }=\mathbf{V}%
_{x,\eta }\cap \mathbf{I}_{x_{0}}^{\prime \prime }.$ Consider the distance
between the data patches $\mathbf{Y}_{x,\eta }^{\prime \prime }=\left(
Y\left( y\right) \right) _{y\in \mathbf{V}_{x,\eta }^{\prime \prime }}$ and $%
\mathbf{Y}_{x_{0},\eta }^{\prime \prime }=\left( Y\left( y\right) \right)
_{y\in \mathbf{V}_{x_{0},\eta }^{\prime \prime }}$ defined by
\begin{equation*}
d\left( \mathbf{Y}_{x,\eta }^{\prime \prime },\mathbf{Y}_{x_{0},\eta
}^{\prime \prime }\right) =\frac{1}{\sqrt{m^{\prime \prime }}}\left\Vert
\mathbf{Y}_{x,\eta }^{\prime \prime }-\mathbf{Y}_{x_{0},\eta }^{\prime
\prime }\right\Vert _{2},
\end{equation*}%
where $m^{\prime \prime }=\mathrm{card\ }\mathbf{V}_{x,\eta }^{\prime \prime
}.$ An estimate of the function $\rho _{f,x_{0}}$ is given by
\begin{equation}
\rho _{f,x_{0}}\left( x\right) \approx \widehat{\rho }_{x_{0}}^{\prime
\prime }\left( x\right) =\left( d\left( \mathbf{Y}_{x,\eta }^{\prime \prime
},\mathbf{Y}_{x_{0},\eta }^{\prime \prime }\right) -\sqrt{2}\sigma \right)
^{+},  \label{rho-estim prim}
\end{equation}%
see (\ref{rho-estim}).
Define the filter $\widehat{f}_{h,\eta }^{\prime }$ by
\begin{equation}
\widehat{f}_{h,\eta }^{\prime }(x_{0})=\sum_{x\in \mathbf{U}%
_{x_{0},h}^{\prime }}\widehat{w}^{\prime \prime }(x)Y(x),
\label{OWFilter prim}
\end{equation}%
where
\begin{equation}
\widehat{w}^{\prime \prime }=\arg \min_{w}\left( \sum_{x\in \mathbf{U}%
_{x_{0},h}^{\prime }}w(x)\widehat{\rho }^{\prime \prime }_{x_0}(x)\right)
^{2}+\sigma ^{2}\sum_{x\in \mathbf{U}_{x_{0},h}^{\prime }}w^{2}(x).
\label{s3ww}
\end{equation}

The next theorem gives a rate of convergence of the Optimal Weights Filter
if the parameters $h>0$ and $\eta >0$ are chosen properly
according to the local smoothness $\beta .$

\begin{theorem}
\label{Th adapt 001} Assume that $h=c_{1}n^{-\frac{1}{%
2\beta +2}}$ with $c_{1}>c_{0}=\left(\frac{\sigma ^{2}\left( \beta
+2\right) \left( 2\beta +2\right) }{8L^{2}\beta }\right) ^{\frac{1}{2\beta +2%
}}$, and that $\eta =c_{2}n^{-\frac{1}{2\beta +2}}.$ Suppose that  function $f$ satisfies the local H\"{o}%
lder condition (\ref{Local Holder cond}). Then
\begin{equation}
\mathbb{E}\left( \widehat{f}_{h,\eta }^{\prime }(x_{0})-f(x_{0})\right)
^{2}=O\left( n^{-\frac{2\beta }{2\beta +2}}\ln n\right) .  \label{s3ef}
\end{equation}
\end{theorem}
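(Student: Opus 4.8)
The plan is to exploit the statistical independence that is built into the split $\mathbf{I}=\mathbf{I}_{x_{0}}^{\prime }\cup \mathbf{I}_{x_{0}}^{\prime \prime }$. First I would condition on the $\sigma$-field $\mathcal{F}^{\prime \prime }$ generated by the observations $\{Y(x):x\in \mathbf{I}_{x_{0}}^{\prime \prime }\}$. Since $\widehat{\rho }_{x_{0}}^{\prime \prime }$, defined in (\ref{rho-estim prim}), and hence the weights $\widehat{w}^{\prime \prime }$ from (\ref{s3ww}) are $\mathcal{F}^{\prime \prime }$-measurable, whereas the averaged observations $\{Y(x):x\in \mathbf{U}_{x_{0},h}^{\prime }\}$ are independent of $\mathcal{F}^{\prime \prime }$, the conditional bias--variance decomposition gives
\begin{equation*}
\mathbb{E}\left[ \left( \widehat{f}_{h,\eta }^{\prime }(x_{0})-f(x_{0})\right) ^{2}\mid \mathcal{F}^{\prime \prime }\right] =\Big( \sum_{x\in \mathbf{U}_{x_{0},h}^{\prime }}\widehat{w}^{\prime \prime }(x)\left( f(x)-f(x_{0})\right) \Big) ^{2}+\sigma ^{2}\sum_{x\in \mathbf{U}_{x_{0},h}^{\prime }}\widehat{w}^{\prime \prime }(x)^{2}\leq g_{\rho _{f,x_{0}}}(\widehat{w}^{\prime \prime }),
\end{equation*}
so that, given $\mathcal{F}^{\prime \prime }$, it remains to control the deterministic quantity $g_{\rho _{f,x_{0}}}(\widehat{w}^{\prime \prime })$.

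The core estimate is the uniform closeness of $\widehat{\rho }_{x_{0}}^{\prime \prime }$ to $\rho _{f,x_{0}}$. Writing $d^{2}(\mathbf{Y}_{x,\eta }^{\prime \prime },\mathbf{Y}_{x_{0},\eta }^{\prime \prime })=\frac{1}{m^{\prime \prime }}\sum_{z}(g_{z}+\xi _{z})^{2}$ with $g_{z}=f(x+z)-f(x_{0}+z)$ and $\xi _{z}=\varepsilon (x+z)-\varepsilon (x_{0}+z)\sim N(0,2\sigma ^{2})$, one has $\mathbb{E}\,d^{2}=2\sigma ^{2}+\frac{1}{m^{\prime \prime }}\sum_{z}g_{z}^{2}$. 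The key point is that the local H\"{o}lder condition (\ref{Local Holder cond}) forces $|g_{z}|\leq Lh^{\beta }$ and $\rho _{f,x_{0}}(x)\leq Lh^{\beta }$ for all $x\in \mathbf{U}_{x_{0},h}$ and $\Vert z\Vert _{\infty }\leq \eta $, so that all the relevant values are small, of order $O(n^{-\beta /(2\beta +2)})$. A chi-square / sub-Gaussian concentration bound for $\frac{1}{m^{\prime \prime }}\sum_{z}(\xi _{z}^{2}-2\sigma ^{2})$ and for the cross term $\frac{2}{m^{\prime \prime }}\sum_{z}g_{z}\xi _{z}$, with $m^{\prime \prime }\asymp n\eta ^{2}\asymp n^{2\beta /(2\beta +2)}$, combined with a union bound over the $\mathrm{card}\,\mathbf{U}_{x_{0},h}^{\prime }\lesssim n$ pixels, yields an event $\Omega _{n}$ with $\mathbb{P}(\Omega _{n}^{c})\leq n^{-c}$ on which $\max_{x}|d^{2}(\mathbf{Y}_{x,\eta }^{\prime \prime },\mathbf{Y}_{x_{0},\eta }^{\prime \prime })-2\sigma ^{2}|\leq C\sqrt{\ln n/m^{\prime \prime }}$. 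Using $\widehat{\rho }_{x_{0}}^{\prime \prime }(x)\leq |d-\sqrt{2}\sigma |=|d^{2}-2\sigma ^{2}|/(d+\sqrt{2}\sigma )\leq |d^{2}-2\sigma ^{2}|/(\sqrt{2}\sigma )$ and $\rho _{f,x_{0}}(x)=O(n^{-\beta /(2\beta +2)})$, I would conclude that on $\Omega _{n}$
\begin{equation*}
\delta _{n}^{\ast }:=\max_{x\in \mathbf{U}_{x_{0},h}^{\prime }}\big| \widehat{\rho }_{x_{0}}^{\prime \prime }(x)-\rho _{f,x_{0}}(x)\big| =O\big( \sqrt{\ln n}\,n^{-\beta /(2\beta +2)}\big) .
\end{equation*}

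On $\Omega _{n}$ we have $\rho _{f,x_{0}}(x)\leq \widehat{\rho }_{x_{0}}^{\prime \prime }(x)+\delta _{n}^{\ast }$, whence $\sum \widehat{w}^{\prime \prime }\rho _{f,x_{0}}\leq \sum \widehat{w}^{\prime \prime }\widehat{\rho }_{x_{0}}^{\prime \prime }+\delta _{n}^{\ast }$ and therefore $g_{\rho _{f,x_{0}}}(\widehat{w}^{\prime \prime })\leq 2\,g_{\widehat{\rho }_{x_{0}}^{\prime \prime }}(\widehat{w}^{\prime \prime })+2(\delta _{n}^{\ast })^{2}$. Since $\widehat{w}^{\prime \prime }$ minimizes $g_{\widehat{\rho }_{x_{0}}^{\prime \prime }}$ subject to (\ref{s2wx}), I would bound $g_{\widehat{\rho }_{x_{0}}^{\prime \prime }}(\widehat{w}^{\prime \prime })$ by its value at the uniform weights $w\equiv 1/\mathrm{card}\,\mathbf{U}_{x_{0},h}^{\prime }$: the bias is then at most $\max_{x}\widehat{\rho }_{x_{0}}^{\prime \prime }(x)\leq Lh^{\beta }+\delta _{n}^{\ast }$ and the variance is $\sigma ^{2}/\mathrm{card}\,\mathbf{U}_{x_{0},h}^{\prime }=O(n^{-2\beta /(2\beta +2)})$. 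With $h=c_{1}n^{-1/(2\beta +2)}$, both $(Lh^{\beta }+\delta _{n}^{\ast })^{2}$ and $(\delta _{n}^{\ast })^{2}$ are $O(\ln n\,n^{-2\beta /(2\beta +2)})$, so that $g_{\rho _{f,x_{0}}}(\widehat{w}^{\prime \prime })=O(\ln n\,n^{-2\beta /(2\beta +2)})$ on $\Omega _{n}$.

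Finally I would take expectations and split over $\Omega _{n}$ and $\Omega _{n}^{c}$. Because $\Omega _{n}\in \mathcal{F}^{\prime \prime }$, the first contribution is $\mathbb{E}[\mathbbm{1}_{\Omega _{n}}g_{\rho _{f,x_{0}}}(\widehat{w}^{\prime \prime })]=O(\ln n\,n^{-2\beta /(2\beta +2)})$. On $\Omega _{n}^{c}$ I would use $|\widehat{f}_{h,\eta }^{\prime }(x_{0})-f(x_{0})|\leq \max_{x\in \mathbf{U}_{x_{0},h}^{\prime }}|Y(x)-f(x_{0})|\leq Lh^{\beta }+\max_{x}|\varepsilon (x)|$, so that $\mathbb{E}(\widehat{f}_{h,\eta }^{\prime }(x_{0})-f(x_{0}))^{4}=O((\ln n)^{2})$ (fourth moment of the maximum of $\lesssim n$ Gaussians); Cauchy--Schwarz then gives $\mathbb{E}[(\widehat{f}_{h,\eta }^{\prime }(x_{0})-f(x_{0}))^{2}\mathbbm{1}_{\Omega _{n}^{c}}]\leq \sqrt{\mathbb{E}(\widehat{f}_{h,\eta }^{\prime }(x_{0})-f(x_{0}))^{4}}\sqrt{\mathbb{P}(\Omega _{n}^{c})}=O(\ln n\,n^{-c/2})$, negligible for $c$ large. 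Adding the two pieces yields (\ref{s3ef}). I expect the main obstacle to be the second paragraph: establishing the uniform-in-$x$ concentration of the patch distances with the correct dependence on $m^{\prime \prime }$, and verifying that it is the stochastic fluctuation, rather than the H\"{o}lder bias, that sets the size of $\delta _{n}^{\ast }$. This is precisely where the extra $\ln n$ in (\ref{s3ef}) is produced, namely by the union bound over the search window.
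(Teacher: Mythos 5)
Your proposal is correct, and while it shares the paper's overall architecture (condition on $\mathcal{F}''$, use the conditional bias--variance decomposition made possible by the even/odd split, concentrate the patch distances uniformly over the search window, exploit the minimizing property of $\widehat{w}''$, and dispose of the exceptional event separately), it diverges at the two places where the real work is done, and in both cases your route is arguably cleaner. First, to bound $g_{\widehat{\rho}''_{x_0}}(\widehat{w}'')$ you compare the minimizer against the \emph{uniform} weights on $\mathbf{U}'_{x_0,h}$, for which the variance is $\sigma^2/M'=O(n^{-2\beta/(2\beta+2)})$ and the bias is controlled by $\max_x\widehat{\rho}''_{x_0}(x)$; the paper instead compares against the oracle weights $w_1^*=\arg\min g_1$ and must then invoke Lemma \ref{lm5_2} (the asymptotics of the oracle bandwidth $a$ and the bound $g_1(w_1^*)=O(n^{-2\beta/(2\beta+2)})$). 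Your shortcut works precisely because $h$ is already at the optimal order $n^{-1/(2\beta+2)}$, so uniform averaging over the window attains the rate up to the $\ln n$ factor coming from the union bound; a pleasant by-product is that your argument does not use the condition $c_1>c_0$ at all, which in the paper is only needed to make Lemma \ref{lm5_2} applicable. (It would, however, not survive the passage to Theorem \ref{Th adapt 002}, where $h$ is too large for uniform weights and the comparison with $w_1^*$ becomes essential.) Second, on the bad event you use Cauchy--Schwarz together with a fourth-moment bound on $\max_x|\varepsilon(x)|$; the paper's version is simpler, observing that the conditional mean square error is deterministically bounded by $g_1(\widehat w'')\le L^2h^{2\beta}+\sigma^2\le L^2+\sigma^2$, so that the bad event contributes only $O(n^{-2})$ times a constant --- you may want to adopt that observation, as it removes the need for any moment computation. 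One caveat that applies equally to your concentration step and to the paper's Lemma \ref{Lemma s}: the summands $\zeta(y)=\varepsilon(y)-\varepsilon(T_{x_0,x}y)$ are not mutually independent across $y$ when the translated patch overlaps the original, so the product form of the moment generating function (or the sub-Gaussian bound you invoke) needs the standard fix of splitting the patch into a bounded number of independent subfamilies; this affects only constants.
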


For the proof of this theorem see Section \ref{Sec: proof of Th adapt 001}.

 Theorem \ref{Th adapt 001} states that with  the proper
choices of the parameters $h$ and $\eta $, the mean square error of the
estimator $\widehat{f}^{\prime }_{h,\eta}(x_{0})$ converges nearly at the rate $O(n^{-%
\frac{2\beta }{2\beta +2}})$ which is the usual optimal rate of convergence
for a given H\"{o}lder smoothness $\beta >0$ (cf. e.g.
\citep{FanGijbels1996}).

Simulation results show that the adaptive bandwidth  $\widehat{a}$
provided by our algorithm depends essentially on the local
properties of the image and does not depend much on the radius $h$ of the
search window. These simulations, together with Theorem \ref{Th oracle
001}, suggest that the Optimal Weights Filter (\ref{OWFilter}) can also be applied with larger $h,$ as is the case of the "oracle" filter $%
f_{h}^{\ast }.$ The following theorem deals with the case where $h$ is large.

\begin{theorem}
\label{Th adapt 002} Assume that $h=c_{1}n^{-\alpha }$
with $c_{1}>0,$ and $0<\alpha \leq \frac{1}{2\beta +2}$ and that $\eta
=c_{2}n^{-\frac{1}{2\beta +2}}.$ Suppose that the function $f$ satisfies the local H\"{o}%
lder condition (\ref{Local Holder cond}). Then
\begin{equation*}
\mathbb{E}\left( \widehat{f}_{h,\eta }^{\prime }(x_{0})-f(x_{0})\right)
^{2}=O\left( n^{-\frac{\beta }{2\beta +2}}\ln n\right) .
\end{equation*}
\end{theorem}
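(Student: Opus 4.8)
The plan is to reduce everything to the oracle bound of Theorem \ref{Th oracle 001} by exploiting the independence created by the checkerboard splitting $\mathbf{I}=\mathbf{I}_{x_0}^{\prime}\cup\mathbf{I}_{x_0}^{\prime\prime}$. The estimate $\widehat{\rho}_{x_0}^{\prime\prime}$ in (\ref{rho-estim prim}) is built from the patches $\mathbf{V}_{x,\eta}^{\prime\prime},\mathbf{V}_{x_0,\eta}^{\prime\prime}\subset\mathbf{I}_{x_0}^{\prime\prime}$, so the weights $\widehat{w}^{\prime\prime}$ of (\ref{s3ww}) are measurable with respect to $\mathcal{F}^{\prime\prime}=\sigma\big(Y(y):y\in\mathbf{I}_{x_0}^{\prime\prime}\big)$, whereas the averaging in (\ref{OWFilter prim}) runs over $x\in\mathbf{U}_{x_0,h}^{\prime}\subset\mathbf{I}_{x_0}^{\prime}$. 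Since the noise $\varepsilon$ is i.i.d.\ and $\mathbf{I}_{x_0}^{\prime}\cap\mathbf{I}_{x_0}^{\prime\prime}=\varnothing$, conditioning on $\mathcal{F}^{\prime\prime}$ freezes the weights and leaves the averaged variables independent with means $f(x)$ and variance $\sigma^2$, giving the exact bias--variance identity
\begin{equation*}
\mathbb{E}\big[(\widehat{f}_{h,\eta}^{\prime}(x_0)-f(x_0))^2\mid\mathcal{F}^{\prime\prime}\big]
=\Big(\sum_{x\in\mathbf{U}_{x_0,h}^{\prime}}\widehat{w}^{\prime\prime}(x)(f(x)-f(x_0))\Big)^2
+\sigma^2\sum_{x\in\mathbf{U}_{x_0,h}^{\prime}}\widehat{w}^{\prime\prime}(x)^2
\le g_{\rho_{f,x_0}}(\widehat{w}^{\prime\prime}),
\end{equation*}
with $g_\rho$ as in (\ref{def gw}). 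It therefore suffices to bound $\mathbb{E}[g_{\rho_{f,x_0}}(\widehat{w}^{\prime\prime})]$.

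The second step compares $\widehat{w}^{\prime\prime}$ with the oracle weights $w^{\ast}=w_{\rho_{f,x_0}}$ of (\ref{eq th weights 001}). Writing $B_\rho(w)=\sum_x w(x)\rho(x)$, the quantity $g_\rho(w)^{1/2}=\big(B_\rho(w)^2+\sigma^2\sum_x w(x)^2\big)^{1/2}$ is a Euclidean norm, so $|g_\rho(w)^{1/2}-g_{\rho'}(w)^{1/2}|\le|B_\rho(w)-B_{\rho'}(w)|\le\sum_x w(x)|\rho(x)-\rho'(x)|$. Using this twice, with the optimality of $\widehat{w}^{\prime\prime}$ for $g_{\widehat{\rho}_{x_0}^{\prime\prime}}$ (Theorem \ref{Th weights 001}) in between, I would obtain
\begin{equation*}
g_{\rho_{f,x_0}}(\widehat{w}^{\prime\prime})^{1/2}
\le g_{\rho_{f,x_0}}(w^{\ast})^{1/2}+E_1+E_2 ,
\end{equation*}
where $E_1=\sum_x\widehat{w}^{\prime\prime}(x)\,|\rho_{f,x_0}(x)-\widehat{\rho}_{x_0}^{\prime\prime}(x)|$ and $E_2=\sum_x w^{\ast}(x)\,|\rho_{f,x_0}(x)-\widehat{\rho}_{x_0}^{\prime\prime}(x)|$ are weighted estimation errors taken \emph{only over the supports of the respective weights}. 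The oracle term is handled by Theorem \ref{Th oracle 001} in its large-$h$ form (applied to the sublattice $\mathbf{U}_{x_0,h}^{\prime}$, which is again a regular grid): $g_{\rho_{f,x_0}}(w^{\ast})\le\mathbb{E}(f_h^{\ast}(x_0)-f(x_0))^2\le c\,n^{-2\beta/(2+2\beta)}$. After squaring and taking expectations, the whole problem reduces to bounding $\mathbb{E}[(E_1+E_2)^2]$.

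Controlling $E_1$ and $E_2$ is the heart of the argument and the main obstacle. I would first establish a concentration bound for the patch statistic: since $\mathbb{E}\,d^2(\mathbf{Y}_{x,\eta}^{\prime\prime},\mathbf{Y}_{x_0,\eta}^{\prime\prime})=\frac1{m^{\prime\prime}}\sum_z(f(x+z)-f(x_0+z))^2+2\sigma^2$, the local H\"older condition (\ref{Local Holder cond}) converts it into $\rho_{f,x_0}(x)^2+2\sigma^2$ up to a deterministic patch error of order $\eta^\beta$, while $d^2$ is a normalized non-central $\chi^2$ sum concentrating around its mean with sub-exponential fluctuations of order $(m^{\prime\prime})^{-1/2}$; a union bound over the $M=nh^2\lesssim n$ candidate pixels produces the $\ln n$ factor. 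Here the two regimes diverge. In the standard-$h$ setting of Theorem \ref{Th adapt 001} the window is so small that every pixel has $\rho_{f,x_0}(x)=O(\eta^\beta)$, so a single uniform bound controls both errors at the oracle level. With the large window of Theorem \ref{Th adapt 002}, however, $\rho_{f,x_0}(x)$ ranges up to order one, and there the map $\rho\mapsto\widehat{\rho}_{x_0}^{\prime\prime}$ is markedly nonlinear (indeed $\widehat{\rho}_{x_0}^{\prime\prime}(x)\approx\rho_{f,x_0}(x)^2/(2\sqrt2\sigma)$ for small $\rho$ and carries an $O(\sigma)$ downward bias for large $\rho$), so a crude uniform bound $\max_x|\rho_{f,x_0}(x)-\widehat{\rho}_{x_0}^{\prime\prime}(x)|$ is of order one and useless. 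The real work is to show that far pixels are excluded from the support of $\widehat{w}^{\prime\prime}$ (i.e.\ $\widehat{\rho}_{x_0}^{\prime\prime}(x)\ge\widehat a$, with $\widehat a$ the data bandwidth solving $M_{\widehat{\rho}_{x_0}^{\prime\prime}}(\widehat a)=\sigma^2$) except on events of small probability, and to bound the residual contribution of those rare inclusion events using that an included pixel can carry at most the weight $1/\sum_y K_{\text{tr}}(\widehat{\rho}_{x_0}^{\prime\prime}(y)/\widehat a)$.

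Assembling the pieces and taking expectation over $\mathcal{F}^{\prime\prime}$ gives
\begin{equation*}
\mathbb{E}(\widehat{f}_{h,\eta}^{\prime}(x_0)-f(x_0))^2
\le 2\,g_{\rho_{f,x_0}}(w^{\ast})+2\,\mathbb{E}[(E_1+E_2)^2].
\end{equation*}
The near-zone errors are of order $n^{-\beta/(2\beta+2)}\sqrt{\ln n}$ and contribute at the oracle level, but the estimation gap created by the nonlinear, systematically biased map $\rho\mapsto\widehat{\rho}_{x_0}^{\prime\prime}$ over a window that now contains pixels of all dissimilarities $\rho_{f,x_0}(x)$, amplified by the large pixel count $M\sim nh^2$, is what dominates and degrades the rate from the oracle $n^{-2\beta/(2+2\beta)}$ to the stated $O(n^{-\beta/(2\beta+2)}\ln n)$. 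I expect the genuinely delicate point to be quantifying how the far-zone exclusion probabilities combine with the weight bound and with the mis-sizing of $\widehat a$ relative to the oracle bandwidth $a$, so that their aggregate contribution is of order $n^{-\beta/(2\beta+2)}$ and no larger.
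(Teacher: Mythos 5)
Your opening reduction is exactly the paper's: conditioning on $\mathcal{F}''=\sigma(Y(y):y\in\mathbf{I}_{x_0}'')$ freezes $\widehat{w}''$ and gives the bias--variance identity, so everything reduces to bounding $g_{\rho_{f,x_0}}(\widehat{w}'')$ on a high-probability event (plus the crude bound $g_1(w)\le L^2+\sigma^2$ to absorb the exceptional event). Your comparison step via the reverse triangle inequality for $w\mapsto g_\rho(w)^{1/2}$ is a genuinely different and legitimate skeleton, and your diagnosis of the obstacle is accurate: $\widehat{\rho}_{x_0}''(x)$ behaves like $\rho_{f,x_0}(x)^2/(2\sqrt{2}\sigma)$ for small $\rho$ and is biased downward by $O(\sigma)$ for large $\rho$, so a uniform bound on $|\rho_{f,x_0}-\widehat{\rho}_{x_0}''|$ over the large window is useless. (Minor slip: you write $g_{\rho_{f,x_0}}(w^{\ast})\le\mathbb{E}(f_h^{\ast}(x_0)-f(x_0))^2$; the inequality goes the other way, but the bound $g_{\rho_{f,x_0}}(w^{\ast})\le c\,n^{-2\beta/(2\beta+2)}$ you need is exactly what Lemma \ref{lm5_2} provides.)

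The genuine gap is that the central estimate is never carried out: the entire control of $E_1=\sum_x\widehat{w}''(x)\,|\rho_{f,x_0}(x)-\widehat{\rho}_{x_0}''(x)|$ is deferred to ``the real work'' of showing that far pixels are excluded from the support of $\widehat{w}''$, together with a quantification of the random bandwidth $\widehat{a}$ solving $M_{\widehat{\rho}_{x_0}''}(\widehat{a})=\sigma^2$. That is precisely the hard part, and your route makes it harder than necessary: on the support $\{\widehat{\rho}_{x_0}''<\widehat{a}\}$ one only gets $\rho_{f,x_0}(x)\lesssim\sqrt{\widehat{a}}+\cdots$, so $E_1^2\lesssim\widehat{a}$, and you would need $\widehat{a}=O(n^{-\beta/(2\beta+2)}\ln n)$ for an \emph{arbitrary} H\"older image --- an analysis of the empirical bandwidth equation (in the spirit of Lemma \ref{lm5_2} but for the distorted profile $\rho^2$) that is nowhere sketched and is delicate when $\sum_x\widehat{\rho}_{x_0}''(x)$ is small. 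The paper sidesteps all of this. Lemma \ref{lemma-ad2-002} establishes a \emph{two-sided} comparison $\widehat{\rho}_{x_0}''\asymp\rho_{f,x_0}^2$ up to an additive error $O(n^{-\beta/(2\beta+2)}\sqrt{\ln n})$, uniformly over the window with probability $1-O(n^{-2})$; then the lower half gives $\sum_x\widehat{w}''(x)\rho_{f,x_0}(x)\lesssim\sum_x\widehat{w}''(x)\sqrt{\widehat{\rho}_{x_0}''(x)}+\cdots$, and the elementary truncation
\begin{equation*}
\sum_{x}\widehat{w}''(x)\sqrt{\widehat{\rho}_{x_0}''(x)}\;\le\;n^{-\frac{1}{2}\frac{\beta}{2\beta+2}}+n^{\frac{1}{2}\frac{\beta}{2\beta+2}}\sum_{x}\widehat{w}''(x)\,\widehat{\rho}_{x_0}''(x)
\end{equation*}
converts the square root into the linear functional appearing in the empirical objective, at the cost of an explicit factor $n^{\beta/(2\beta+2)}$. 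Optimality of $\widehat{w}''$ against $w_1^{\ast}$, the upper half of the two-sided comparison, and $g_1(w_1^{\ast})=O(n^{-2\beta/(2\beta+2)})$ then yield $c\,n^{\beta/(2\beta+2)}\cdot n^{-2\beta/(2\beta+2)}=c\,n^{-\beta/(2\beta+2)}$ --- the lost factor in the truncation is exactly the degradation in the rate, and no information about $\widehat{a}$ or the support of $\widehat{w}''$ is ever needed. Without either this device or a completed analysis of $\widehat{a}$, your argument does not close.
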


For the proof of this theorem see Section \ref{Sec: proof of Th adapt 002}.
Note that in this case the obtained rate of convergence is not the usual
optimal one, in contrast to Theorems \ref{Th oracle 001} and \ref{Th adapt
001}, but we believe that this is the best rate that can be obtained for the
proposed filter.

\section{\label{Sec:simulations}Numerical performance of the Optimal Weights
Filter}

The performance of the Optimal Weights Filter $\widehat{f}_{h,\eta }(x_{0})$
is measured by the usual Peak Signal-to-Noise Ratio (PSNR) in decibels (db)
defined as%
\begin{equation*}
PSNR=10\log _{10}\frac{255^{2}}{MSE},\,\,\,MSE=\frac{1}{card\, \mathbf{I}}%
\sum\limits_{x\in \mathbf{I}}(f(x)-\widehat{f}_{h,\eta}(x))^{2},
\end{equation*}%
where $f$ is the original image, and $\widehat{f}$ the estimated one.

In the simulations, we sometimes shall use the smoothed version of the
estimate of brightness variation $d_{K}\left( \mathbf{Y}_{x,\eta },\mathbf{Y}%
_{x_{0},\eta }\right) $ instead of the non smoothed one $d\left( \mathbf{Y}%
_{x,\eta },\mathbf{Y}_{x_{0},\eta }\right) .$ It should be noted that for
the smoothed versions of the estimated brightness variation we can establish
similar convergence results. The smoothed estimate $d_{K}\left( \mathbf{Y}%
_{x,\eta },\mathbf{Y}_{x_{0},\eta }\right) $ is defined by%
\begin{equation*}
d_{K}\left( \mathbf{Y}_{x,\eta },\mathbf{Y}_{x_{0},\eta }\right) =\frac{%
\left\Vert K\left( y\right) \cdot \left( \mathbf{Y}_{x,\eta }-\mathbf{Y}%
_{x_{0},\eta }\right) \right\Vert _{2}}{\sqrt{\sum_{y^{\prime }\in \mathbf{V}%
_{x_{0},\eta }}K(y^{\prime })}},
\end{equation*}%
where $K$ are some weights defined on $\mathbf{V}_{x_{0},\eta }.$ The
corresponding estimate of brightness variation $\rho _{f,x_{0}}\left(
x\right) $ is given by%
\begin{equation}
\widehat{\rho }_{K,x_{0}}(x)= \left( {d_{K}\left( \mathbf{Y}%
_{x,\eta },\mathbf{Y}_{x_{0},\eta }\right) }-\sqrt{2}\sigma \right)^+ .
\label{empir simil func}
\end{equation}%
With the rectangular kernel%
\begin{equation}
K_{r}\left( y\right) =\left\{
\begin{array}{ll}
1, & y\in \mathbf{V}_{x_{0},\eta }, \\
0, & \text{otherwise,}%
\end{array}%
\right.   \label{rect kernel}
\end{equation}%
we obtain exactly the distance $d\left( \mathbf{Y}_{x,\eta },\mathbf{Y}%
_{x_{0},\eta }\right) $ and the filter described in Section \ref{Sec:constr}%
. Other smoothing kernels $K$ used in the simulations are the Gaussian
kernel
\begin{equation}
K_{g}(y)=\exp \left( -\frac{N^{2}\Vert y-x_{0}\Vert _{2}^{2}}{2h_g}\right) ,
\label{s4kg}
\end{equation}%
where $h_g$ is the bandwidth parameter and the following kernel%
\begin{equation}
K_{0}\left( y\right) =\left\{
\begin{array}{ll}
\sum_{k=N\left\Vert y-x_{0}\right\Vert _{\infty }}^{p}\frac{1}{(2k+1)^{2}}
&  \text{if } y\neq x_{0}, \\
\sum_{k=1}^{p}\frac{1}{(2k+1)^{2}} &  \text{if } y=x_{0},%
\end{array}%
\right.   \label{s4ky}
\end{equation}%
with the width of the similarity window $m=(2p+1)^{2}.$ The shape of these
two kernels are displayed in Figure\ \ref{Fig Kernels 1}.
\par
To avoid the undesirable border effects in our simulations, we mirror the
image outside the image limits, that is we extend the image outside the
image limits symmetrically with respect to the border. At the corners, the
image is extended symmetrically with respect to the corner pixels.

\begin{table}[tbp]
\begin{center}
\renewcommand{\arraystretch}{0.6} \vskip3mm {\fontsize{8pt}{\baselineskip}%
\selectfont
\begin{tabular}{c|ccccc}
\hline
Images & Lena & Barbara & Boat & House & Peppers \\
Sizes & $512 \times 512$ & $512 \times 512$ & $512 \times 512$ & $256 \times
256$ & $256 \times 256$ \\ \hline\hline
$\sigma /PSNR$ & 10/28.12db & 10/28.12db & 10/28.12db & 10/28.11db &
10/28.11db \\ \hline
$11 \times 11$ & 41.20db & 40.06db & 40.23db & 41.50db & 40.36db \\
$13 \times 13$ & 41.92db & 40.82db & 40.99db & 42.24db & 41.01db \\
$15 \times 15$ & 42.54db & 41.48db & 41.62db & 42.85db & 41.53db \\
$17 \times 17$ & 43.07db & 42.05db & 42.79db & 43.38db & 41.99db \\
\hline\hline
$\sigma /PSNR$ & 20/22.11db & 20/22.11db & 20/22.11db & 20/28.12db &
20/28.12db \\ \hline
$11 \times 11$ & 37.17db & 35.92db & 36.23db & 37.18db & 36.25db \\
$13 \times 13$ & 37.91db & 36.70db & 37.01db & 37.97db & 36.85db \\
$15 \times 15$ & 38.57db & 37.37db & 37.65db & 38.59db & 37.38db \\
$17 \times 17$ & 39.15db & 37.95db & 38.22db & 39.11db & 37.80db \\
\hline\hline
$\sigma /PSNR$ & 30/18.60db & 30/18.60db & 30/18.60db & 30/18.61db &
30/18.61db \\ \hline
$11 \times 11$ & 34.81db & 33.65db & 33.79db & 34.93db & 33.57db \\
$13 \times 13$ & 35.57db & 34.47db & 34.58db & 35.78db & 34.23db \\
$15 \times 15$ & 36.24db & 35.15db & 35.25db & 36.48db & 34.78db \\
$17 \times 17$ & 36.79db & 35.75db & 35.84db & 37.07db & 35.26db \\ \hline
\end{tabular}
} \vskip1mm
\end{center}
\caption{\small PSNR values when oracle estimator $f_{h}^{\ast }$ is applied with
different values of $M$.}
\label{Table oracle}
\end{table}

\begin{figure}[tbp]
\begin{center}
\includegraphics[width=0.49\linewidth]{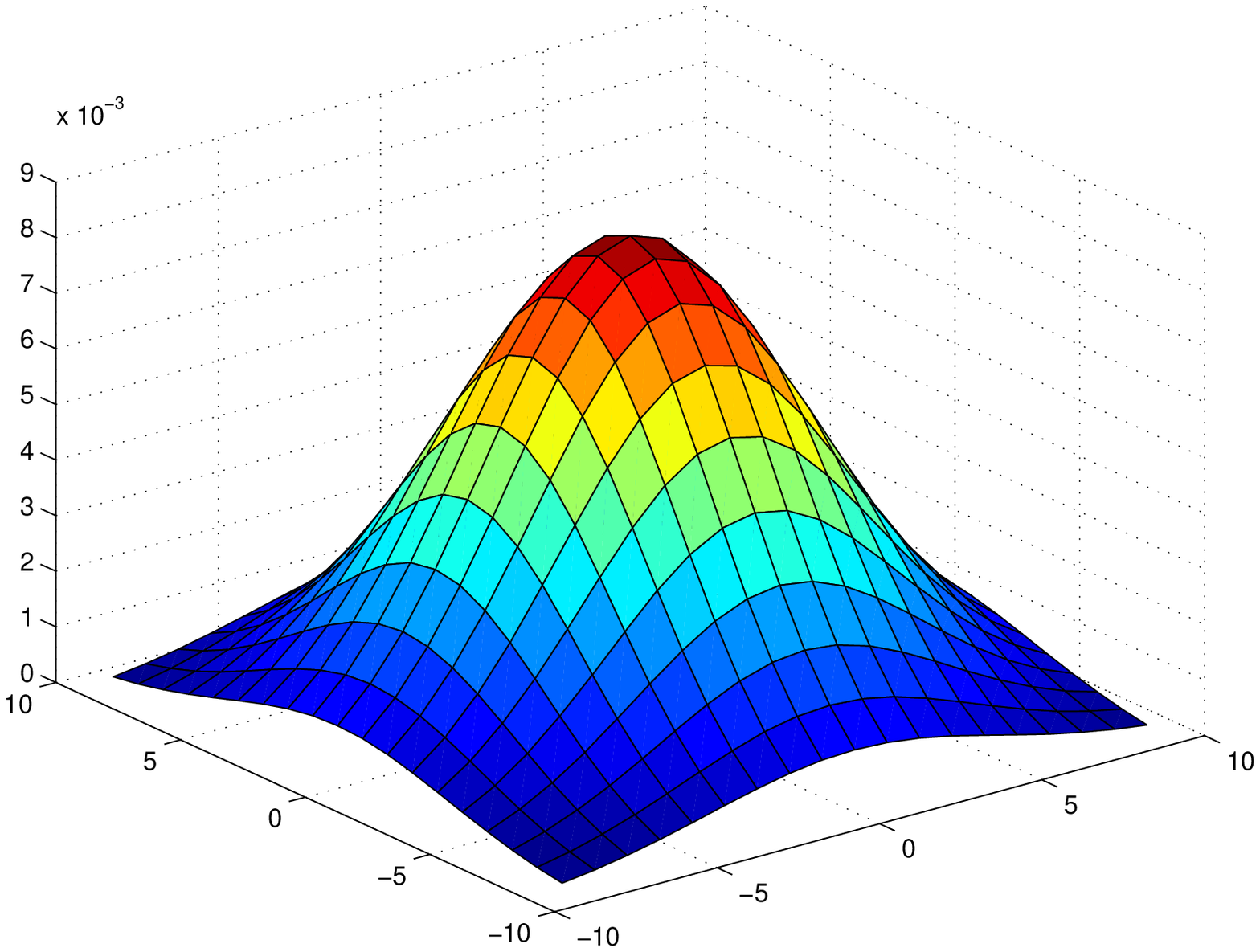}\ %
\includegraphics[width=0.49\linewidth]{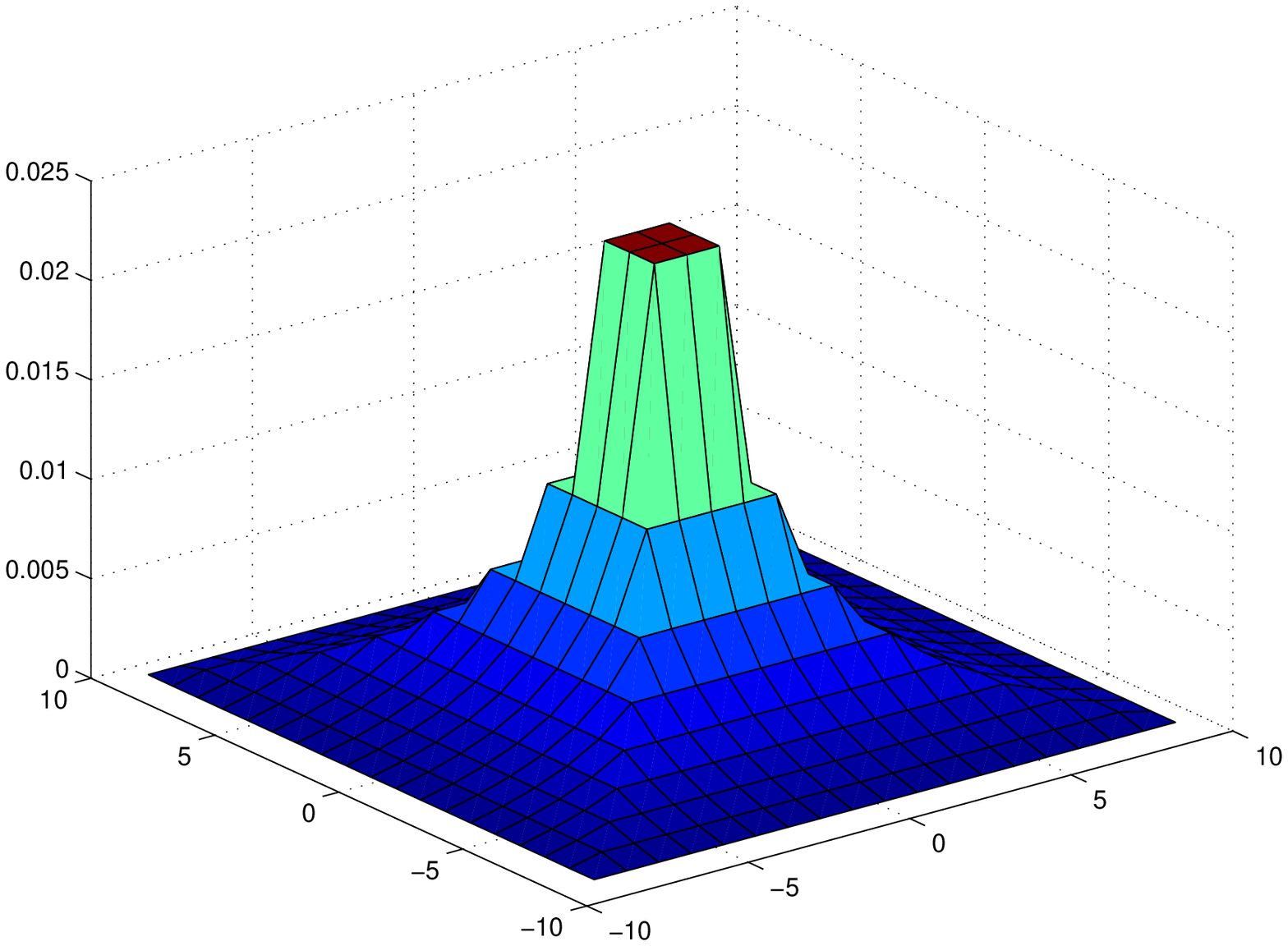}
\end{center}
\caption{\small The shape of the kernels $K_{g}$ (left) and $K_{0}$ (right) with $%
M= $ $21\times 21$.}
\label{Fig Kernels 1}
\end{figure}
{\small \ }
\par
We have done simulations on a commonly-used set of images available at
http://decsai.\newline
ugr.es/javier/denoise/test images/ which includes Lena, Barbara, Boat,
House, Peppers. The potential of the estimation method is illustrated with
the $512\times 512$ image "Lena" (Figure\ \ref{fig4}(a)) and "Barbara" (Figure \ \ref{fig barbara}(a) )  corrupted by an
additive white Gaussian noise (Figures\ \ref{fig4}(b), PSNR$=22.10db$, $%
\sigma =20$ and \ \ref{fig barbara} (b), PSNR$=18.60$, $\sigma=30$ ). We first used the rectangular kernel $K_{0}$ for computing the
estimated brightness variation function $\widehat{\rho }_{K,x_0},$ which
corresponds to the Optimal Weights Filter as defined in Section \ref%
{Sec:constr}. Empirically we found that the parameters $m$ and $M$ can be
fixed to $m=21\times 21$ and $M=13\times 13.$ In Figures\ \ref{fig4}(c) and \ \ref{fig barbara}(c), we
can see that the noise is reduced in a natural manner and significant
geometric features, fine textures, and original contrasts are visually well
recovered with no undesirable artifacts (PSNR$=32.52db$ for "Lena" and PSNR $=28.89$ for "Barbara"). To better
appreciate the accuracy of the restoration process, the square of the difference
between the original image and the recovered image is shown in Figures\ \ref%
{fig4}(d) and \ \ref{fig barbara}(d), where the dark values correspond to a high-confidence estimate.
As expected, pixels with a low level of confidence are located in the
neighborhood of image discontinuities. For comparison, we show the image
denoised by Non-Local Means Filter in Figures \ \ref{fig4}(e),(f) and \ \ref{fig barbara} (e), (f). The overall visual
impression and the numerical results are improved using our algorithm.

\begin{figure}[tbp]
\begin{center}
\renewcommand{\arraystretch}{0.5} \addtolength{\tabcolsep}{-6pt} \vskip3mm {%
\fontsize{8pt}{\baselineskip}\selectfont
\begin{tabular}{cc}
\includegraphics[width=0.42\linewidth]{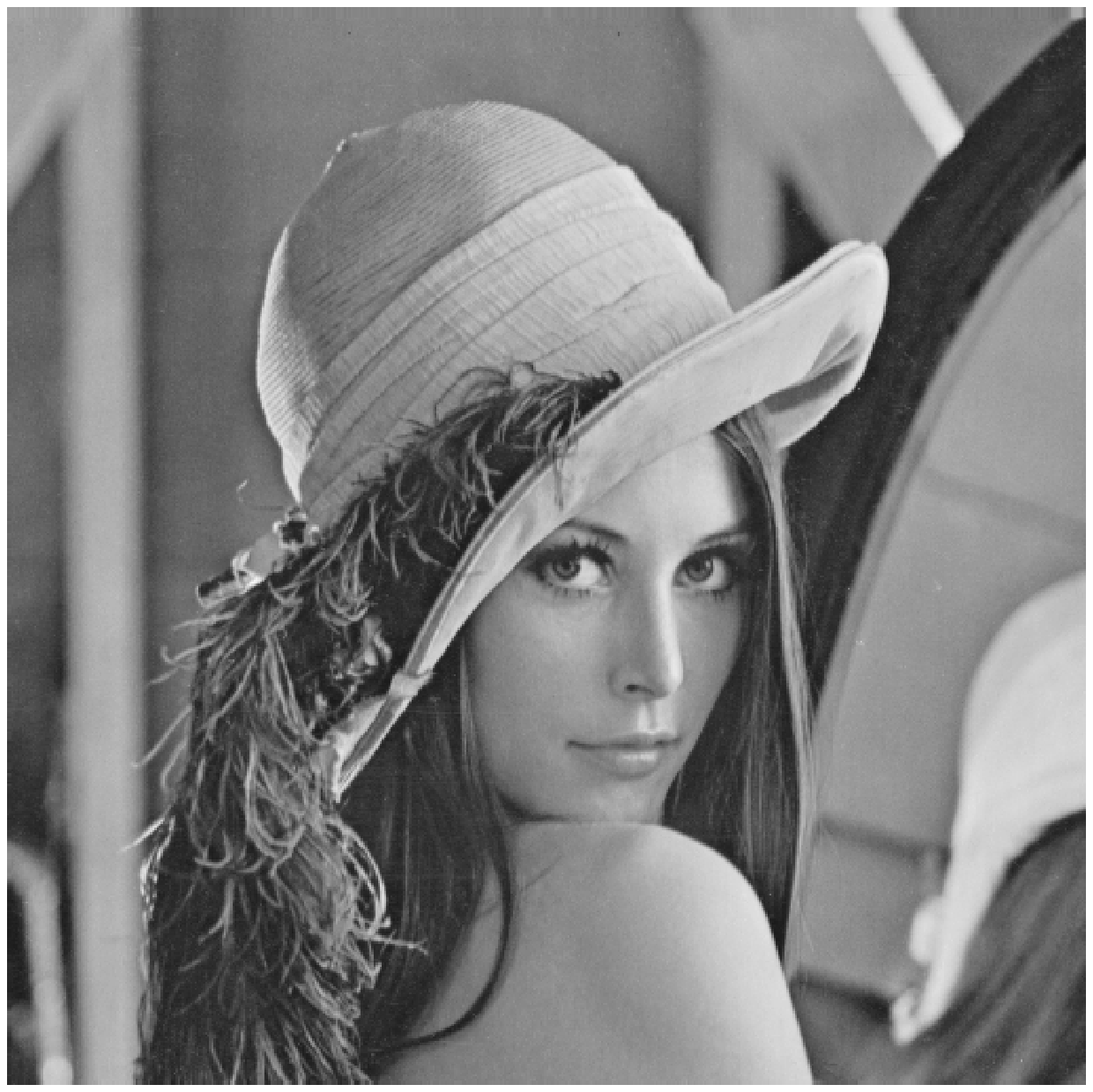}
& \includegraphics[width=0.42\linewidth]{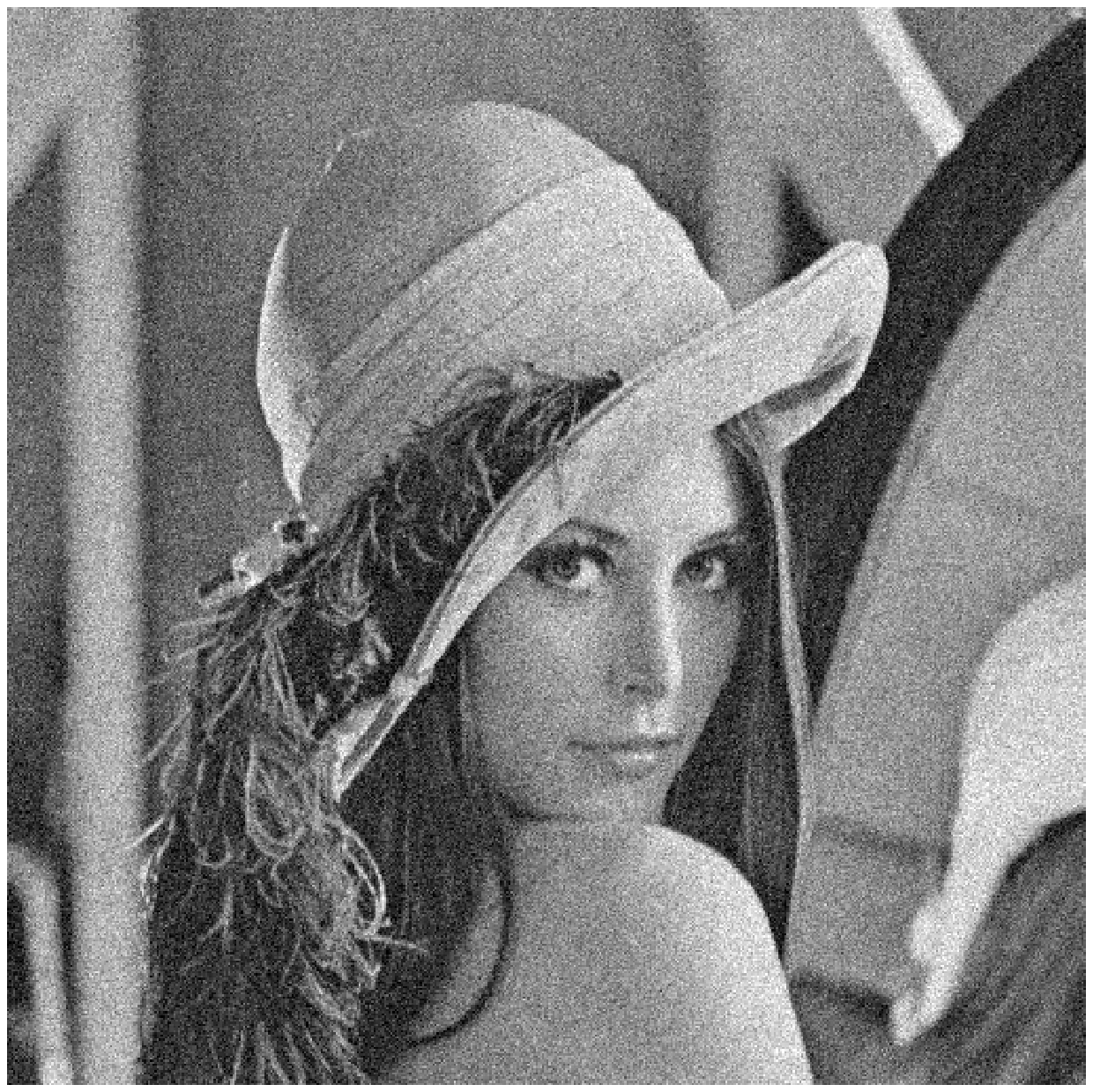} \\
(a) Original image "Lena"&
(b) Noisy image with $\sigma =20$, $PSNR=22.11db$\\
\includegraphics[width=0.42\linewidth]{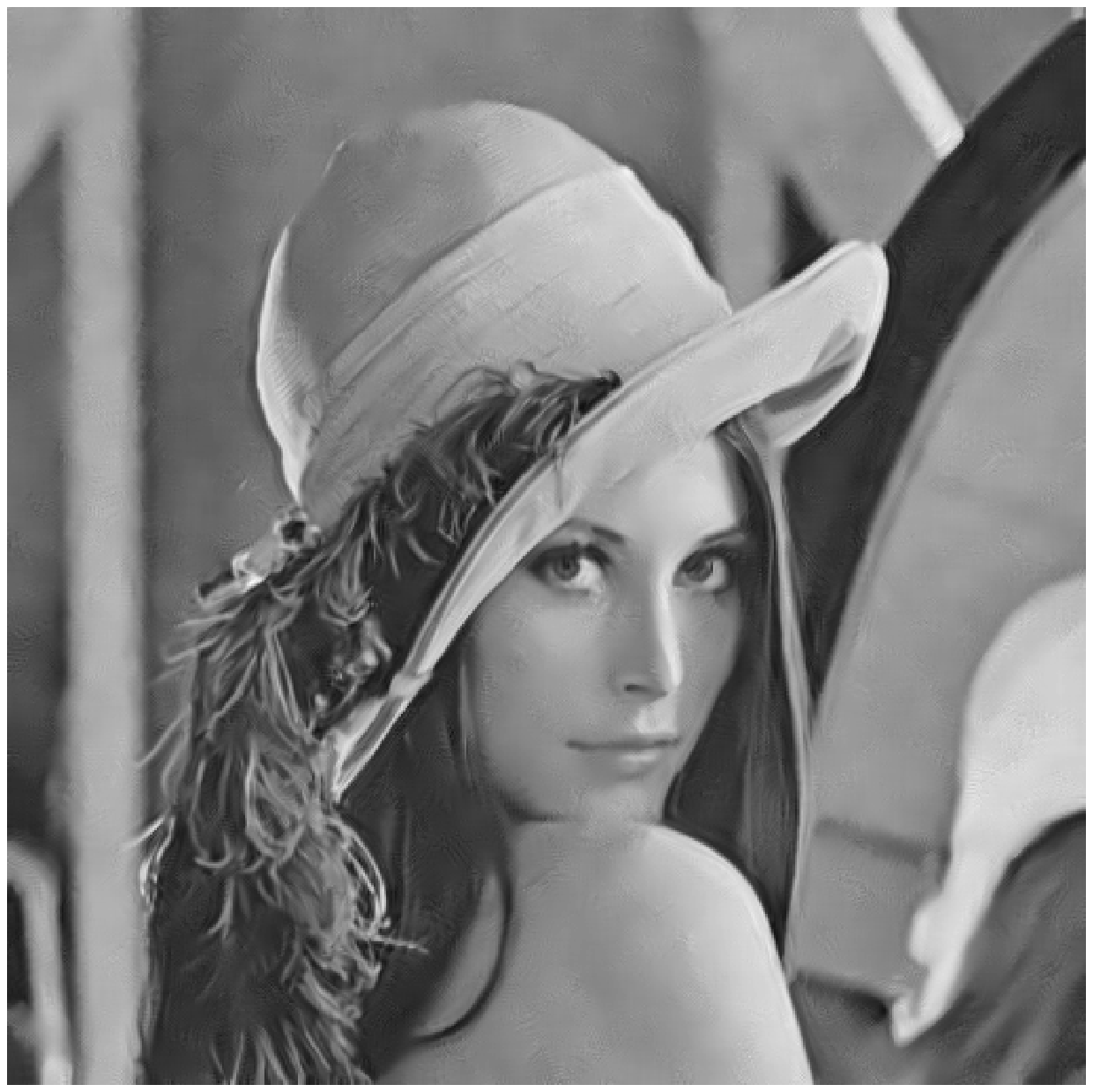} &
\includegraphics[width=0.42\linewidth]{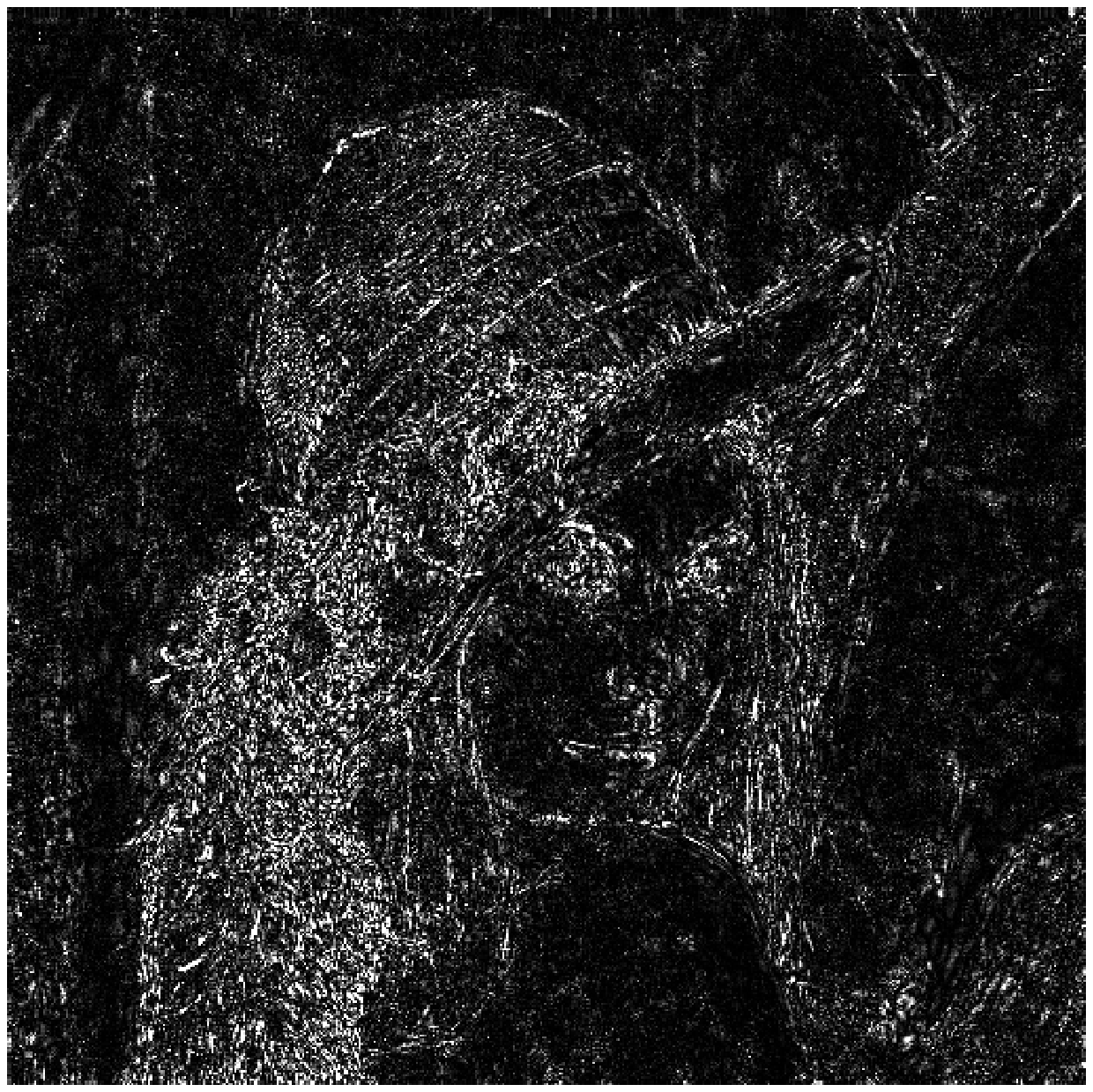} \\
(c) Restored   with OWF,  PSNR$=32.52db$  &
(d) Square error with OWF  \\
\includegraphics[width=0.42\linewidth]{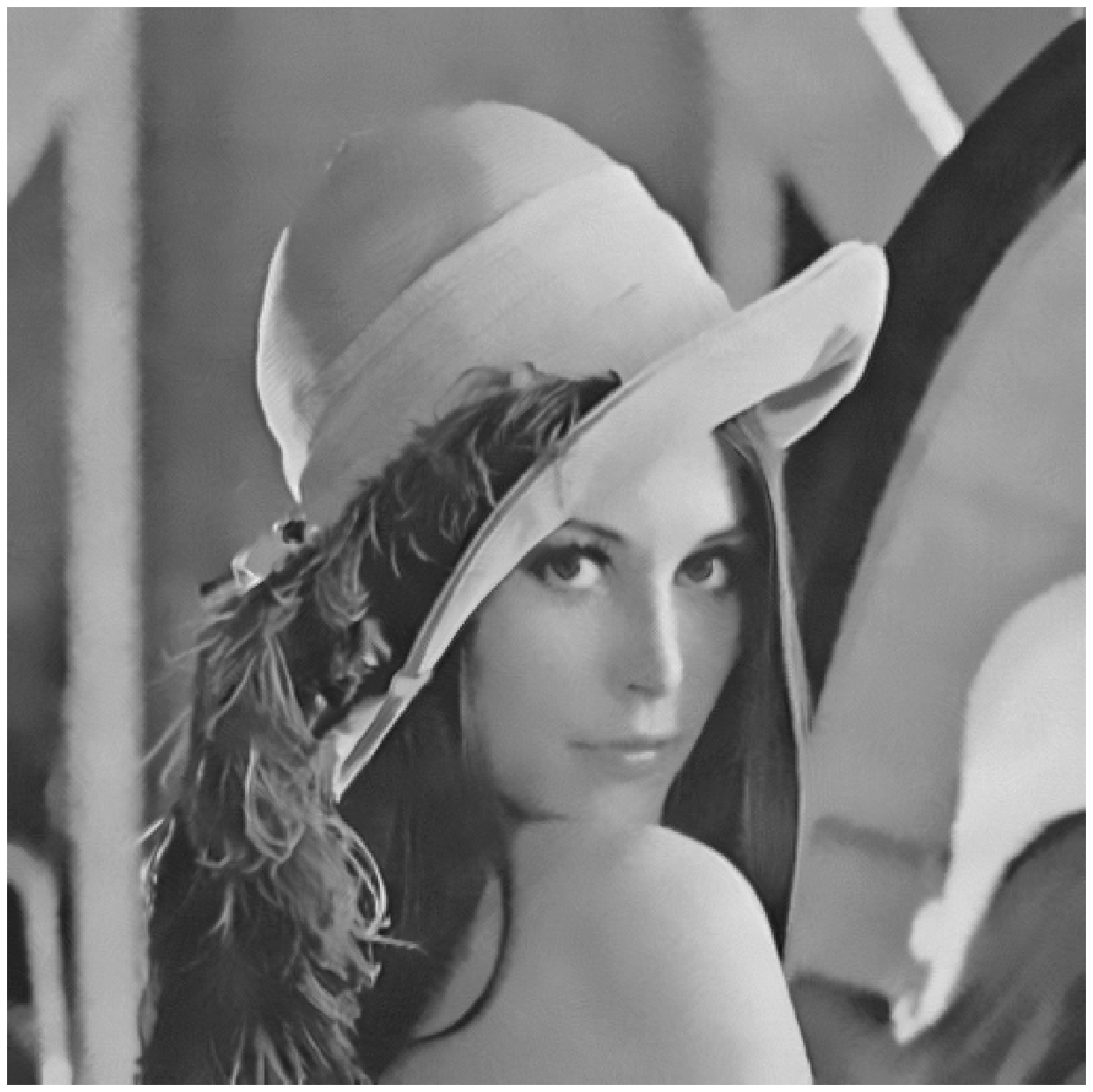} &
\includegraphics[width=0.42\linewidth]{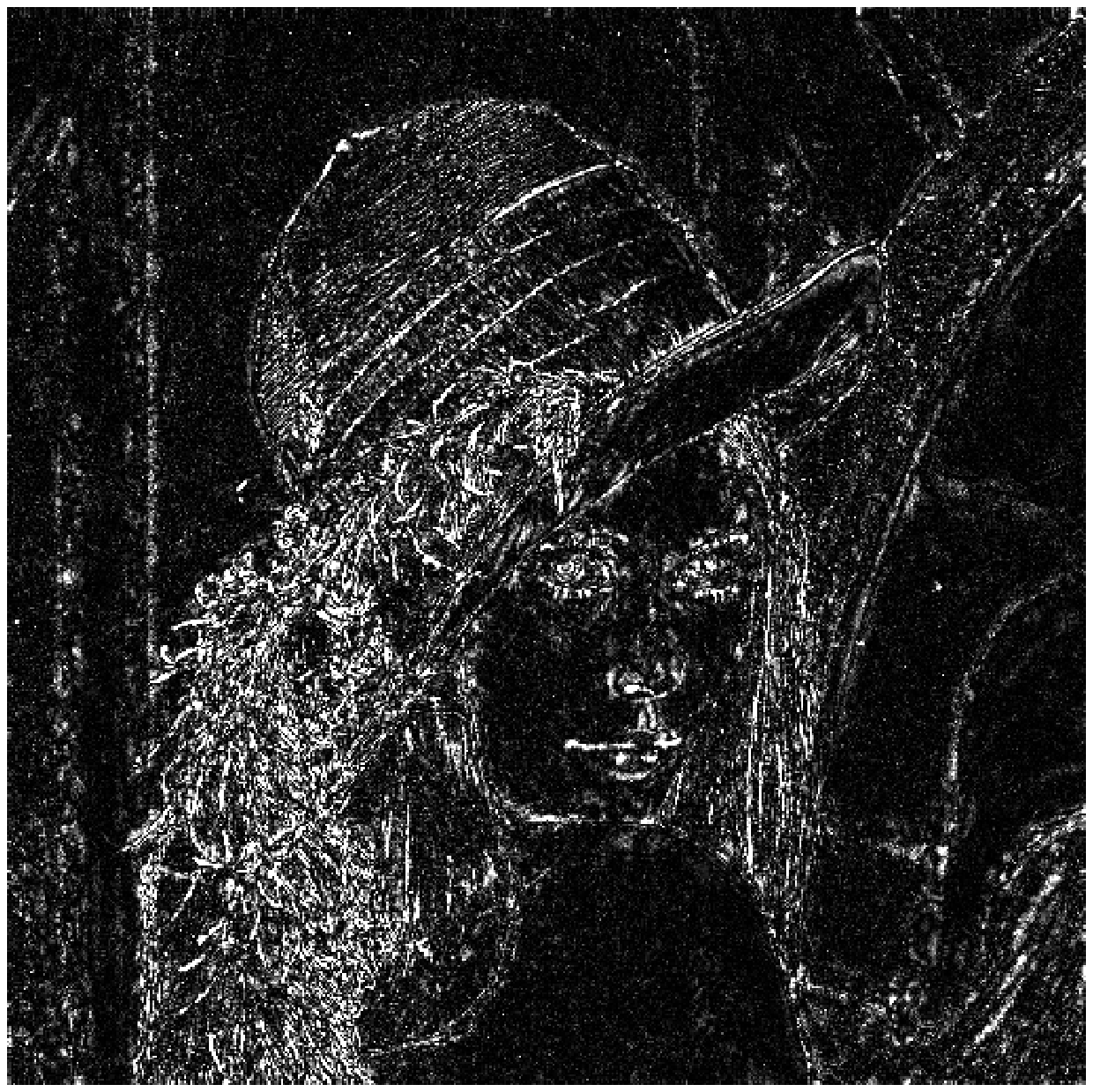}\\
(e) Restored  with NLMF,   PSNR$=31.73db$ &
(f) Square error with NLMF
\end{tabular}
} \vskip1mm
\par
\rule{0pt}{-0.2pt}%
\par
\vskip1mm 
\end{center}
\caption{\small Results of denoising "Lena" $512\times 512$ image. Comparing (d) and (f) we see that the Optimal Weights Filter (OWF) captures more details than the Non-Local Means Filter (NLMF).}
\label{fig4}
\end{figure}

\begin{figure}[tbp]
\begin{center}
\renewcommand{\arraystretch}{0.5} \addtolength{\tabcolsep}{-6pt} \vskip3mm {%
\fontsize{8pt}{\baselineskip}\selectfont
\begin{tabular}{cc}
\includegraphics[width=0.42\linewidth]{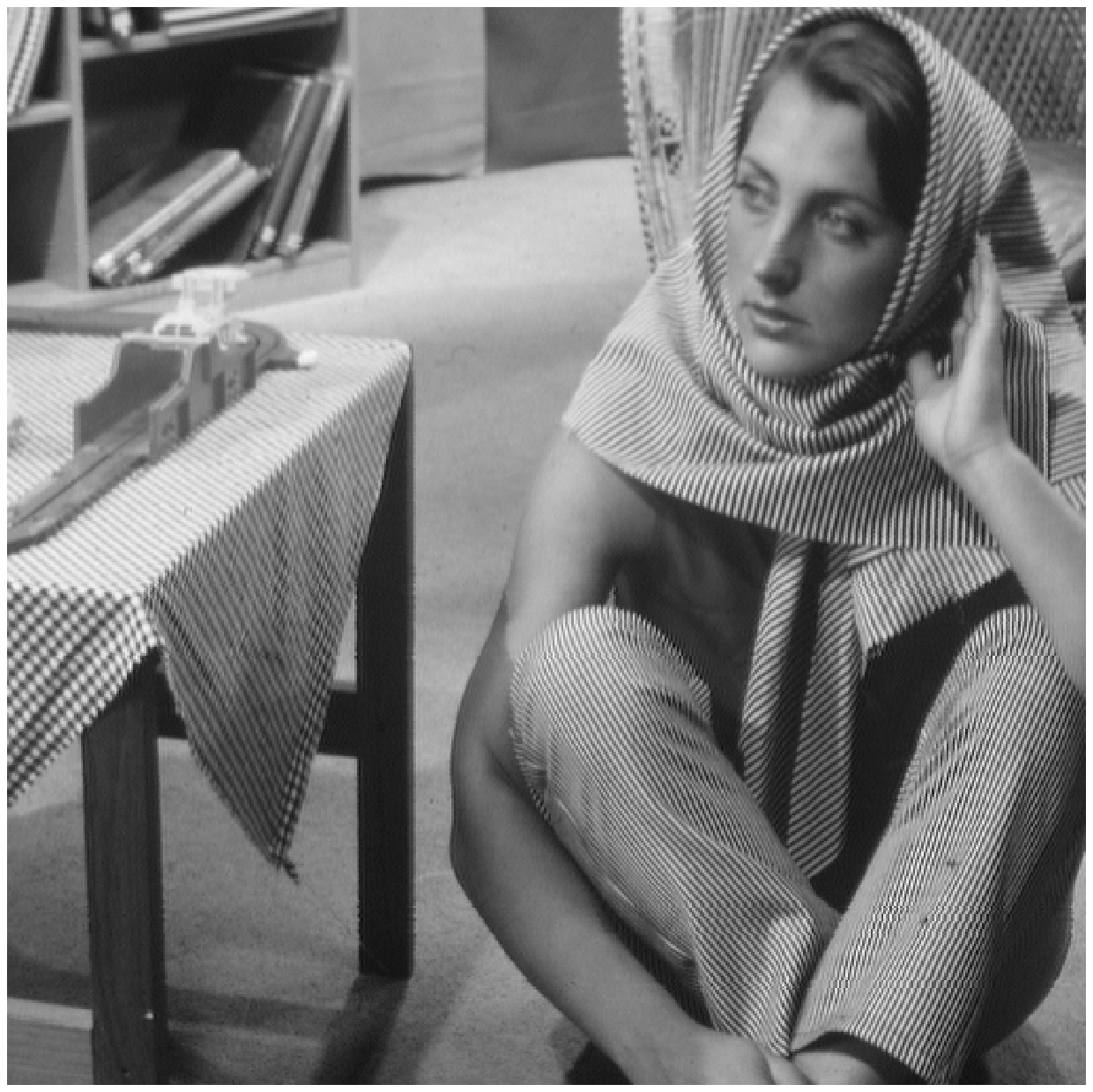}  &
 \includegraphics[width=0.42\linewidth]{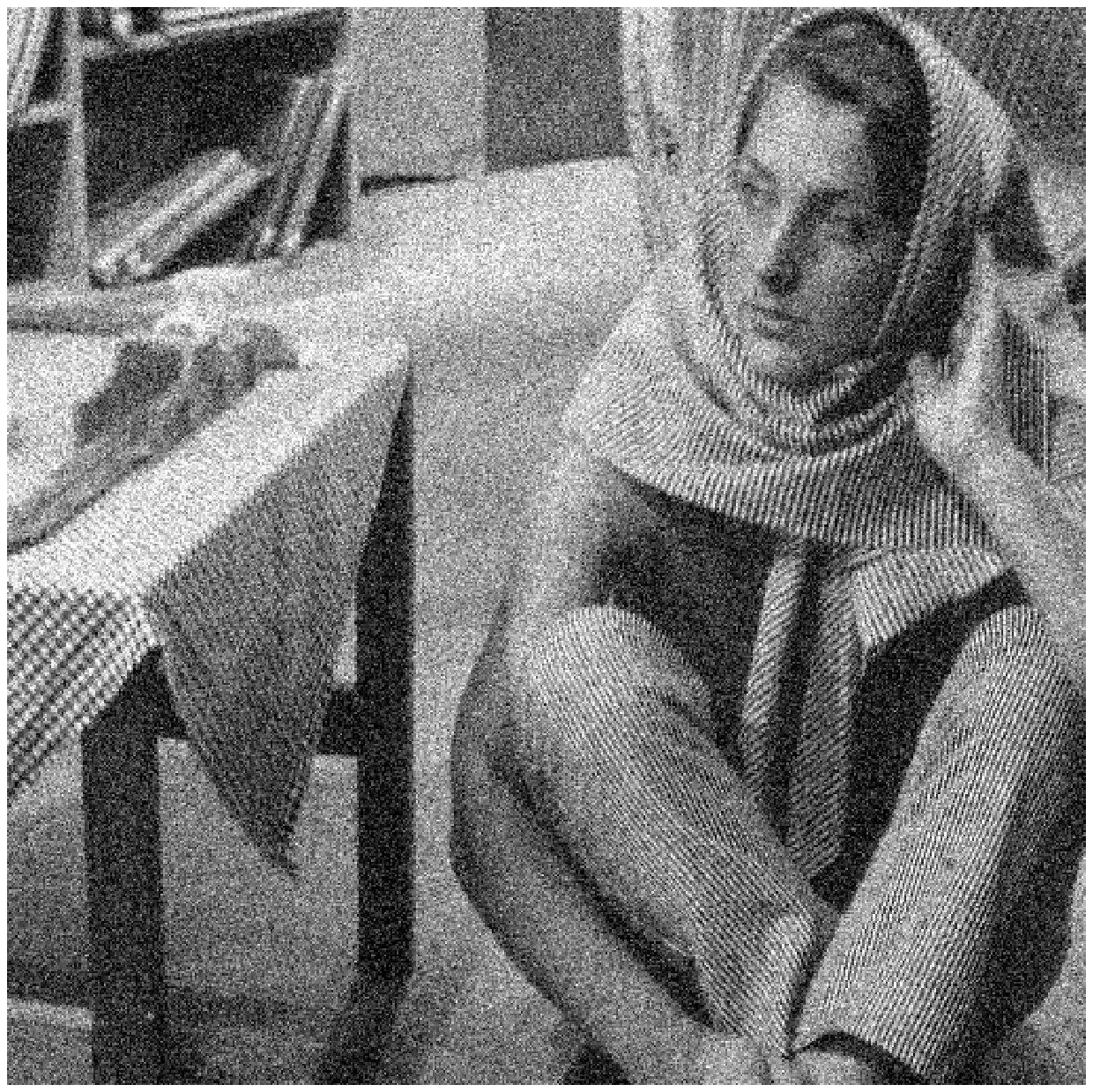} \\
(a) Original image "Barbara"&
(b) Noisy image with $\sigma =30$, $PSNR=18.60db$\\
\includegraphics[width=0.42\linewidth]{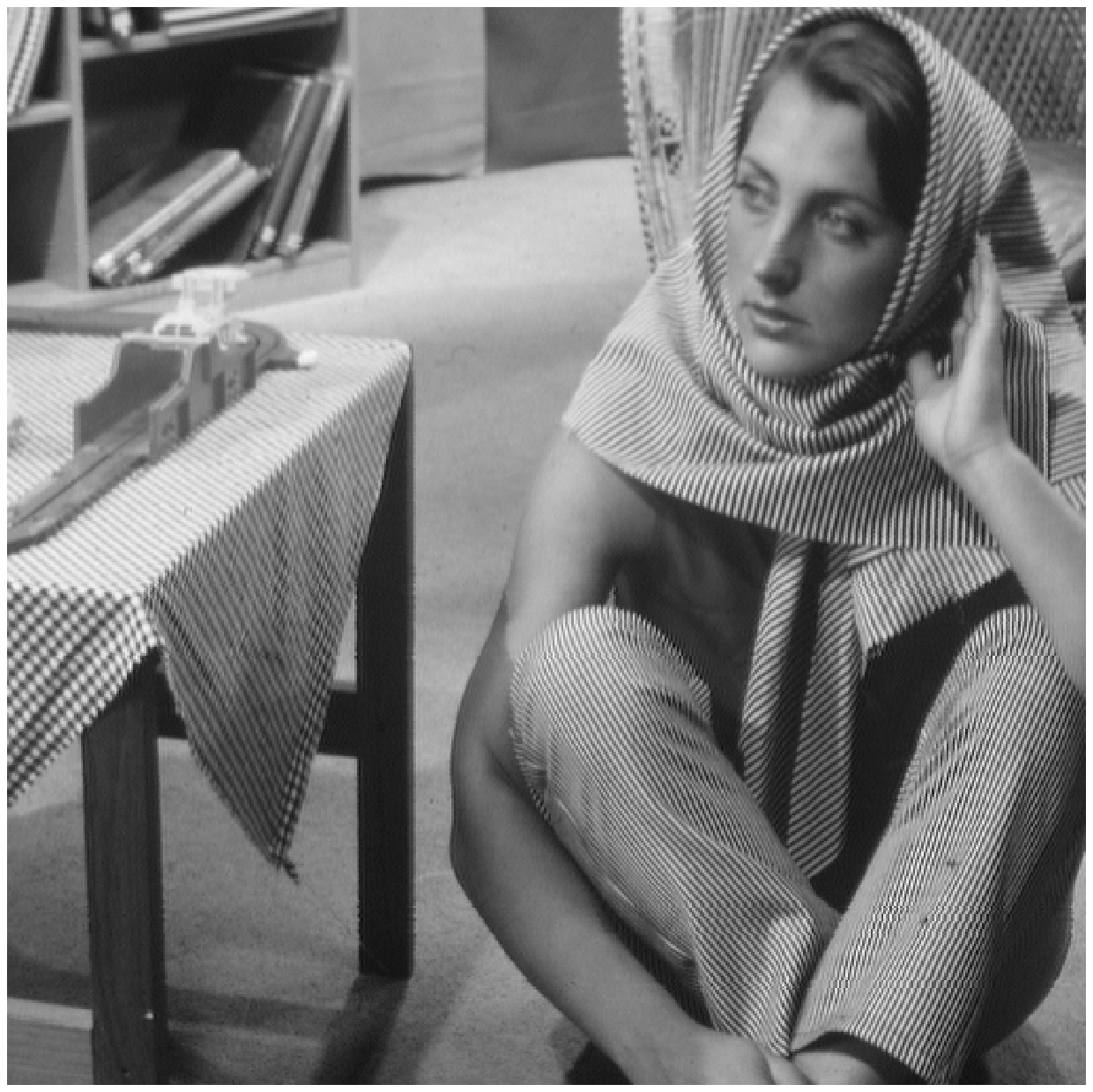} &
\includegraphics[width=0.42\linewidth]{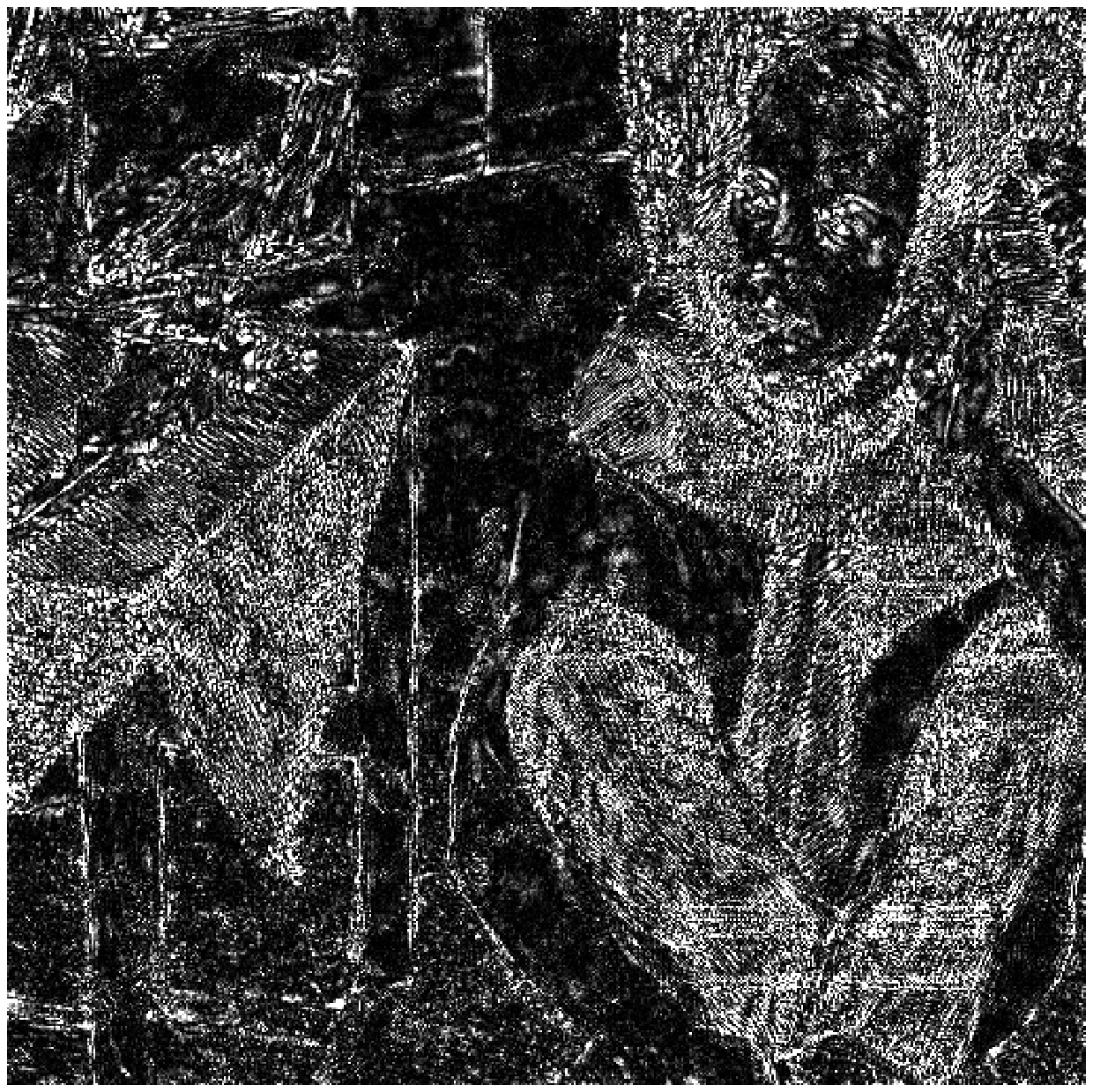} \\
(c) Restored with OWF, PSNR$=28.89db$  &
(d) Square error with OWF  \\
\includegraphics[width=0.42\linewidth]{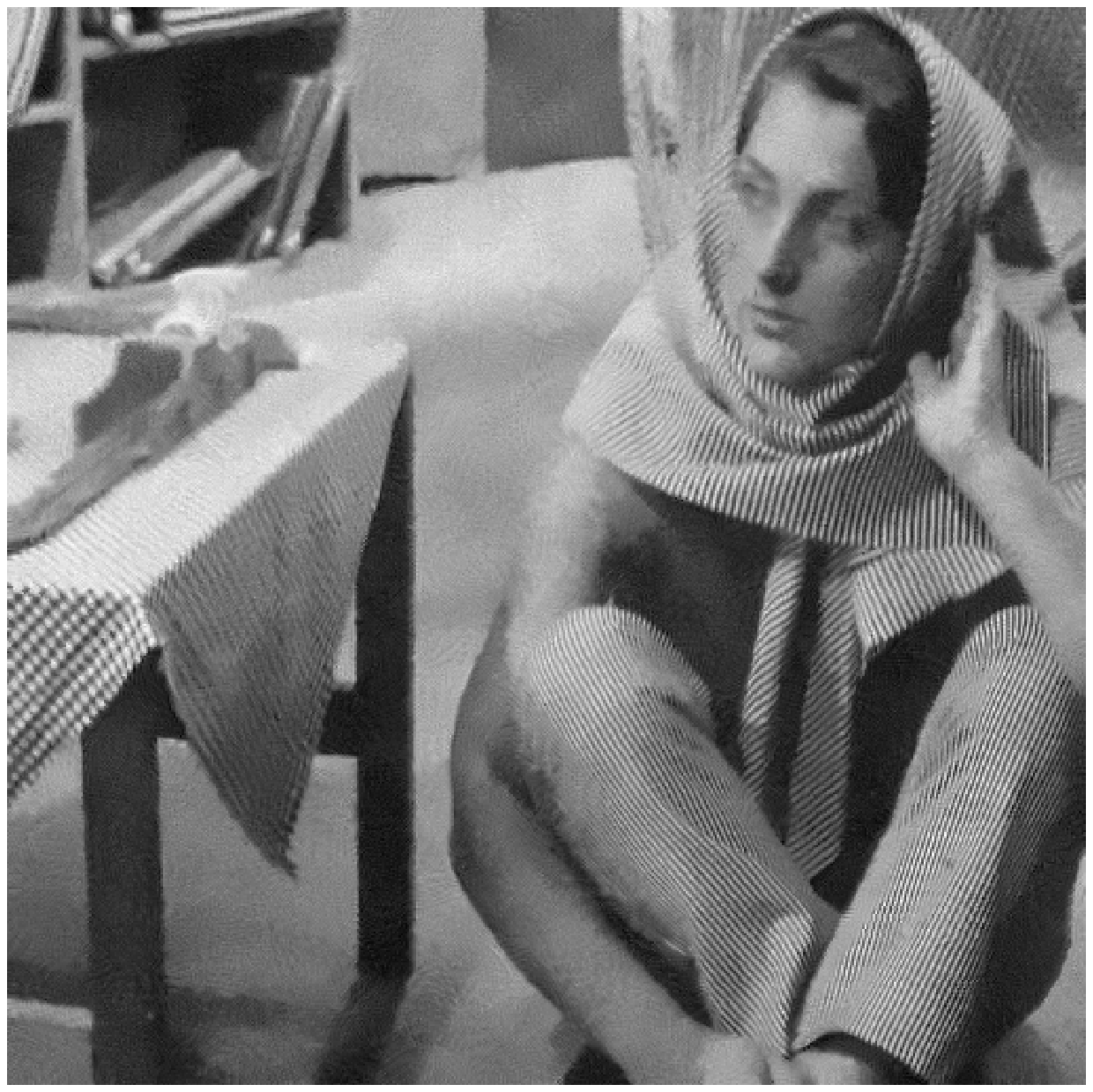} &
\includegraphics[width=0.42\linewidth]{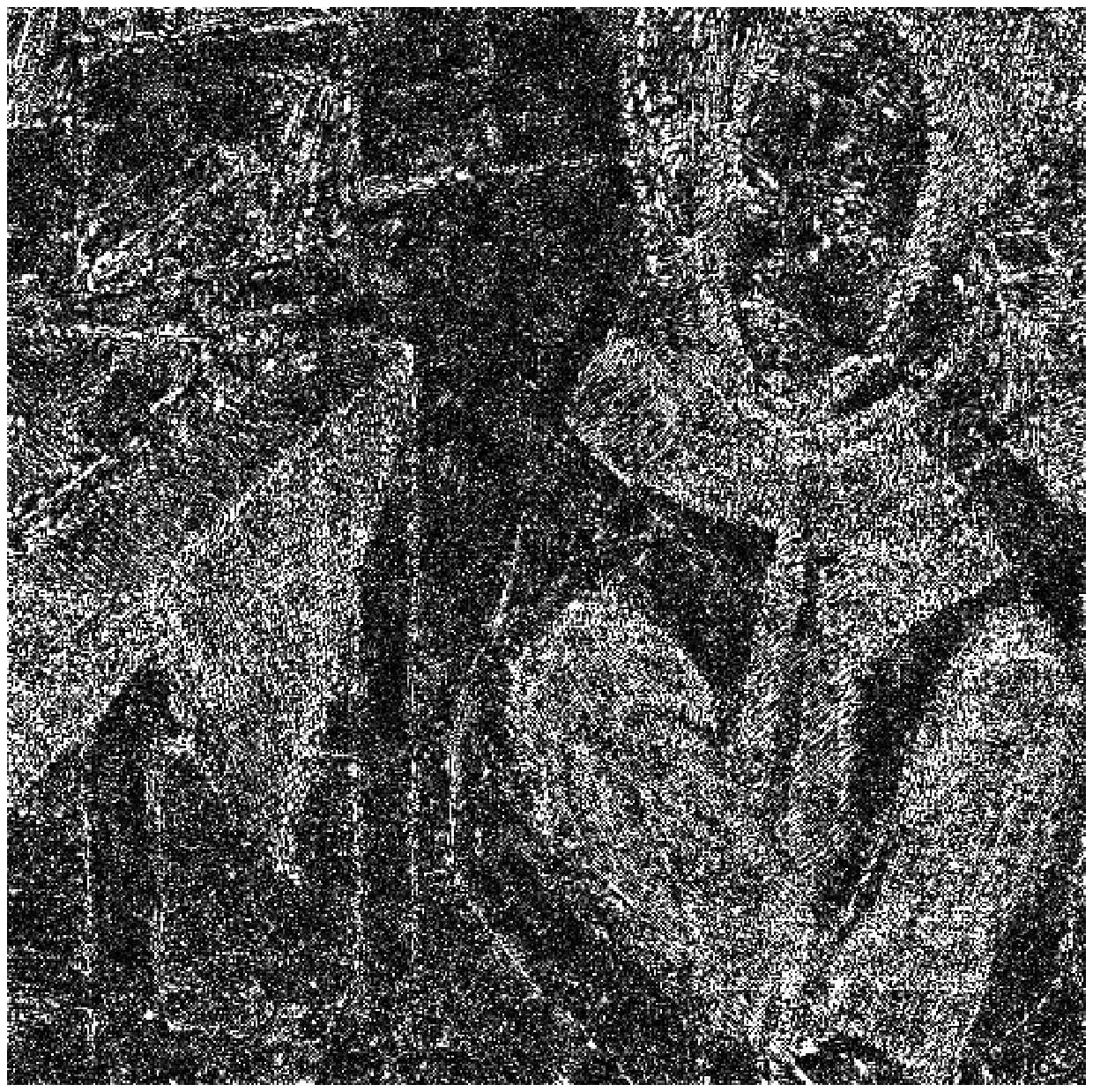}\\
(e) Restored   with NLMF,  PSNR$=27.88db$ &
(f) Square error with NLMF
\end{tabular}
} \vskip1mm
\par
\rule{0pt}{-0.2pt}%
\par
\vskip1mm 
\end{center}
\caption{\small Results of denoising "Barbara" $512\times 512$ image. Comparing (d) and (f) we see that the Optimal Weights Filter (OWF) captures more details than the Non-Local Means Filter (NLMF).}
\label{fig barbara}
\end{figure}

The Optimal Weights Filter seems to provide a feasible and rational method
to detect automatically the details of images and take the proper weights
for every possible geometric configuration of the image. For illustration purposes, we have chosen a series of search windows $\mathbf{U}_{x_{0},h}$ with centers at
some testing pixels $x_{0}$ on the noisy image, see Figure\ \ref{Fig selected}
The distribution of the weights inside the search window $\mathbf{U}%
_{x_{0},h}$ depends on the estimated brightness variation function $\widehat{%
\rho }_{K,x_{0}}\left( x\right) ,$ $x\in \mathbf{U}_{x_{0},h}.$ If the
estimated brightness variation $\widehat{\rho }_{K,x_{0}}\left( x\right) $
is less than $\widehat{a}$ (see Theorem \ref{Th weights 001}), the similarity
between pixels is measured by a linear decreasing function of $\widehat{\rho
}_{K,x_{0}}\left( x\right) ;$ otherwise it is zero. Thus $\widehat{a}$ acts
as an automatic threshold. In Figure\ \ref{Fig weights}, it is shown how
the Optimal Weights Filter chooses in each case a proper weight
configuration.

\begin{figure}[tbp]
\begin{center}
\includegraphics[width=0.6\linewidth]{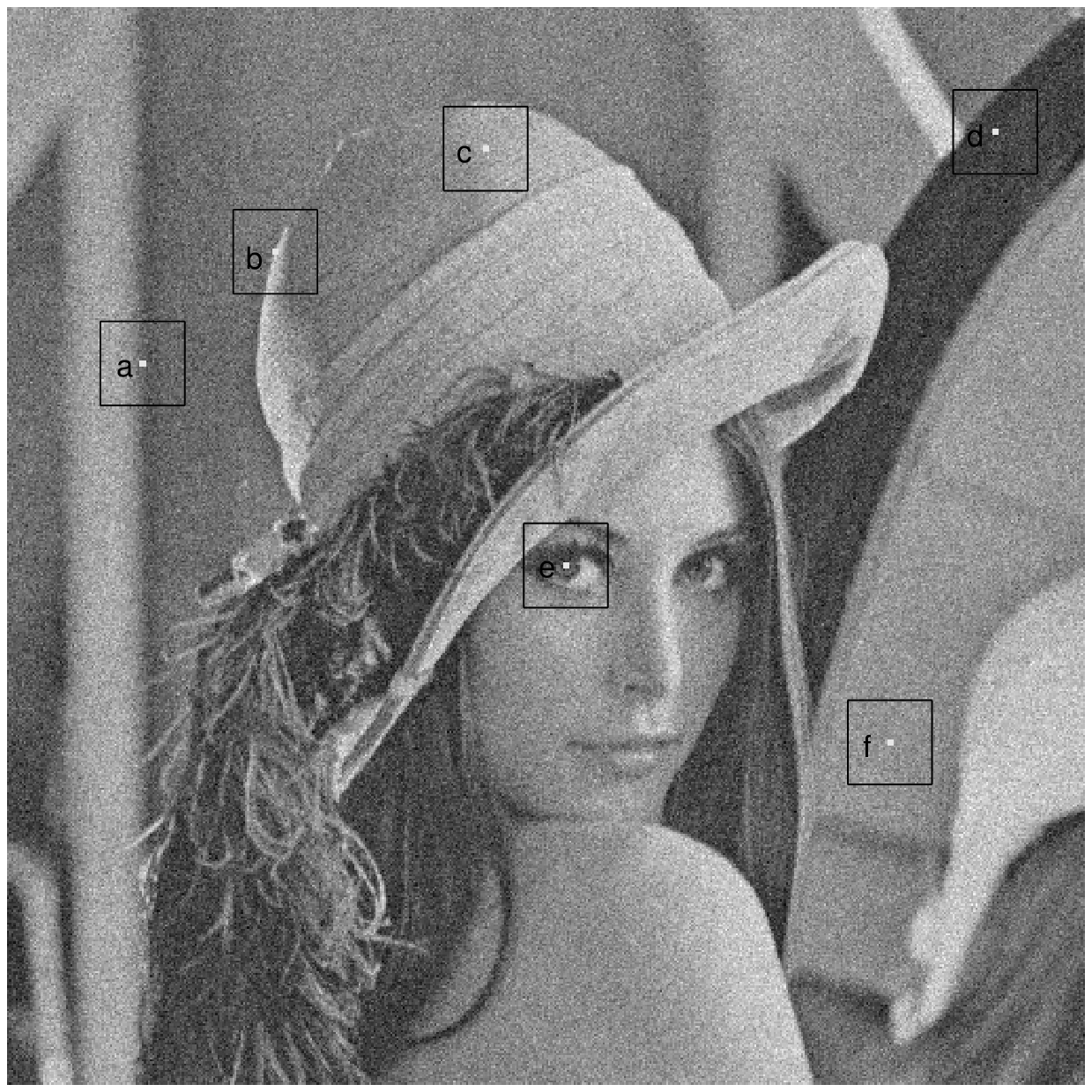}
\end{center}
\caption{\small The noisy image with six selected search windows with centers at
pixels a, b, c, d, e, f.}
\label{Fig selected}
\end{figure}

\begin{figure}[tbp]
\begin{center}
\renewcommand{\arraystretch}{0.5} \addtolength{\tabcolsep}{-6pt} \vskip3mm {%
\fontsize{8pt}{\baselineskip}\selectfont
\begin{tabular}{ccccc}
Original image& Noisy image& 2D representation & 3D representation & Restored  image\\
& & of the weights&of the weights&
\\
\includegraphics[width=0.18\linewidth]{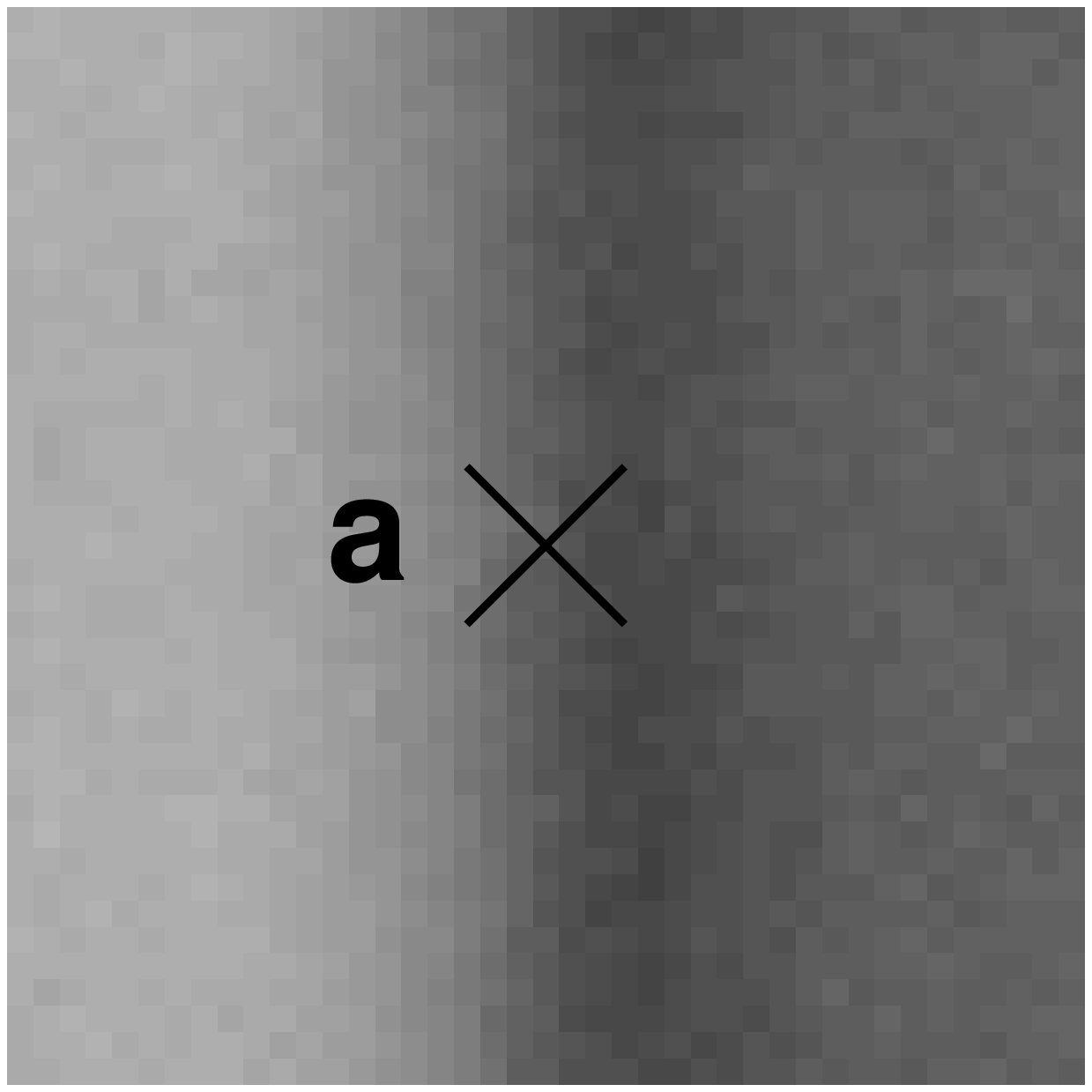} & %
\includegraphics[width=0.18\linewidth]{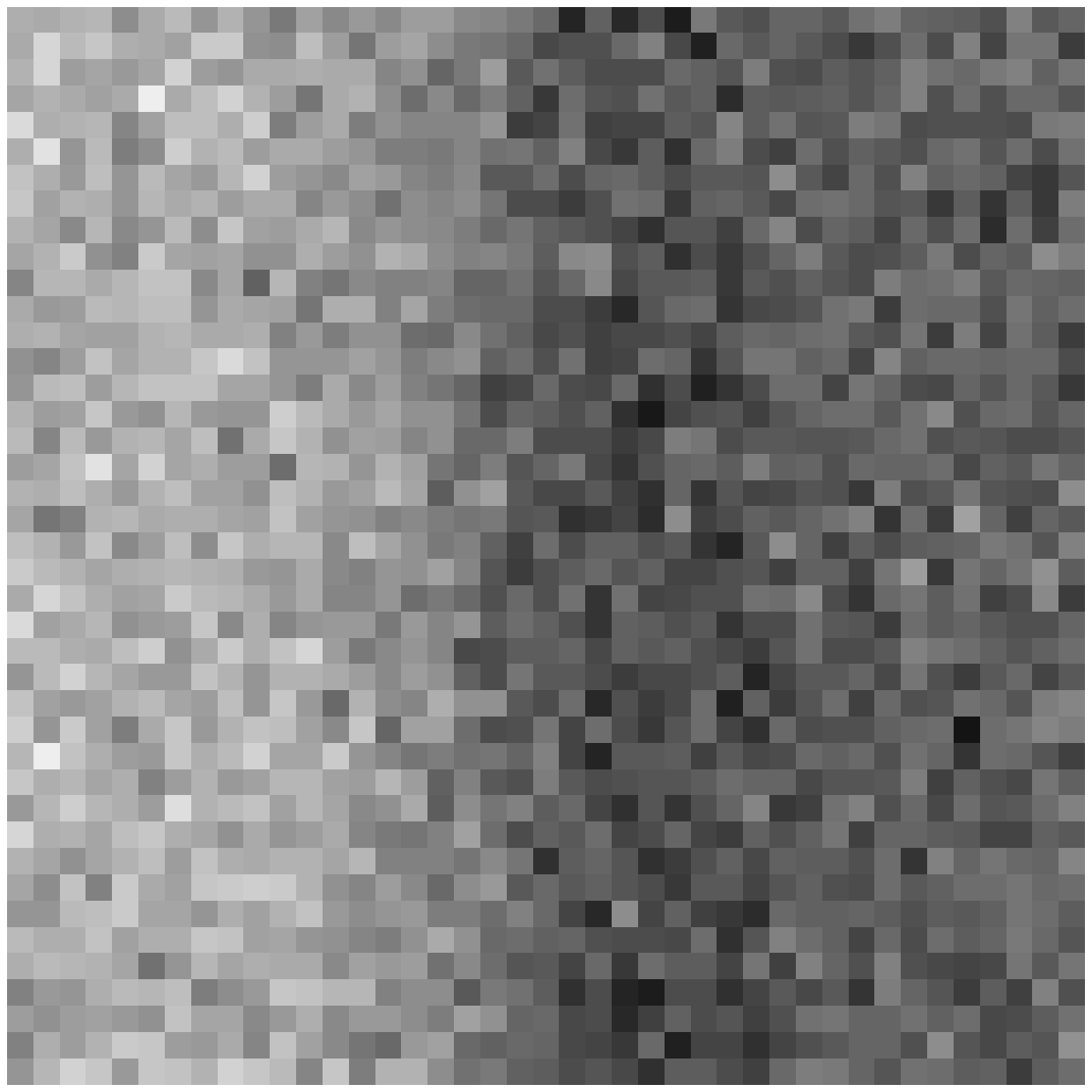} & %
\includegraphics[width=0.18\linewidth]{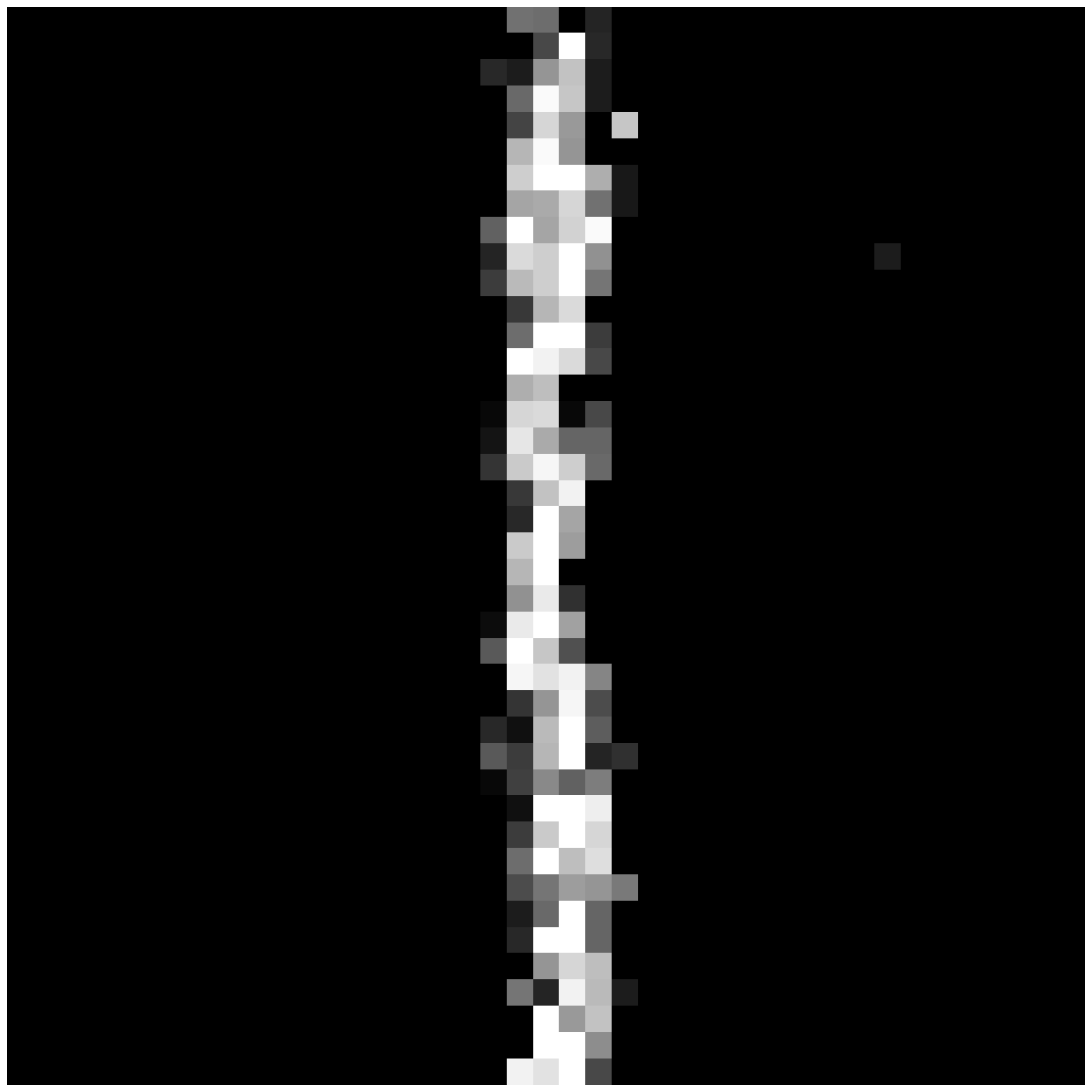} & %
\includegraphics[width=0.18\linewidth]{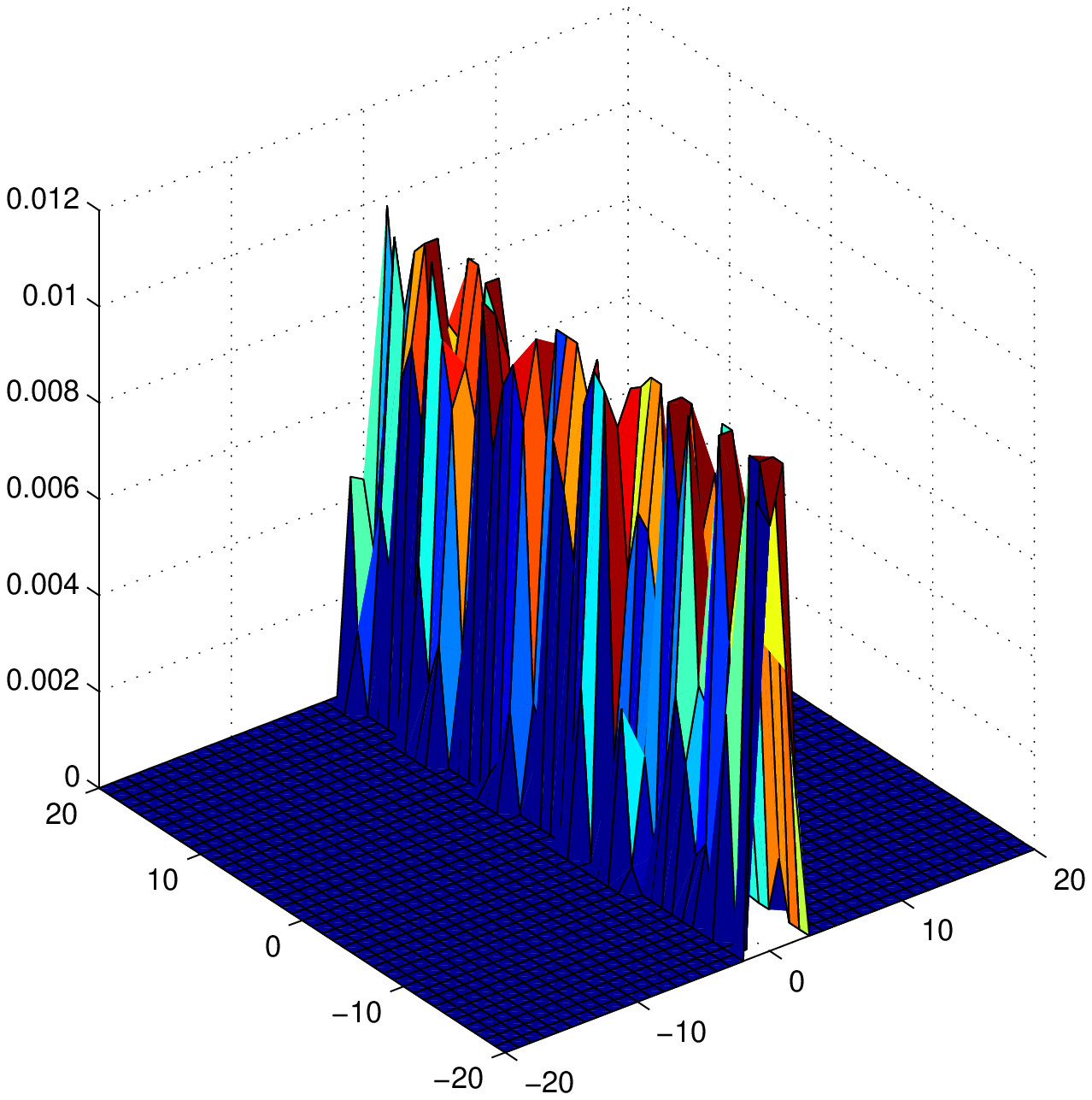} & %
\includegraphics[width=0.18\linewidth]{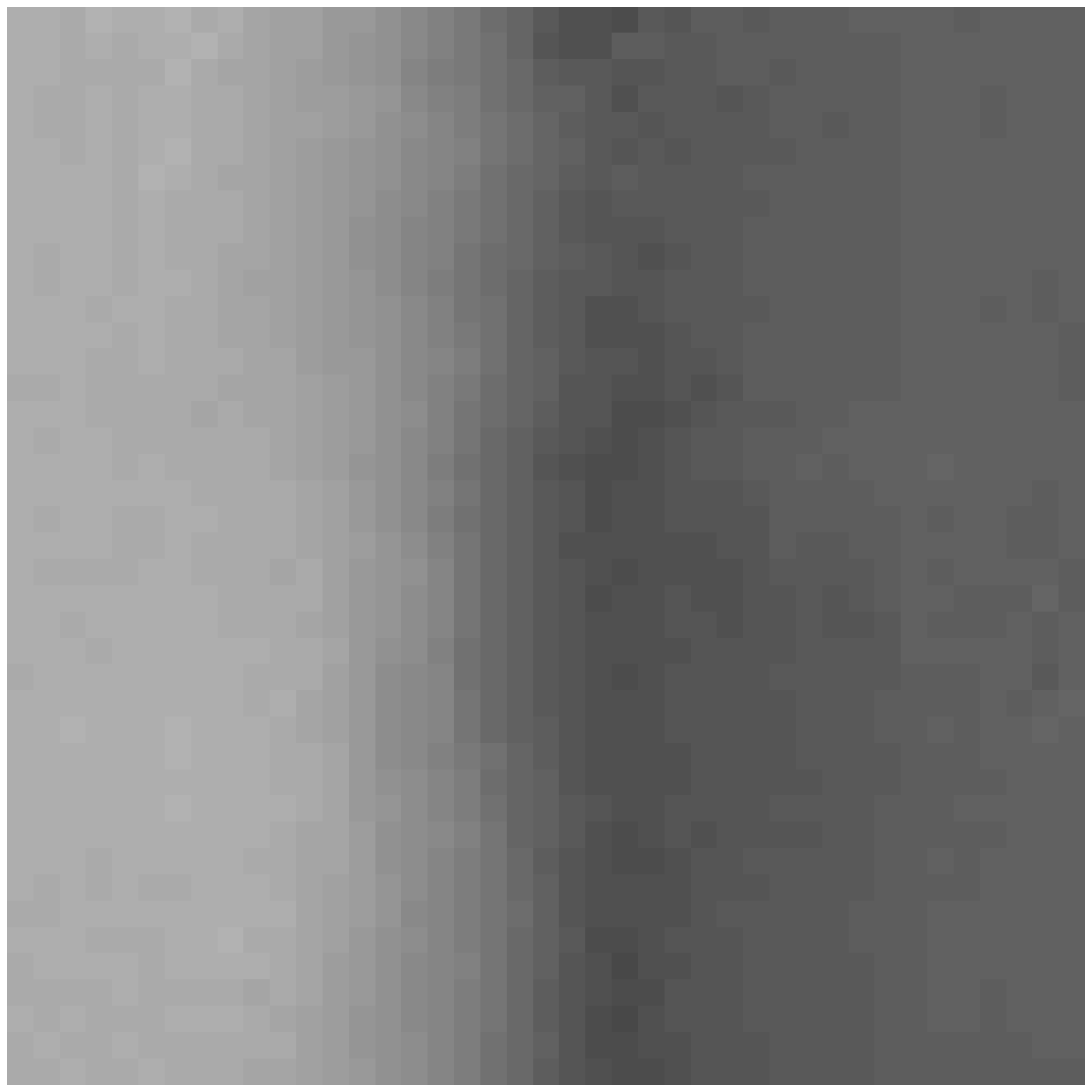} \\
\includegraphics[width=0.18\linewidth]{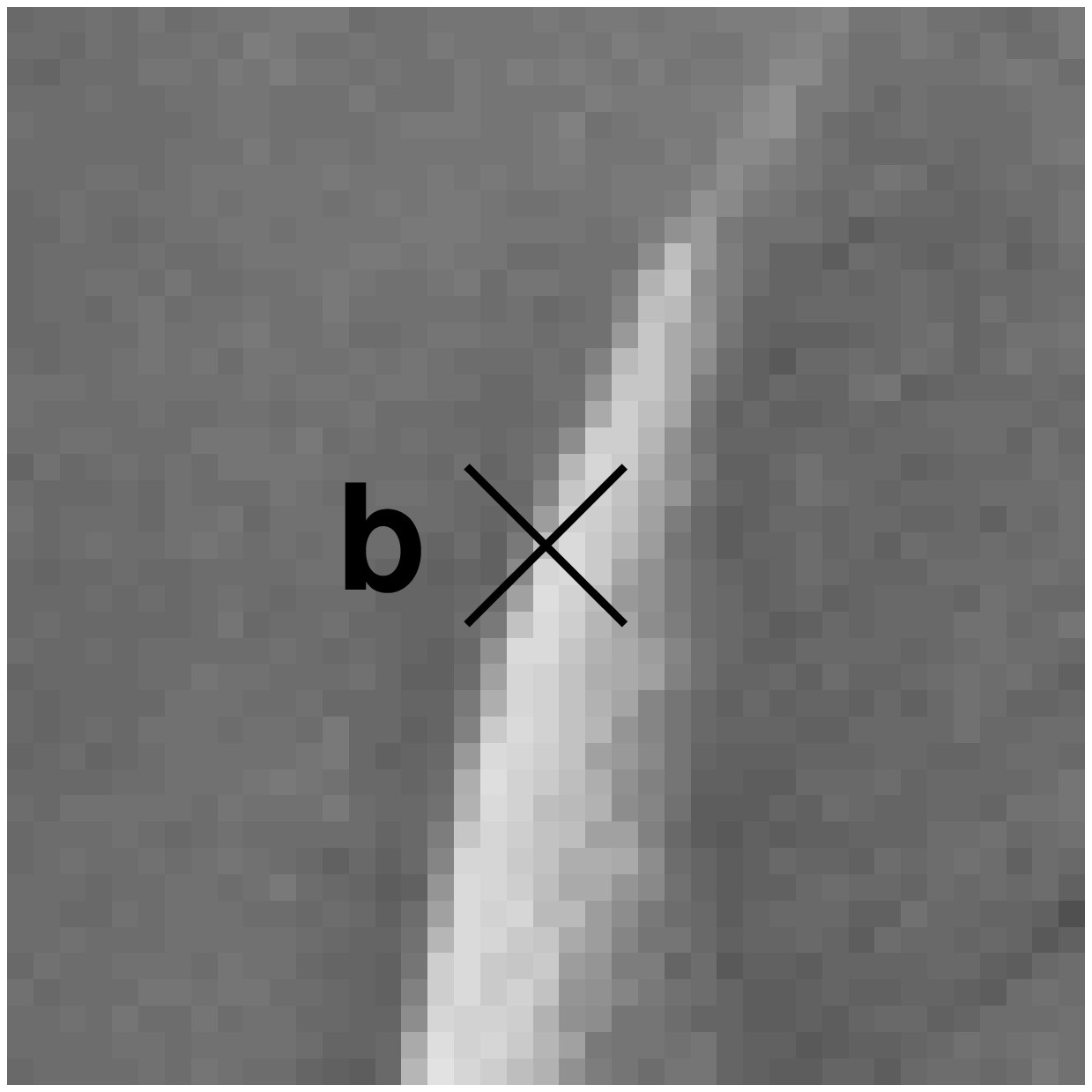} & %
\includegraphics[width=0.18\linewidth]{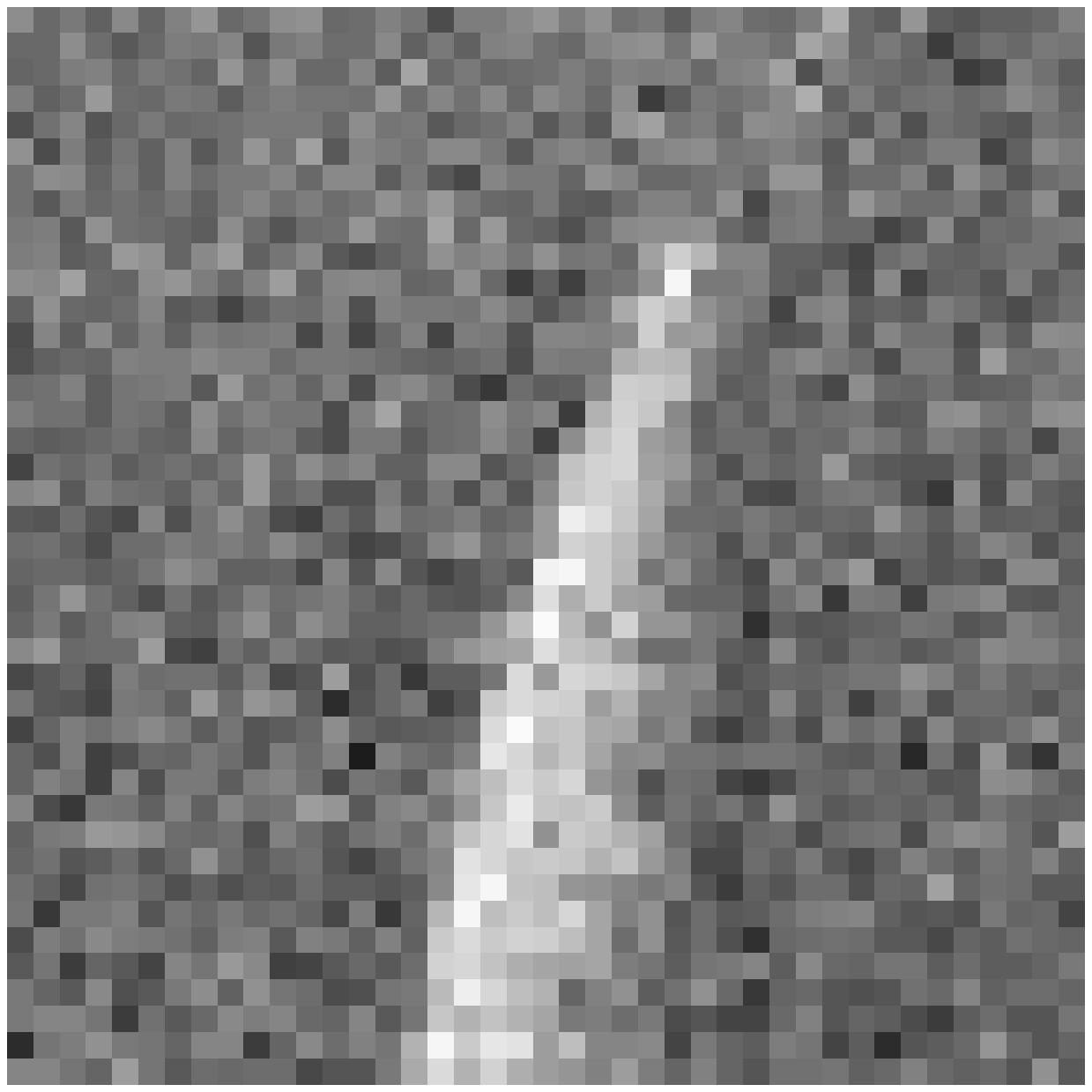} & %
\includegraphics[width=0.18\linewidth]{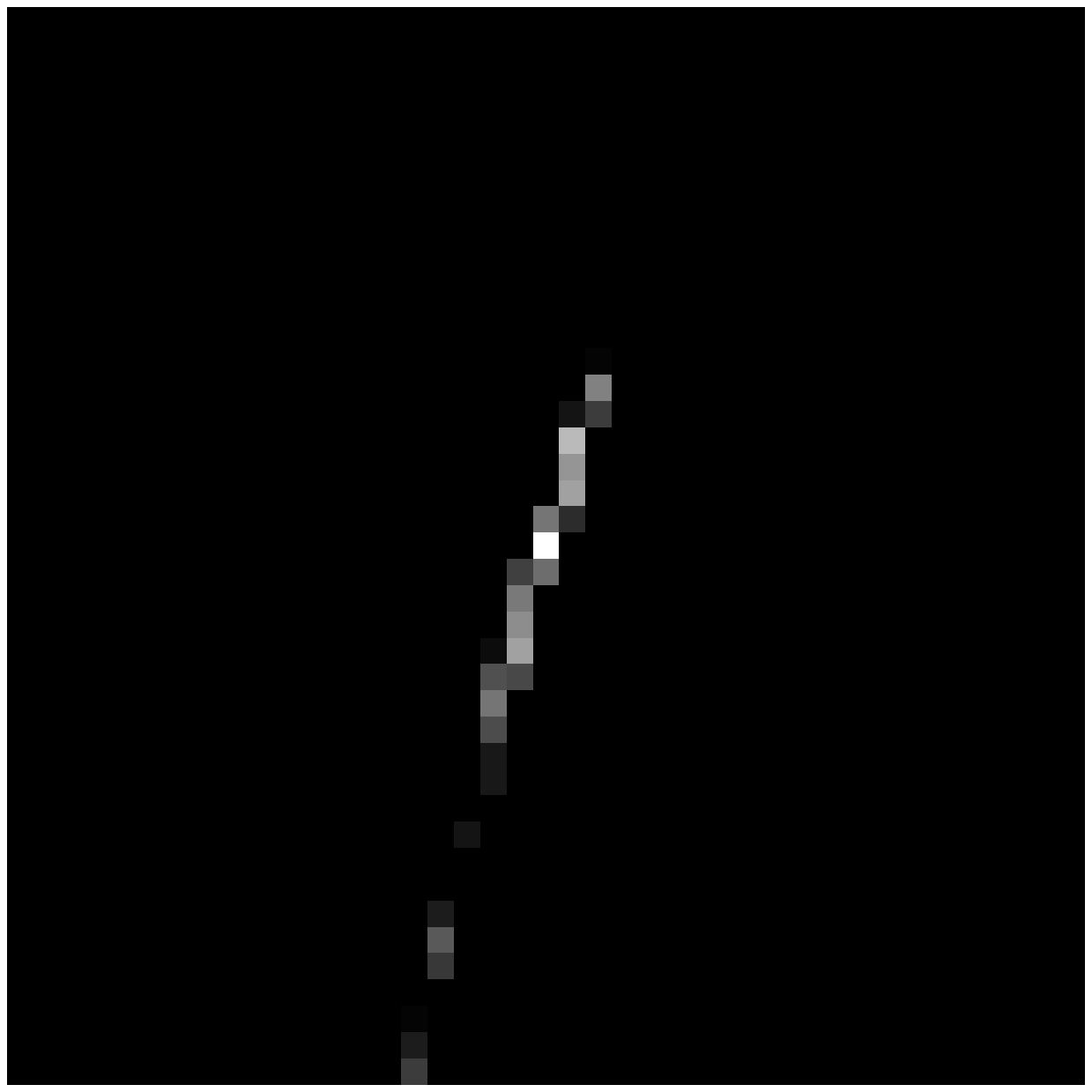} & %
\includegraphics[width=0.18\linewidth]{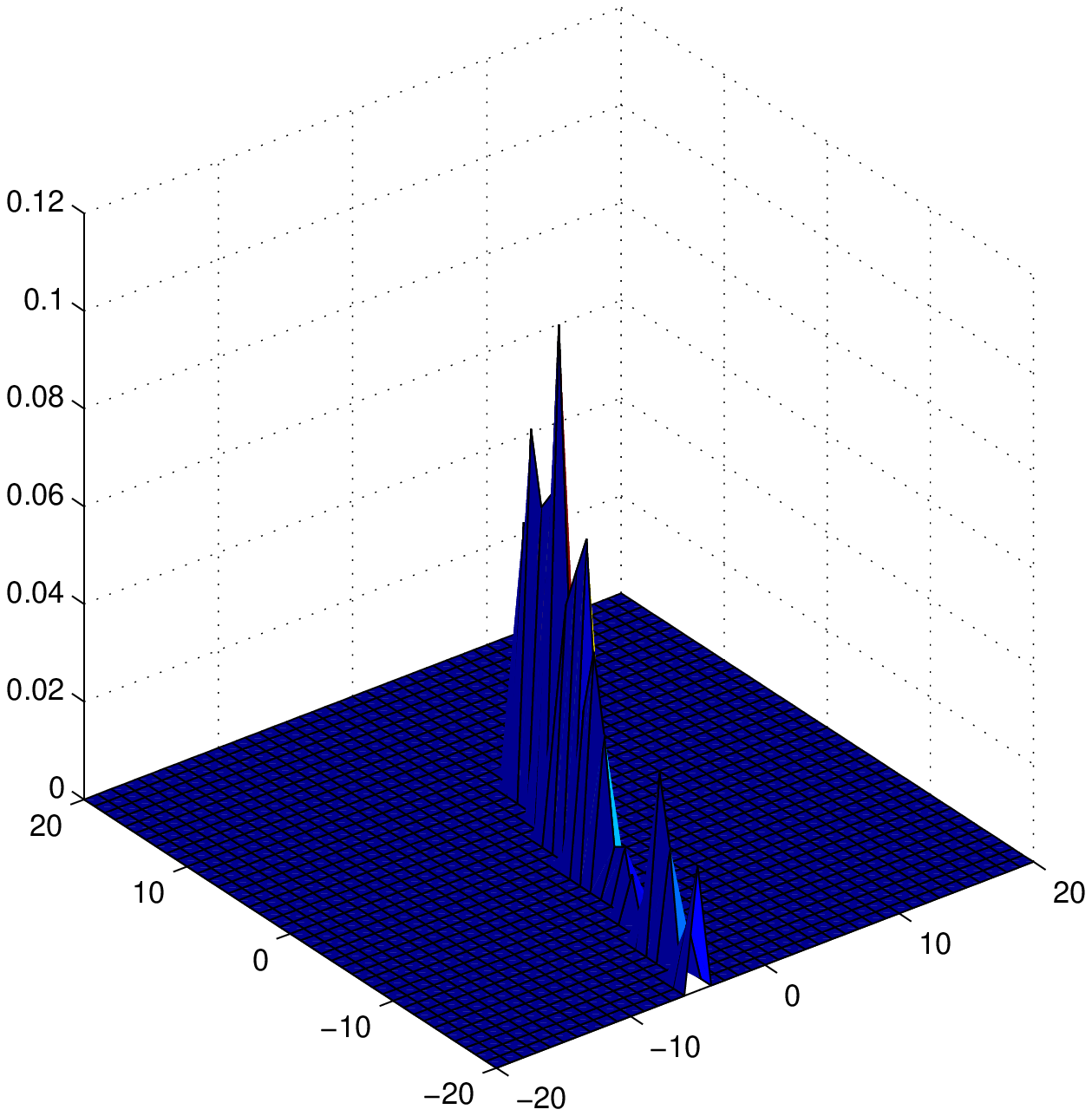} & %
\includegraphics[width=0.18\linewidth]{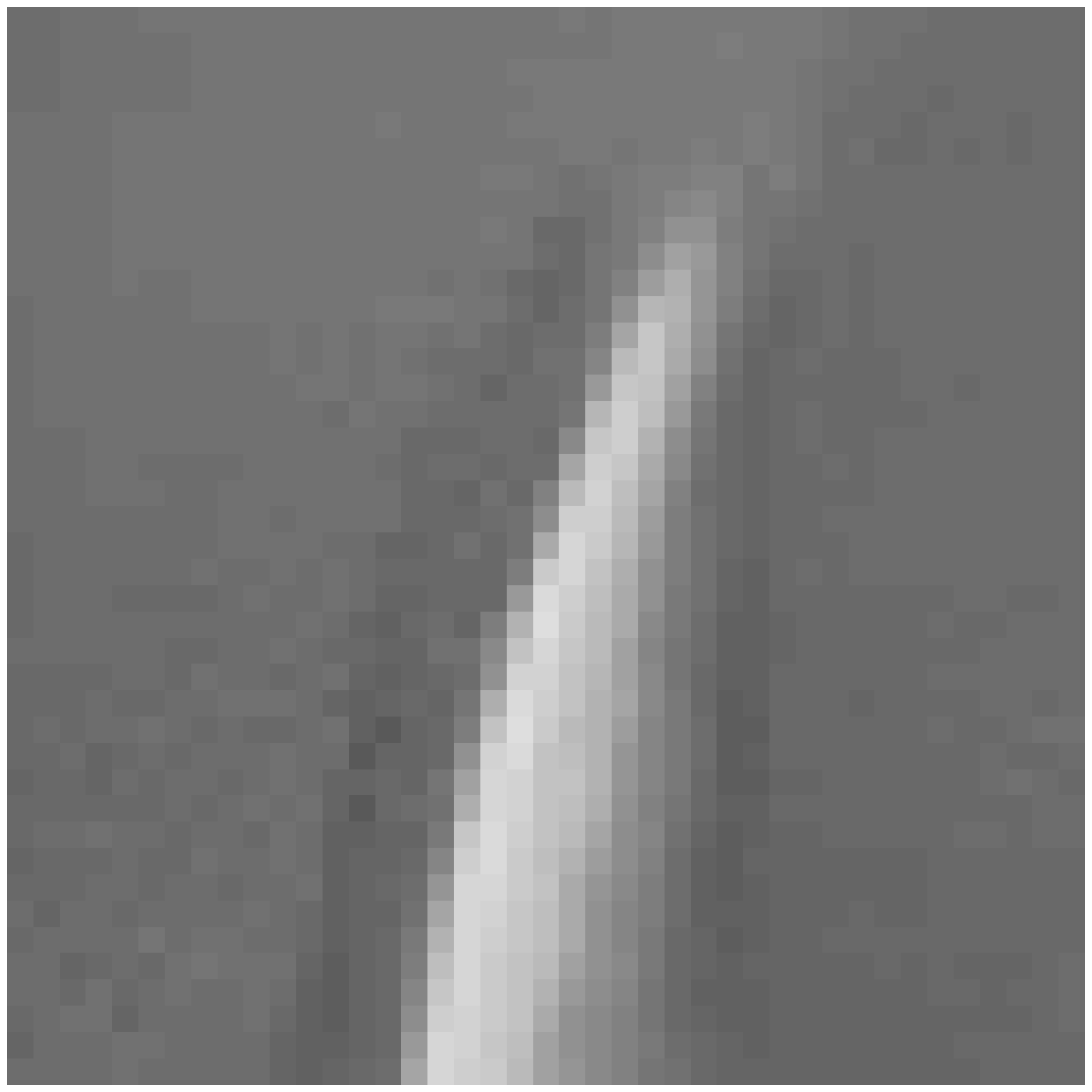} \\
\includegraphics[width=0.18\linewidth]{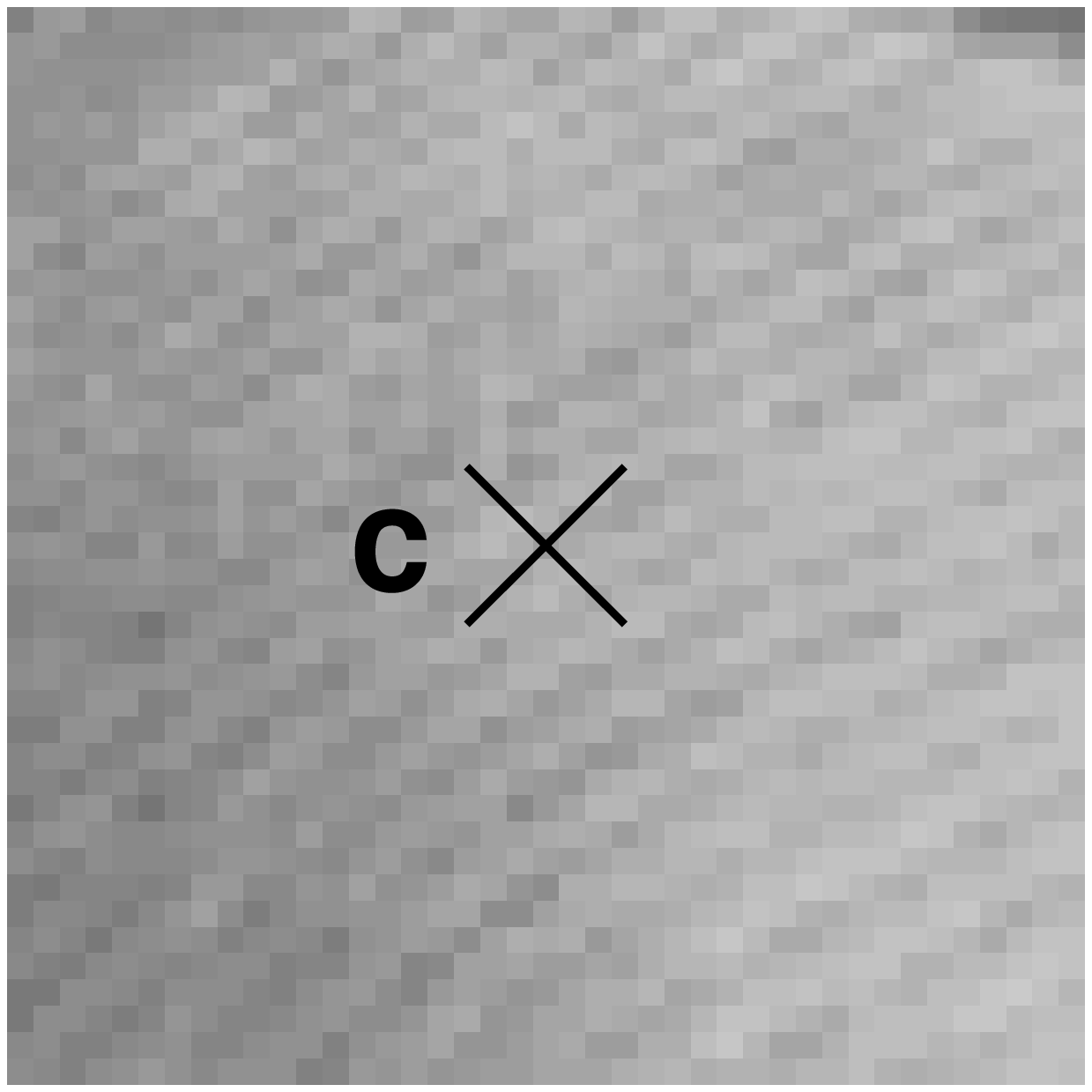} & %
\includegraphics[width=0.18\linewidth]{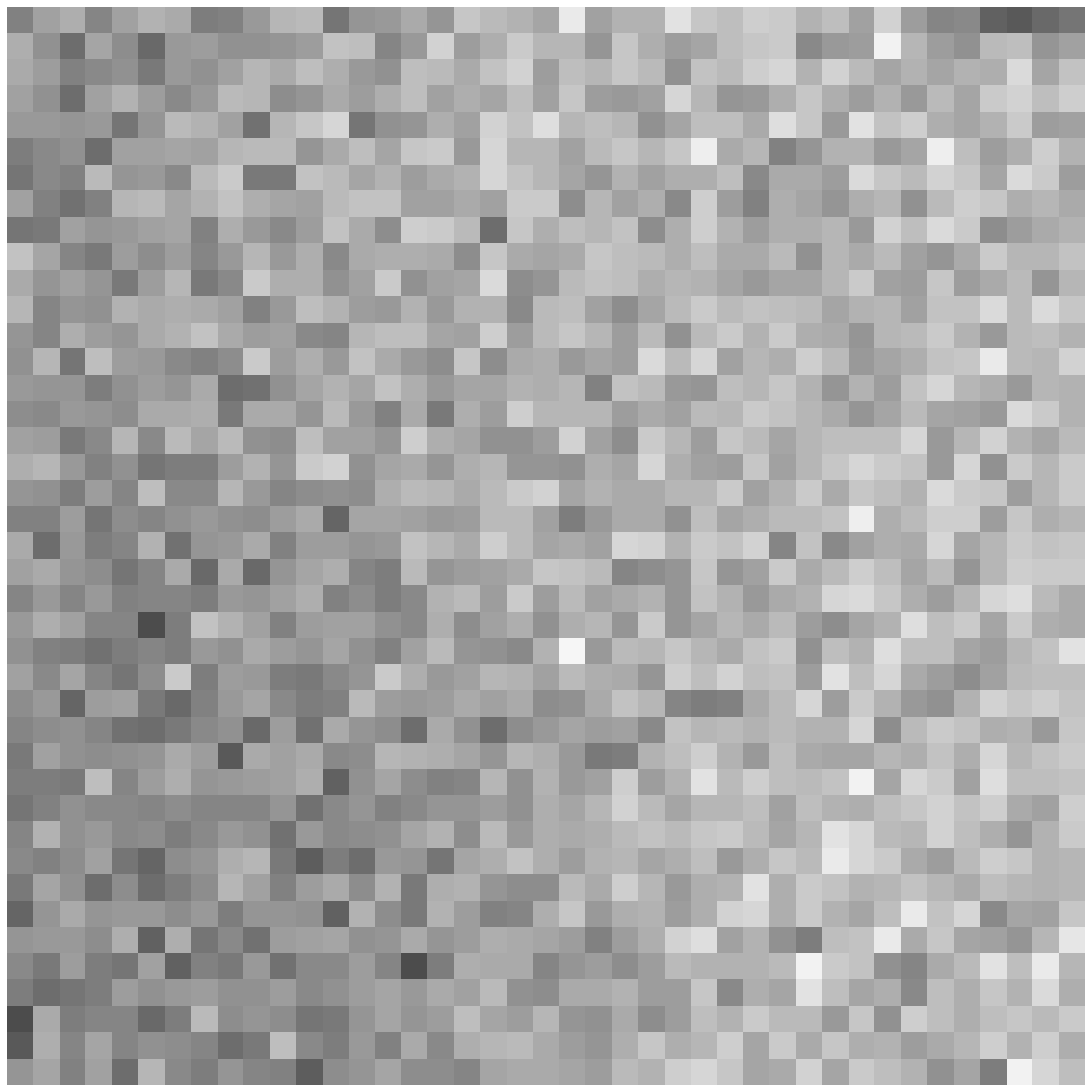} & %
\includegraphics[width=0.18\linewidth]{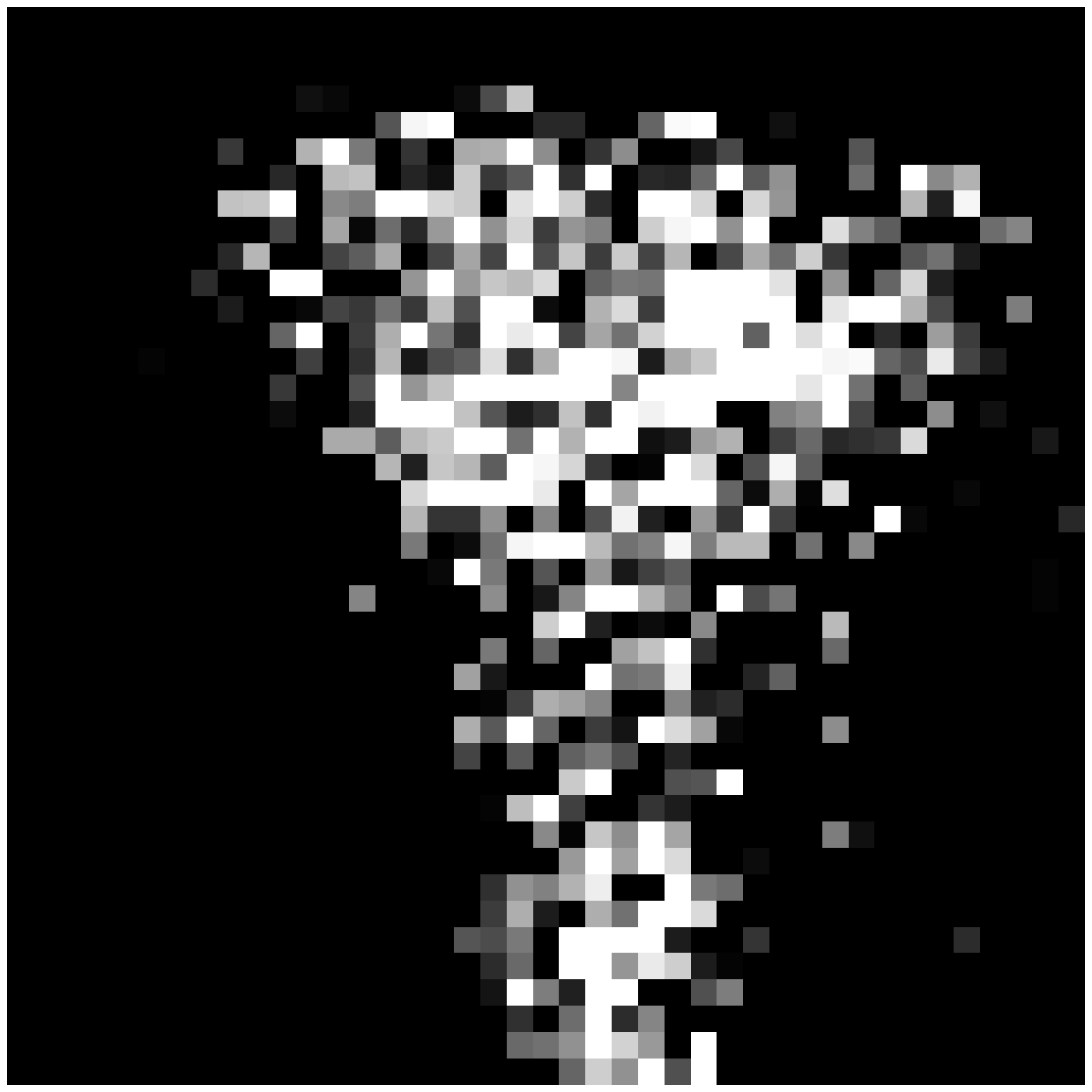} & %
\includegraphics[width=0.18\linewidth]{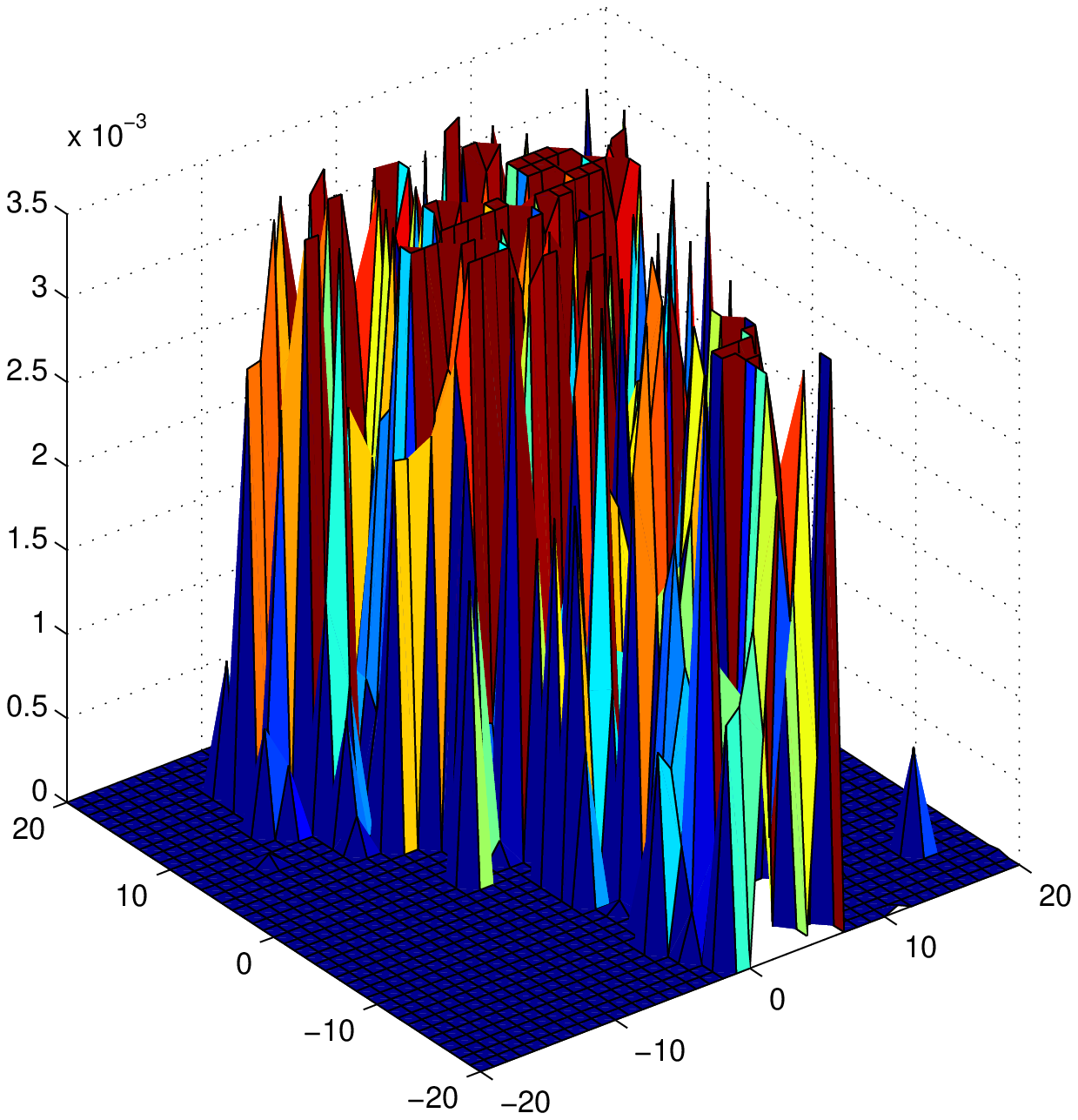} & %
\includegraphics[width=0.18\linewidth]{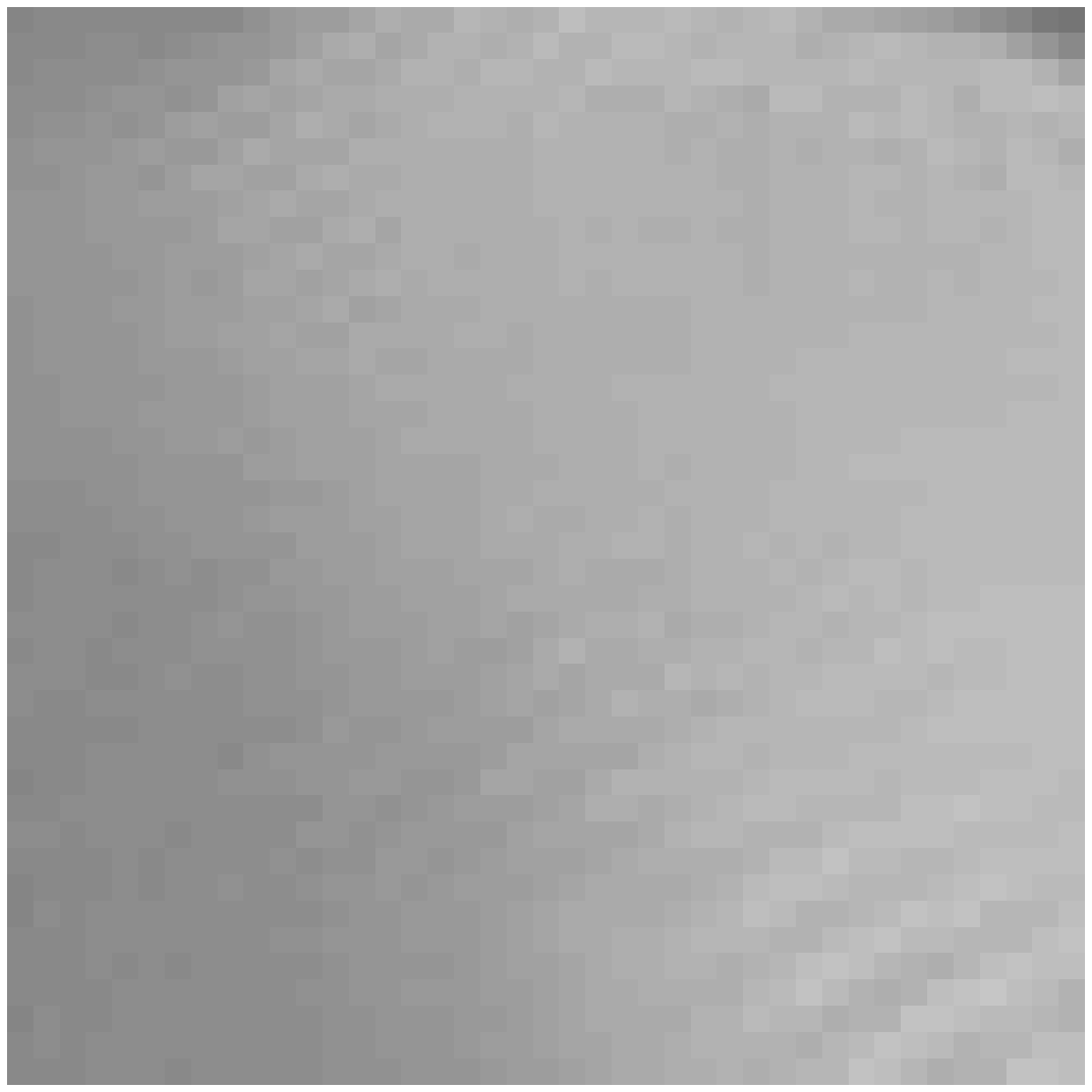} \\
\includegraphics[width=0.18\linewidth]{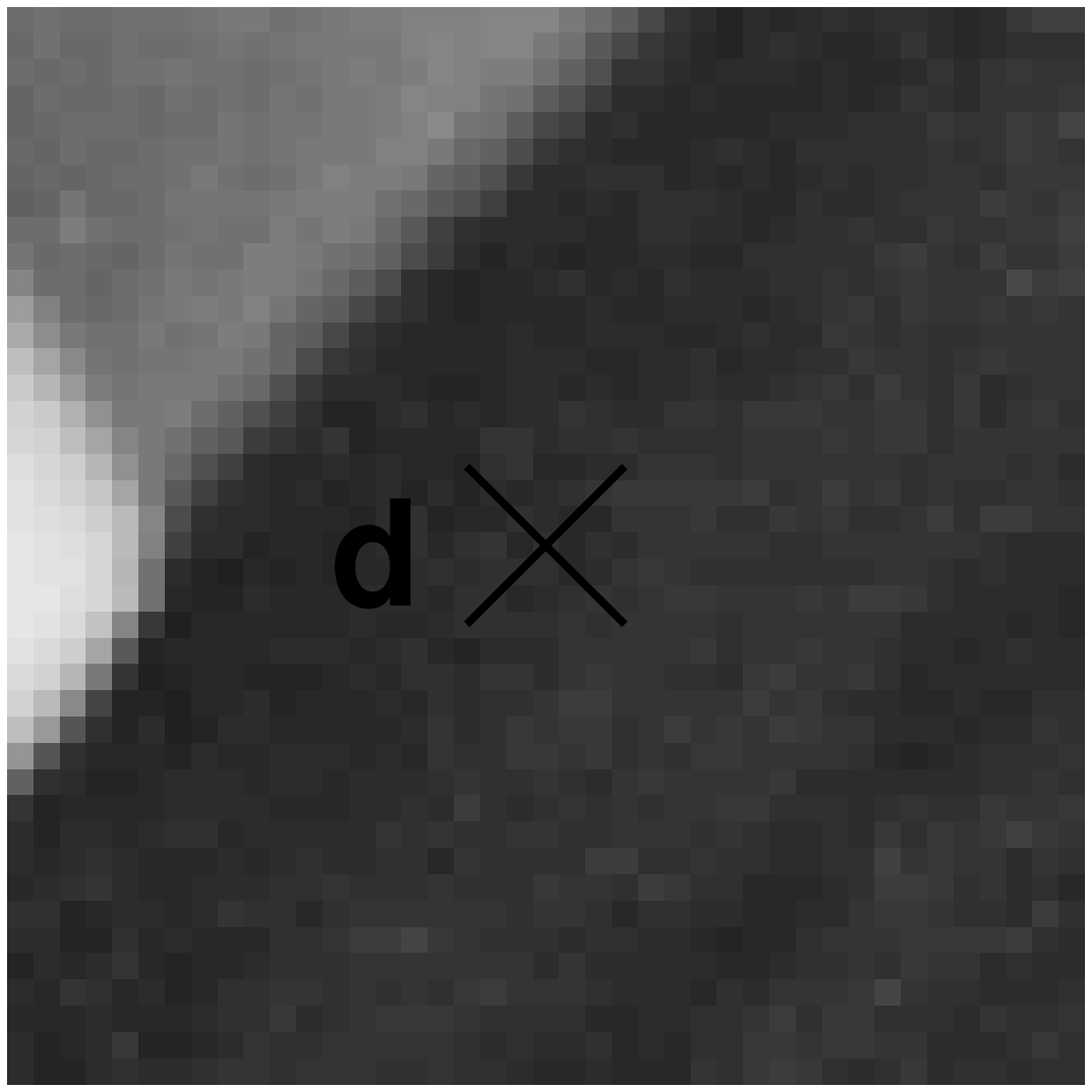} & %
\includegraphics[width=0.18\linewidth]{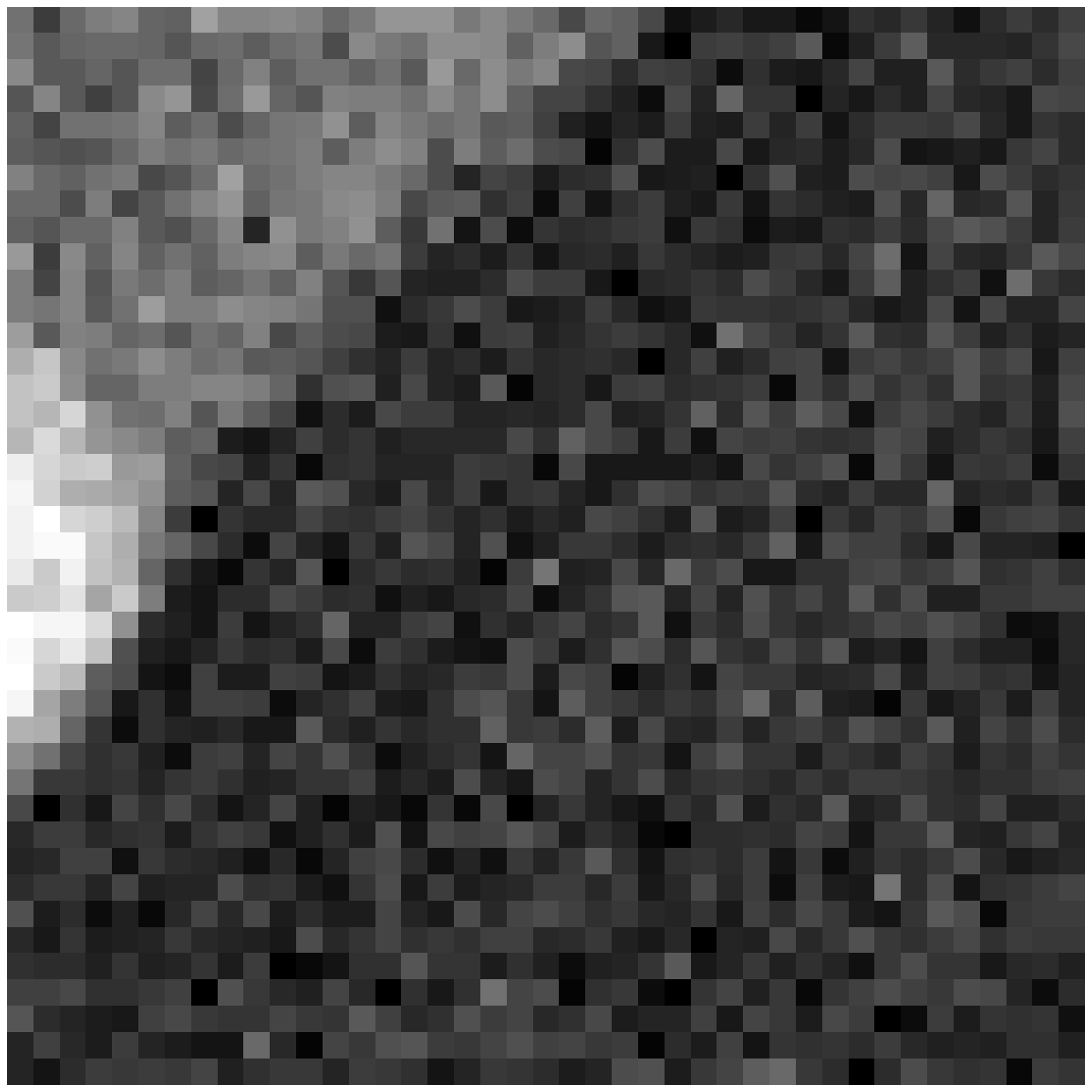} & %
\includegraphics[width=0.18\linewidth]{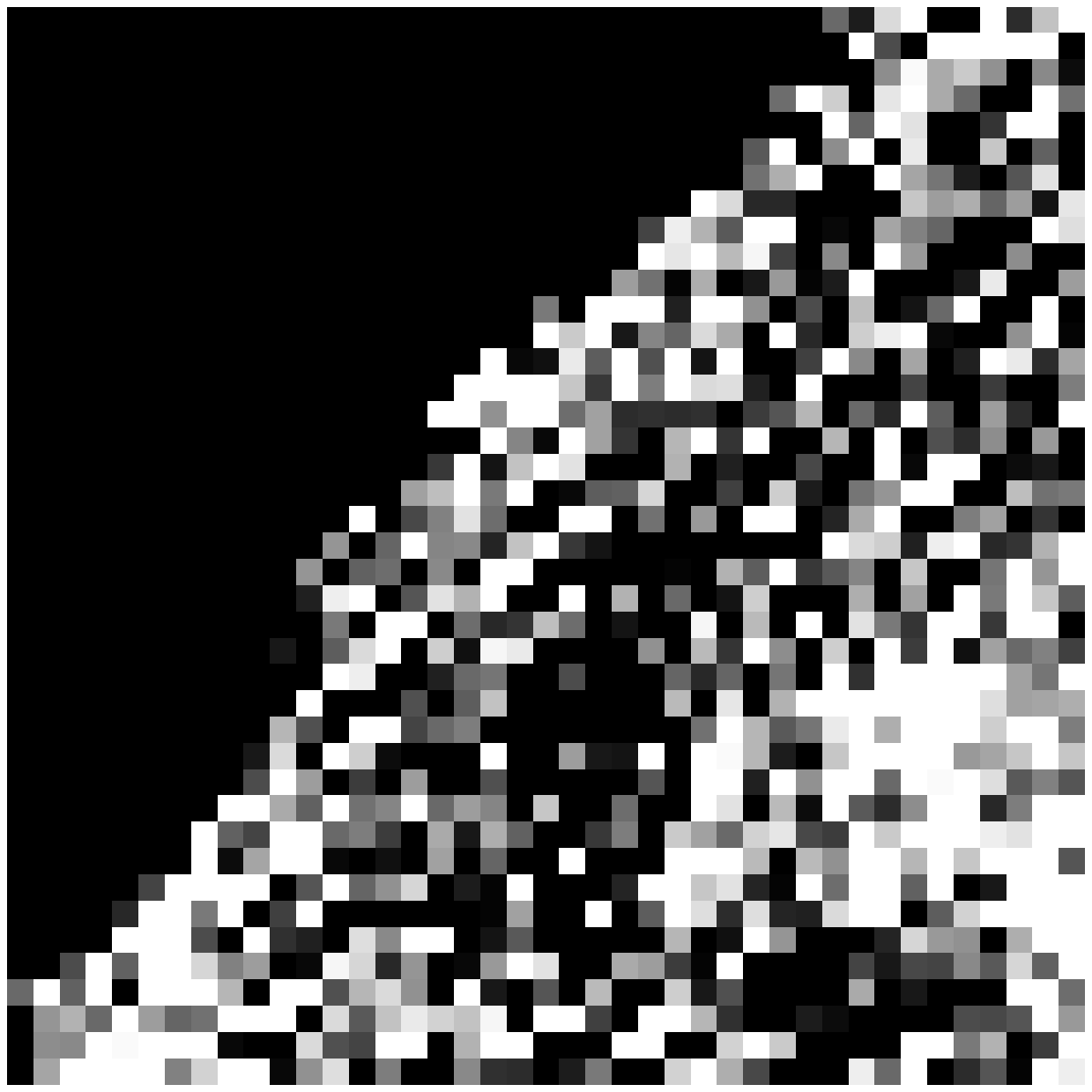} & %
\includegraphics[width=0.18\linewidth]{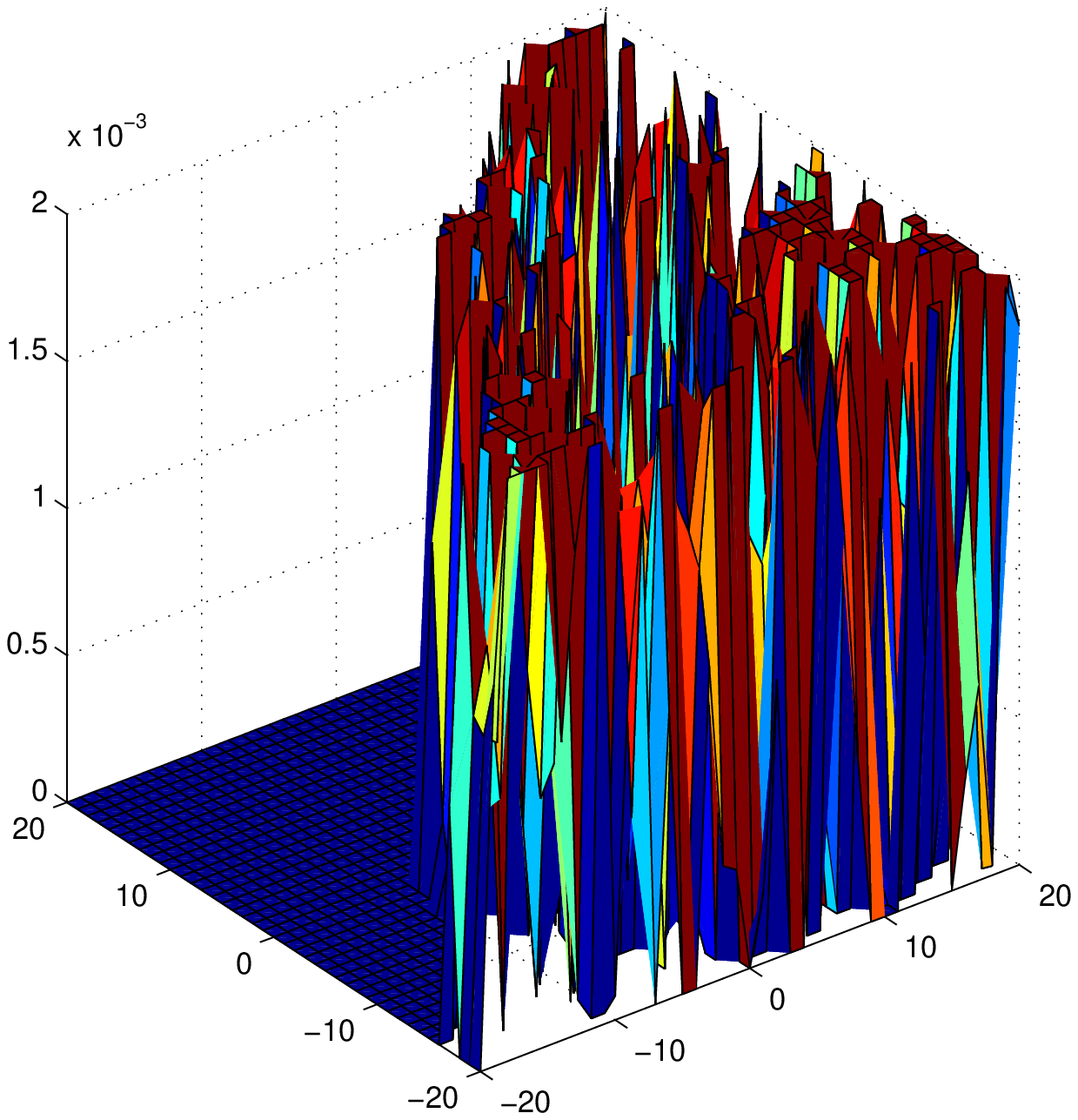} & %
\includegraphics[width=0.18\linewidth]{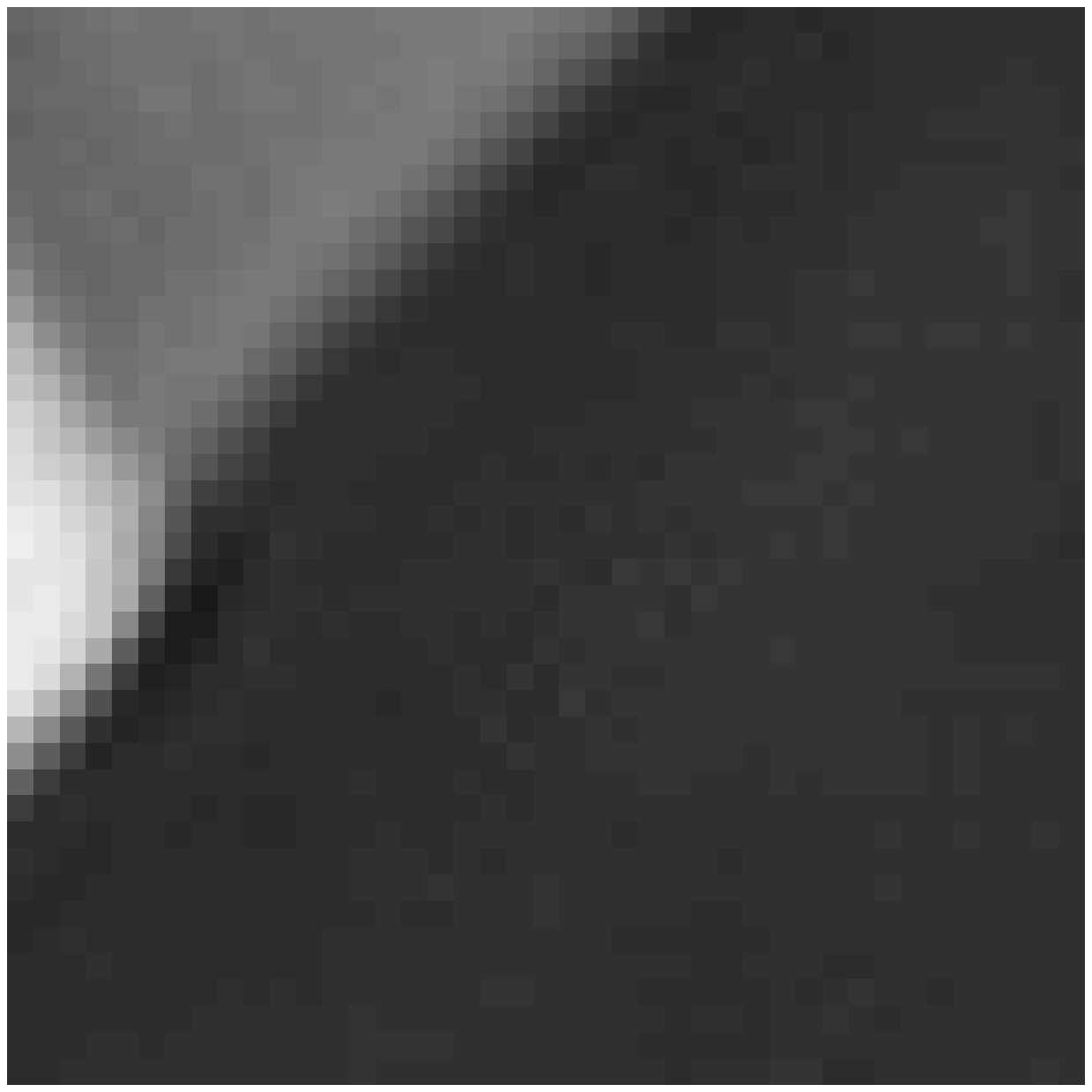} \\
\includegraphics[width=0.18\linewidth]{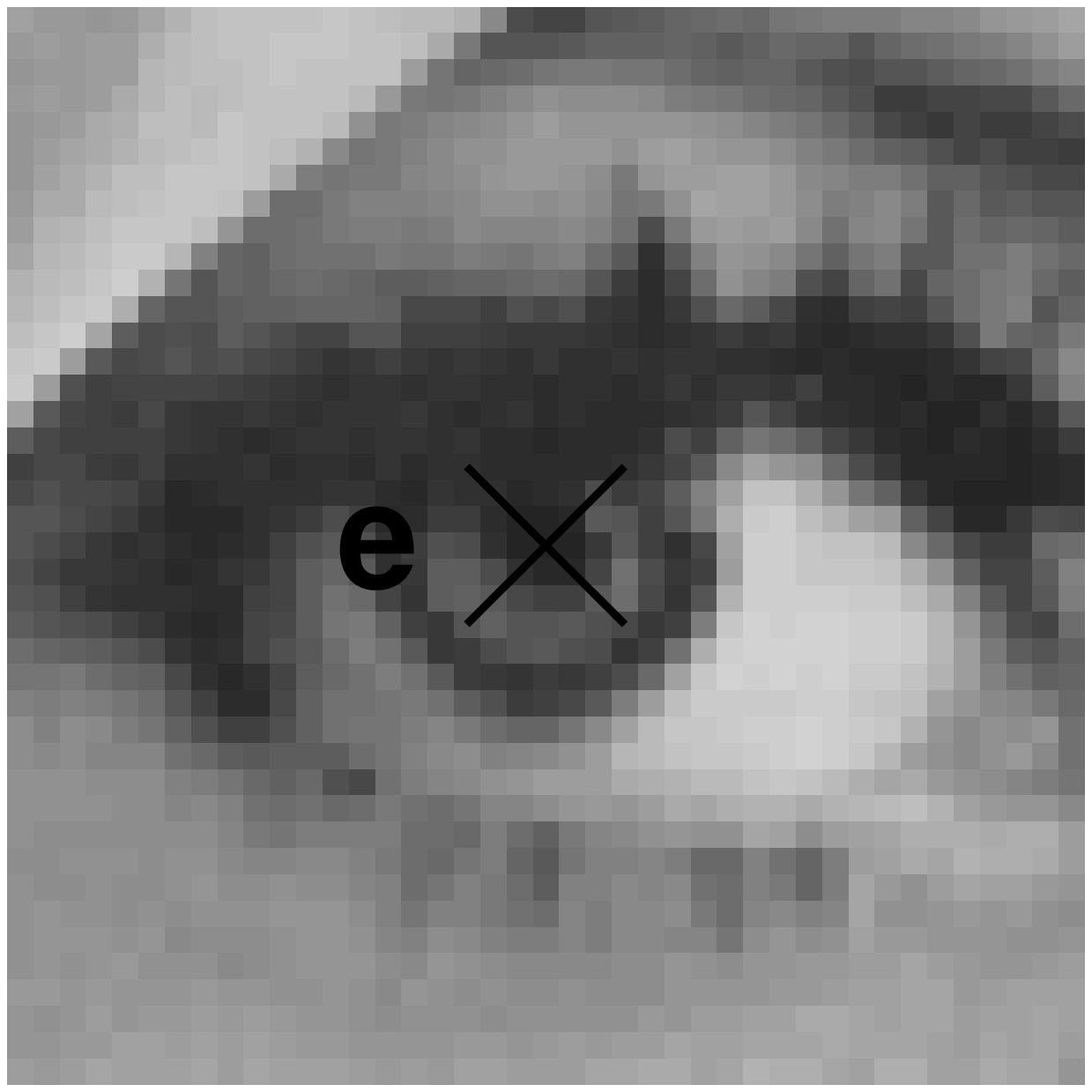} & %
\includegraphics[width=0.18\linewidth]{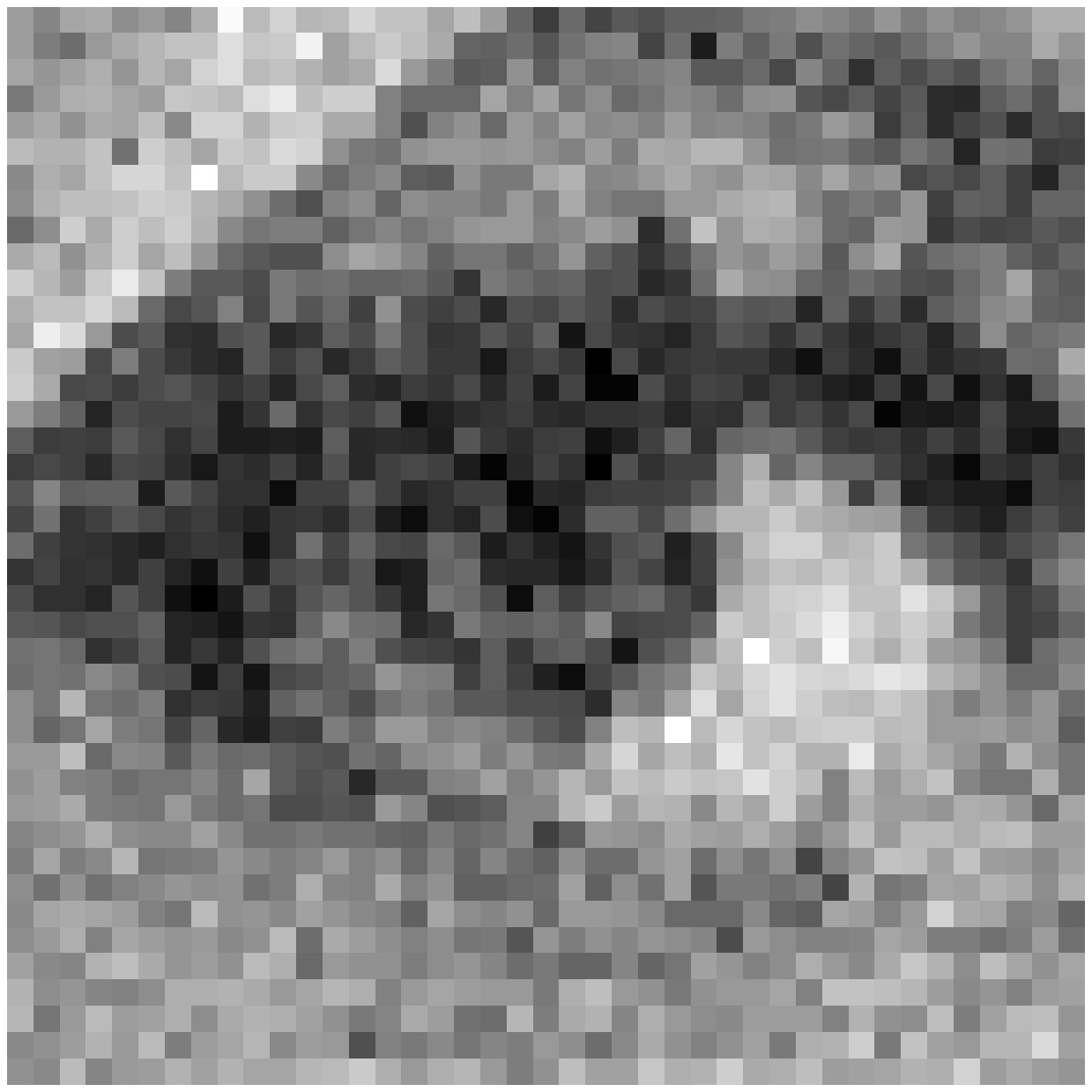} & %
\includegraphics[width=0.18\linewidth]{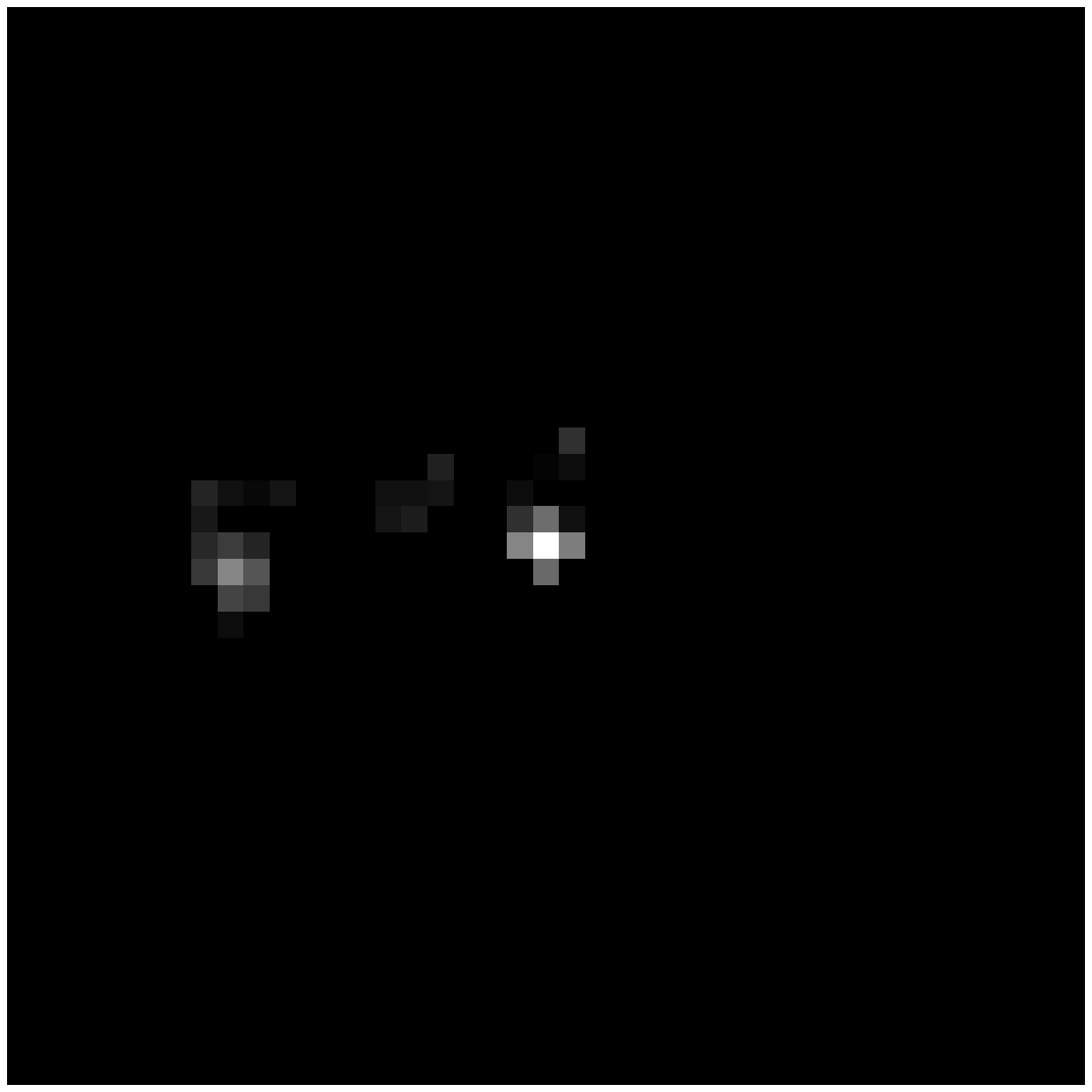} & %
\includegraphics[width=0.18\linewidth]{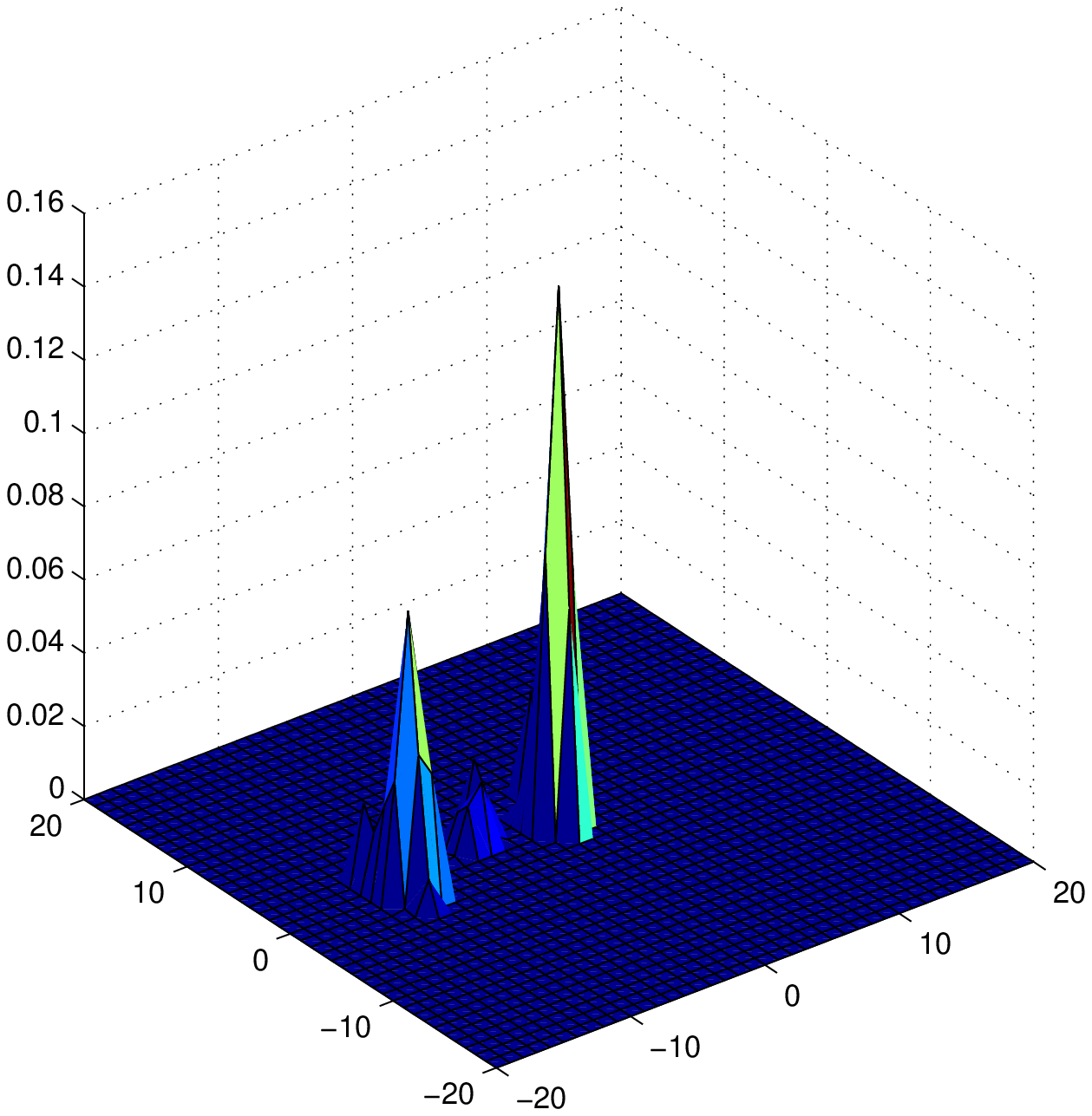} & %
\includegraphics[width=0.18\linewidth]{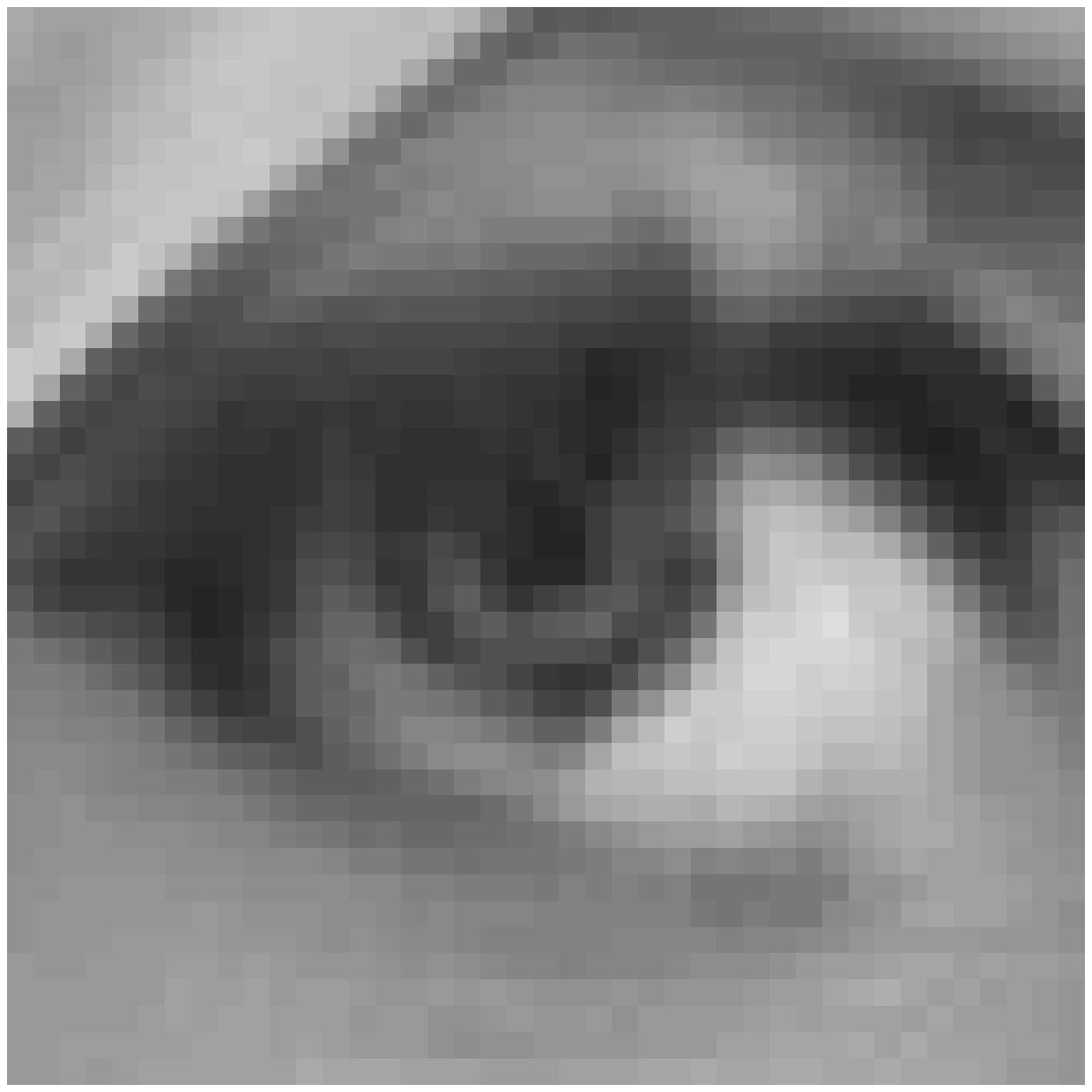} \\
\includegraphics[width=0.18\linewidth]{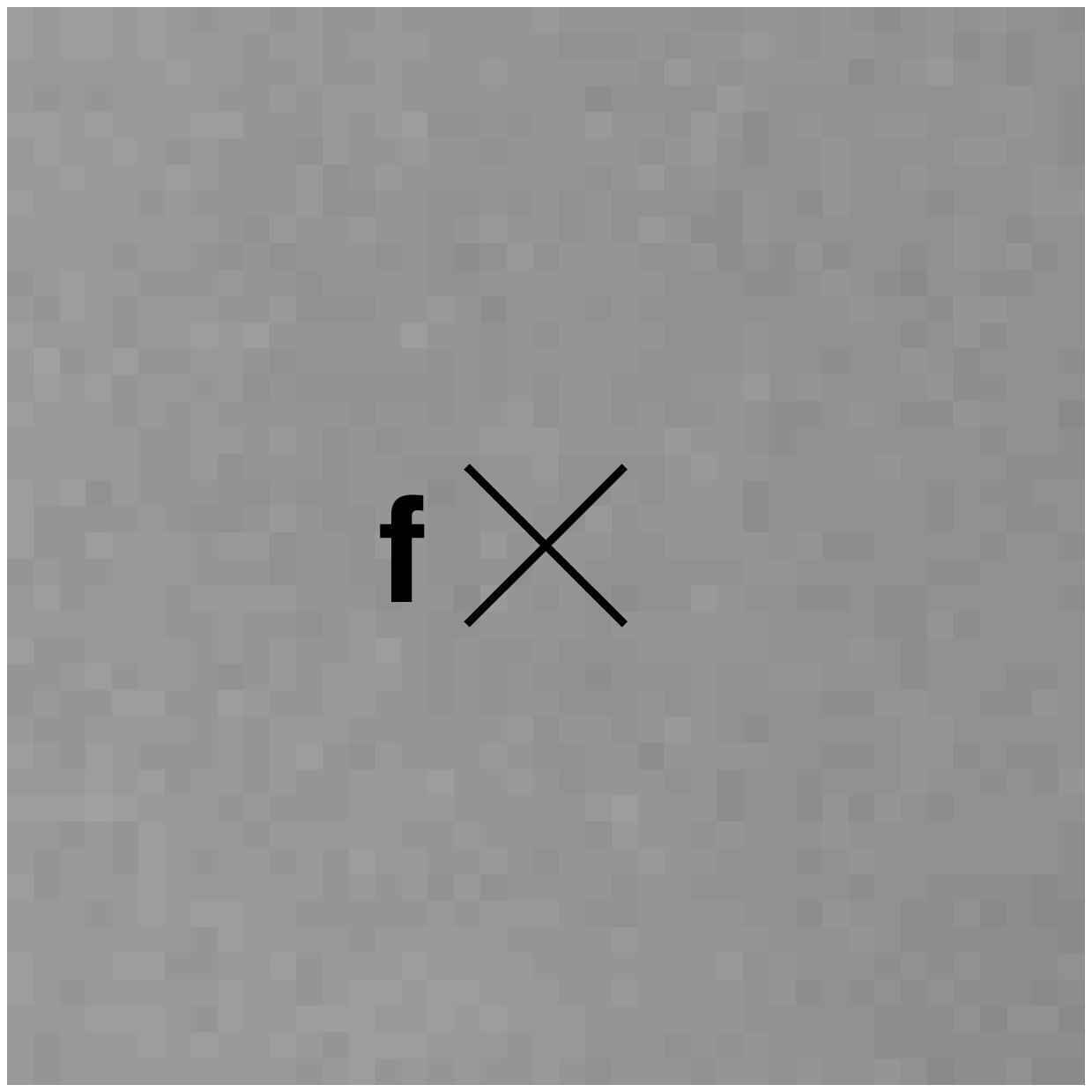} & %
\includegraphics[width=0.18\linewidth]{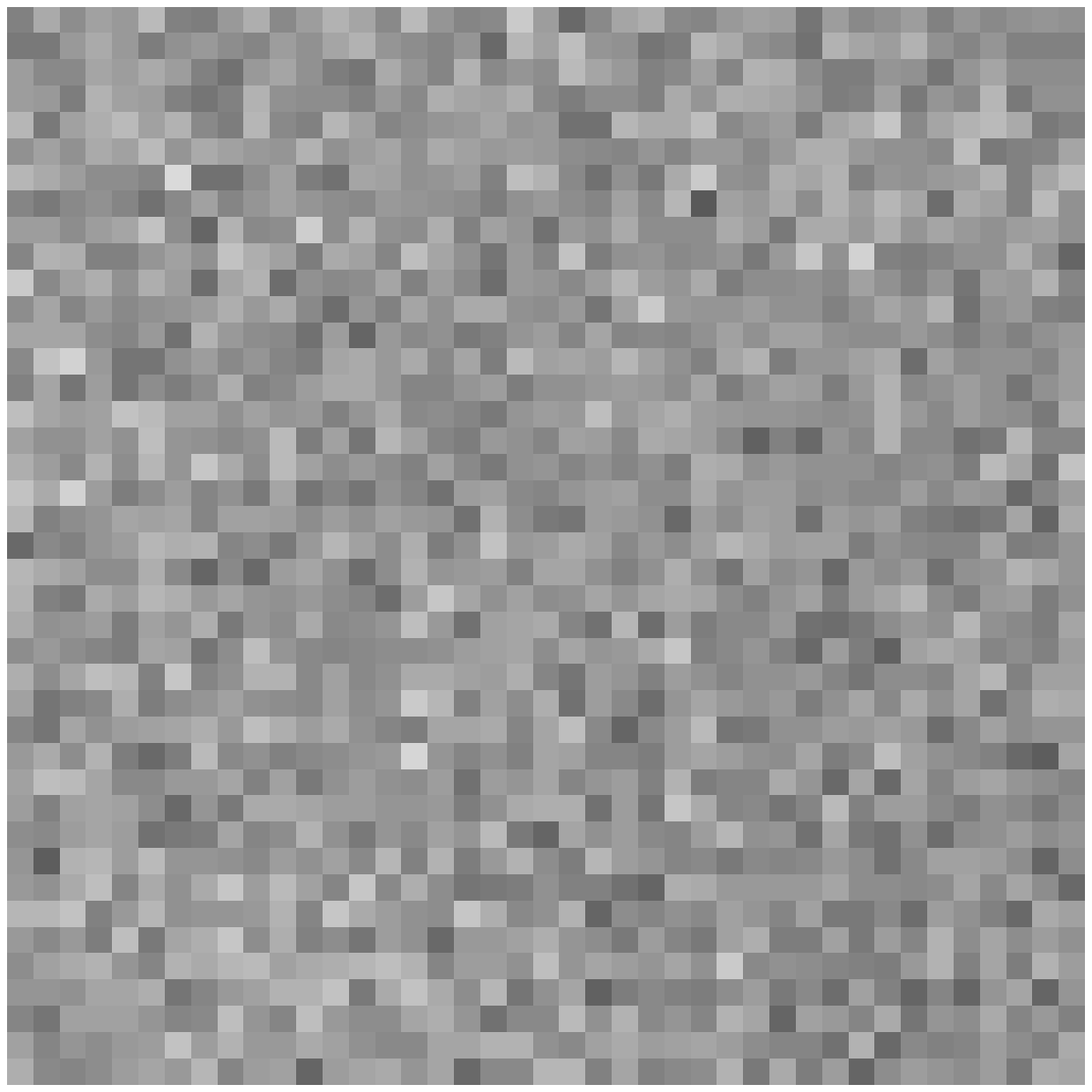} & %
\includegraphics[width=0.18\linewidth]{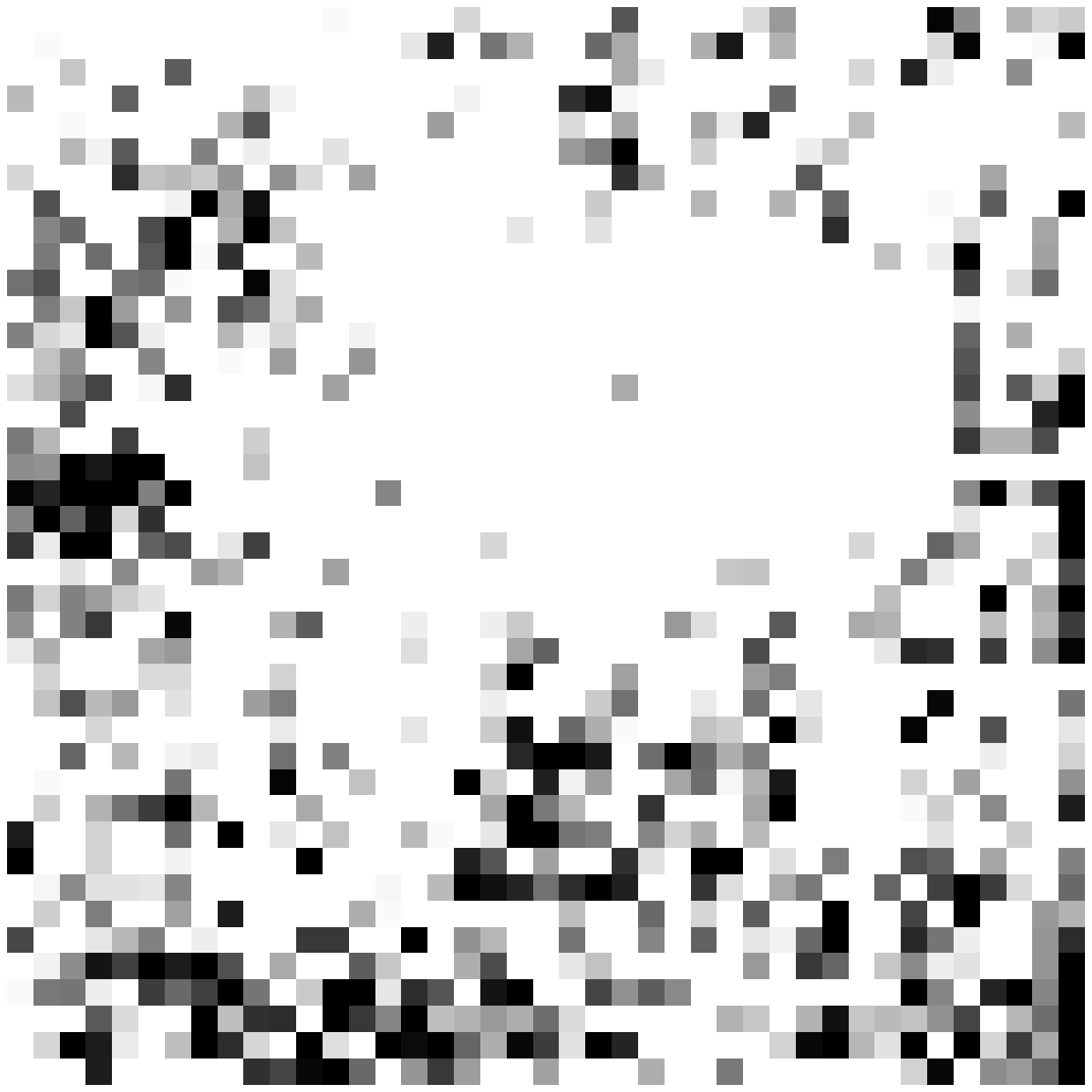} & %
\includegraphics[width=0.18\linewidth]{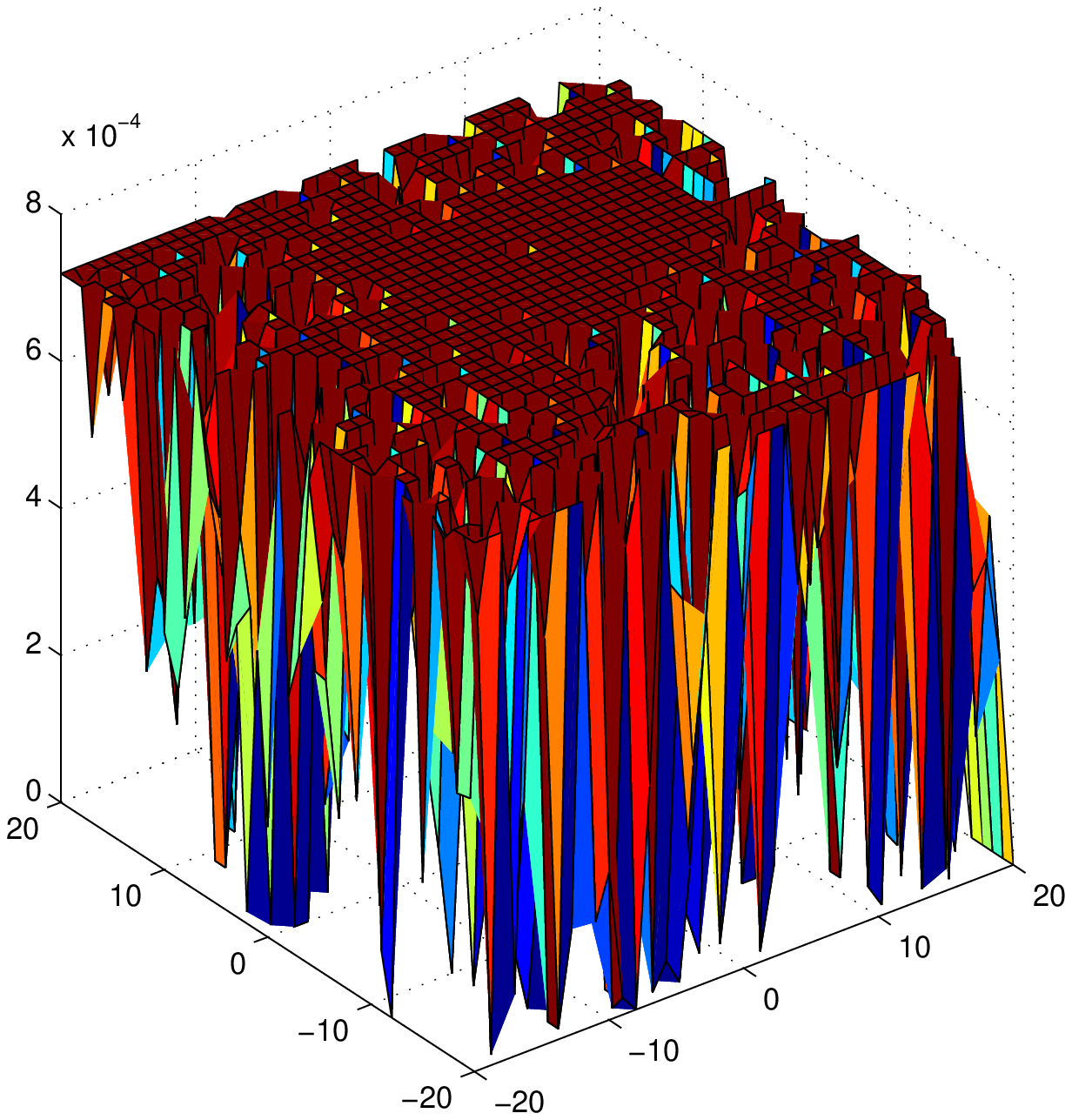} & %
\includegraphics[width=0.18\linewidth]{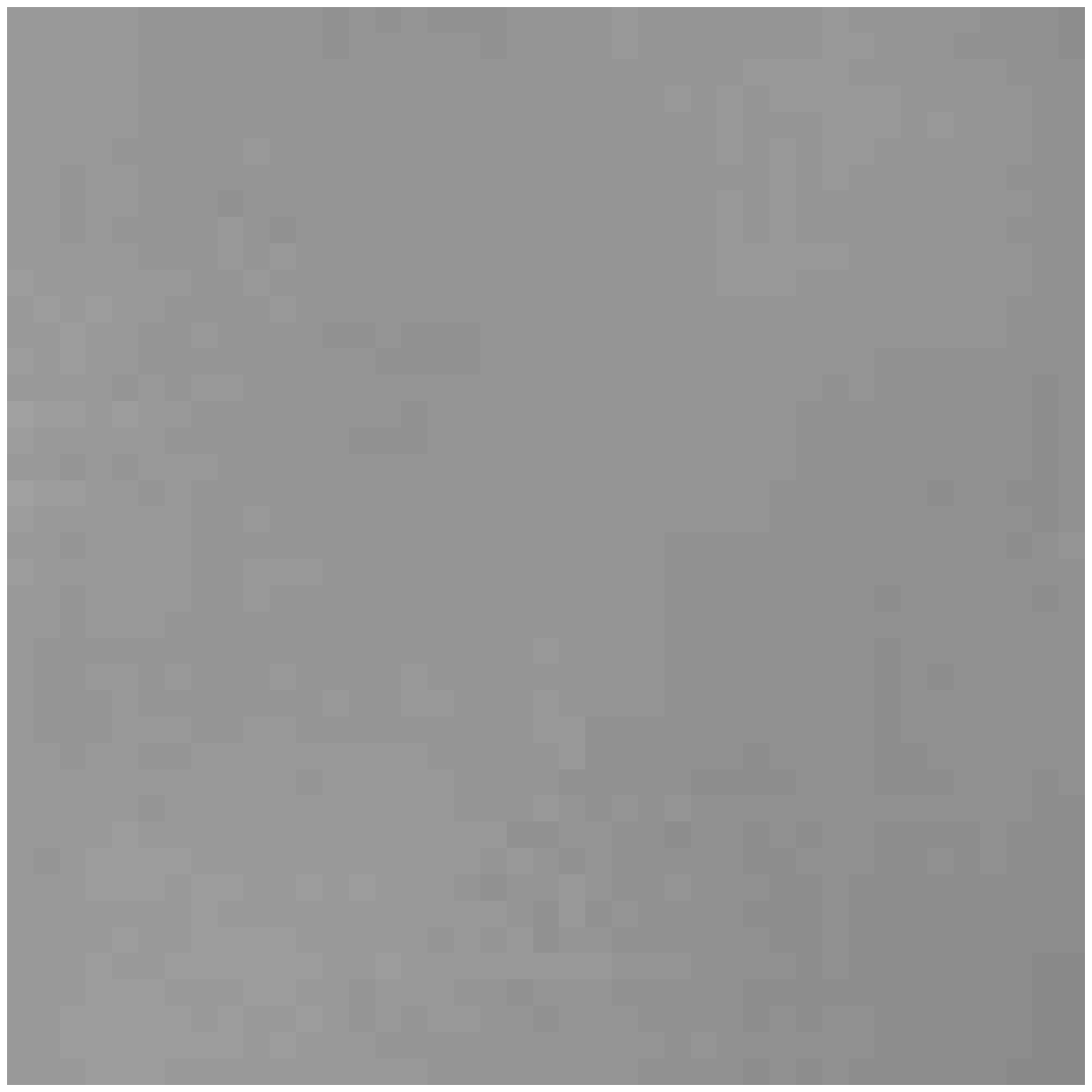}%
\end{tabular}
} \vskip1mm
\par
\rule{0pt}{-0.2pt}%
\par
\vskip1mm 
\end{center}
\caption{\small These pictures show how the Optimal Weights Filter detects the
features of the image by choosing  appropriate weights. The first column
displays six selected search windows used to estimate the image at the
corresponding central pixels a, b, c, d, e and f. The second column displays the corresponding
search windows corrupted by a Gaussian noise with standard deviation $%
\protect\sigma =20.$ The third column displays the two-dimensional representation of the weights used to estimate
central pixels. The fourth column gives the three-dimensional representation
of the weights. The fifth column gives the restored images.}
\label{Fig weights}
\end{figure}

The best numerical results are obtained using $K=K_{g}$ and $K=K_{0}$ in the
definition of $\widehat{\rho }_{K,x_0}.$ In Table\ \ref{Table compar},  we compare the Non-Local Mean Filter and the Optimal Weights filter with different choices of the kernel: $K=K_{g},K_{0},K_{r}.$ The best PSNR values we obtained by varying
the size $m$ of the similarity windows and the size $M$ of the search windows are
reported in Tables\ \ref{table11} ($\sigma =10$), \ref{table12} ($\sigma =20$%
) and \ref{table13} ($\sigma =30$) for $K=K_{0}.$ Note that the PSNR values
are close for every $m$ and $M$ and the optimal $m$ and $M$ depend on the
image content. The values $m=21\times 21$ and $M=13\times 13$ seem
appropriate in most cases and a smaller patch size $m$ can be considered for
processing piecewise smooth images.

\begin{table}[tbp]
\begin{center}
\renewcommand{\arraystretch}{0.6} \vskip3mm {\fontsize{8pt}{\baselineskip}%
\selectfont
\begin{tabular}{c|ccccc}
\hline
Images & Lena & Barbara & Boat & House & Peppers \\
Sizes & $512 \times 512$ & $512 \times 512$ & $512 \times 512$ & $256 \times
256$ & $256 \times 256$ \\ \hline\hline
$\sigma /PSNR$ & 10/28.12db & 10/28.12db & 10/28.12db & 10/28.11db &
10/28.11db \\ \hline
OWF with $K_{r}$ & 35.23db & 33.89db & 33.07db & 35.57db & 33.74db \\
OWF with $K_{g}$ & 35.49db & 34.13db & 33.40db & 35.83db & 33.97db \\
OWF with $K_{0}$ & 35.52db & 34.10db & 33.48db & 35.80db & 33.96db \\
NLMF & 35.03db & 33.77db & 32.85db & 35.43db & 33.27db \\ \hline\hline
$\sigma /PSNR$ & 20/22.11db & 20/22.11db & 20/22.11db & 20/28.12db &
20/28.12db \\ \hline
OWF with $K_{r}$ & 32.24db & 30.71db & 29.65db & 32.59db & 30.17db \\
OWF with $K_{g}$ & 32.61db & 31.01db & 30.05db & 32.88db & 30.44db \\
OWF with $K_{0}$ & 32.52db & 31.00db & 30.20db & 32.90db & 30.66db \\
NLMF & 31.73db & 30.36db & 29.58db & 32.51db & 30.11db \\ \hline\hline
$\sigma /PSNR$ & 30/18.60db & 30/18.60db & 30/18.60db & 30/18.61db &
30/18.61db \\ \hline
OWF with $K_{r}$ & 30.26db & 28.59db & 27.69db & 30.49db & 27.93db \\
OWF with $K_{g}$ & 30.66db & 28.97db & 28.05db & 30.81db & 28.16db \\
OWF with $K_{0}$ & 30.50db & 28.89db & 28.23db & 30.80db & 28.49db \\
NLMF & 29.56db & 27.88db & 27.50db & 30.02db & 27.77db \\ \hline
\end{tabular}
} \vskip1mm
\end{center}
\caption{
\small Comparison between the Non-Local Means Filter (NLMF) and the
Optimal Weights Filter (OWF). }
\label{Table compar}
\end{table}

\begin{table}[tbp]
\begin{center}
\renewcommand{\arraystretch}{0.6} \vskip3mm {\fontsize{8pt}{\baselineskip}%
\selectfont
\begin{tabular}{c|ccccc}
\hline
$\sigma = 10$ & Lena & Barbara & Boat & House & Peppers \\
$m/M$ & $512 \times 512$ & $512 \times 512$ & $512 \times 512$ & $256 \times
256$ & $256 \times 256$ \\ \hline
$11 \times 11/11 \times 11$ & 35.35db & 34.03db & 33.43db & 35.69db & 34.16db
\\
$13 \times 13/11 \times 11 $ & 35.40db & 34.06db & 33.45db & 35.72db &
34.14db \\
$15 \times 15/11 \times 11 $ & 35.44db & 34.07db & 33.47db & 35.73db &
34.10db \\
$17 \times 17/11 \times 11 $ & 35.47db & 34.08db & 33.47db & 35.74db &
34.06db \\
$19 \times 19/11 \times 11 $ & 35.50db & 34.07db & 33.48db & 35.74db &
34.02db \\
$21 \times 21/11 \times 11 $ & 35.52db & 34.06db & 33.47db & 35.73db &
33.97db \\ \hline
$11 \times 11/13 \times 13 $ & 35.35db & 34.08db & 33.43db & 35.77db &
34.15db \\
$13 \times 13/13 \times 13 $ & 35.40db & 34.11db & 33.46db & 35.79db &
34.12db \\
$15 \times 15/13 \times 13 $ & 35.44db & 34.12db & 33.47db & 35.80db &
34.09db \\
$17 \times 17/13 \times 13 $ & 35.47db & 34.12db & 33.48db & 35.81db &
34.05db \\
$19 \times 19/13 \times 13 $ & 35.50db & 34.12db & 33.48db & 35.81db &
34.01db \\
$21 \times 21/13 \times 13 $ & 35.52db & 34.10db & 33.48db & 35.80db &
33.96db \\ \hline
$11 \times 11/15 \times 15 $ & 35.33db & 34.11db & 33.43db & 35.82db &
34.14db \\
$13 \times 13/15 \times 15 $ & 35.39db & 34.13db & 33.45db & 35.84db &
34.11db \\
$15 \times 15/15 \times 15 $ & 35.43db & 34.14db & 33.47db & 35.85db &
34.08db \\
$17 \times 17/15 \times 15 $ & 35.47db & 34.14db & 33.48db & 35.86db &
34.04db \\
$19 \times 19/15 \times 15 $ & 35.49db & 34.14db & 33.48db & 35.85db &
34.00db \\
$21 \times 21/15 \times 15 $ & 35.52db & 34.12db & 33.48db & 35.84db &
33.96db \\ \hline
$11 \times 11/17 \times 17 $ & 35.32db & 34.13db & 33.42db & 35.86db &
34.12db \\
$13 \times 13/17 \times 17 $ & 35.37db & 34.15db & 33.44db & 35.88db &
34.10db \\
$15 \times 15/17 \times 17 $ & 35.42db & 34.16db & 33.46db & 35.89db &
34.07db \\
$17 \times 17/17 \times 17 $ & 35.46db & 34.16db & 33.47db & 35.89db &
34.03db \\
$19 \times 19/17 \times 17 $ & 35.48db & 34.15db & 33.47db & 35.88db &
34.00db \\
$21 \times 21/17 \times 17 $ & 35.51db & 34.14db & 33.47db & 35.87db &
33.95db \\ \hline
\end{tabular}
} \vskip1mm
\end{center}
\caption{\small PSNR values when Optimal Weights Filter with $K=K_0$ is applied
with different values of $m$ and $M$ ($\protect\sigma=10$).}
\label{table11}
\end{table}

\begin{table}[tbp]
\begin{center}
\renewcommand{\arraystretch}{0.6} \vskip3mm {\fontsize{8pt}{\baselineskip}%
\selectfont
\begin{tabular}{c|ccccc}
\hline
$\sigma = 20$ & Lena & Barbara & Boat & House & Peppers \\
$m/M$ & $512 \times 512$ & $512 \times 512$ & $512 \times 512$ & $256 \times
256$ & $256 \times 256$ \\ \hline
$11 \times 11/11 \times 11 $ & 32.08db & 30.60db & 30.00db & 32.56db &
30.65db \\
$13 \times 13/11 \times 11 $ & 32.20db & 30.70db & 30.06db & 32.64db &
30.68db \\
$15 \times 15/11 \times 11 $ & 32.30db & 30.78db & 30.11db & 32.71db &
30.70db \\
$17 \times 17/11 \times 11 $ & 32.39db & 30.84db & 30.15db & 32.76db &
30.70db \\
$19 \times 19/11 \times 11 $ & 32.47db & 30.88db & 30.18db & 32.79db &
30.70db \\
$21 \times 21/11 \times 11 $ & 32.53db & 30.91db & 30.21db & 32.81db &
30.69db \\ \hline
$11 \times 11/13 \times 13 $ & 32.06db & 30.67db & 29.99db & 32.63db &
30.61db \\
$13 \times 13/13 \times 13 $ & 32.18db & 30.78db & 30.05db & 32.71db &
30.64db \\
$15 \times 15/13 \times 13 $ & 32.29db & 30.86db & 30.10db & 32.79db &
30.66db \\
$17 \times 17/13 \times 13 $ & 32.38db & 30.92db & 30.14db & 32.84db &
30.67db \\
$19 \times 19/13 \times 13 $ & 32.46db & 30.97db & 30.18db & 32.88db &
30.67db \\
$21 \times 21/13 \times 13 $ & 32.52db & 31.00db & 30.20db & 32.90db &
30.66db \\ \hline
$11 \times 11/15 \times 15 $ & 32.02db & 30.71db & 29.97db & 32.67db &
30.56db \\
$13 \times 13/15 \times 15 $ & 32.15db & 30.82db & 30.03db & 32.76db &
30.59db \\
$15 \times 15/15 \times 15 $ & 32.26db & 30.90db & 30.08db & 32.83db &
30.62db \\
$17 \times 17/15 \times 15 $ & 32.35db & 30.96db & 30.12db & 32.89db &
30.63db \\
$19 \times 19/15 \times 15 $ & 32.43db & 31.01db & 30.16db & 32.92db &
30.63db \\
$21 \times 21/15 \times 15 $ & 32.50db & 31.04db & 30.19db & 32.94db &
30.63db \\ \hline
$11 \times 11/17 \times 17 $ & 31.97db & 30.72db & 29.94db & 32.70db &
30.52db \\
$13 \times 13/17 \times 17 $ & 32.10db & 30.83db & 30.00db & 32.79db &
30.56db \\
$15 \times 15/17 \times 17 $ & 32.22db & 30.92db & 30.05db & 32.86db &
30.58db \\
$17 \times 17/17 \times 17 $ & 32.32db & 30.98db & 30.10db & 32.92db &
30.59db \\
$19 \times 19/17 \times 17 $ & 32.40db & 31.02db & 30.13db & 32.96db &
30.60db \\
$21 \times 21/17 \times 17 $ & 32.47db & 31.06db & 30.17db & 32.98db &
30.60db \\ \hline
\end{tabular}
} \vskip1mm
\end{center}
\caption{\small PSNR values when Optimal Weights Filter with $K=K_0$ is applied
with different values of $m$ and $M$ ($\protect\sigma=20$).}
\label{table12}
\end{table}

\begin{table}[tbp]
\begin{center}
\renewcommand{\arraystretch}{0.6} \vskip3mm {\fontsize{8pt}{\baselineskip}%
\selectfont
\begin{tabular}{c|ccccc}
\hline
$\sigma = 30$ & Lena & Barbara & Boat & House & Peppers \\
$m/M$ & $512 \times 512$ & $512 \times 512$ & $512 \times 512$ & $256 \times
256$ & $256 \times 256$ \\ \hline
$11 \times 11/11 \times 11 $ & 29.96db & 28.38db & 27.96db & 30.26db &
28.36db \\
$13 \times 13/11 \times 11 $ & 30.10db & 28.53db & 28.03db & 30.39db &
28.43db \\
$15 \times 15/11 \times 11 $ & 30.23db & 28.65db & 28.10db & 30.50db &
28.47db \\
$17 \times 17/11 \times 11 $ & 30.34db & 28.75db & 28.15db & 30.58db &
28.50db \\
$19 \times 19/11 \times 11 $ & 30.43db & 28.83db & 28.20db & 30.65db &
28.51db \\
$21 \times 21/11 \times 11$ & 30.50db & 28.81db & 28.23db & 30.70db & 28.52db
\\ \hline
$11 \times 11/13 \times 13 $ & 29.94db & 28.42db & 27.95db & 30.35db &
28.30db \\
$13 \times 13/13 \times 13 $ & 30.08db & 28.58db & 28.02db & 30.49db &
28.37db \\
$15 \times 15/13 \times 13 $ & 30.21db & 28.70db & 28.09db & 30.60db &
28.42db \\
$17 \times 17/13 \times 13 $ & 30.32db & 28.80db & 28.14db & 30.68db &
28.46db \\
$19 \times 19/13 \times 13 $ & 30.42db & 28.88db & 28.19db & 30.75db &
28.48db \\
$21 \times 21/13 \times 13$ & 30.50db & 28.89db & 28.23db & 30.80db & 28.49db
\\ \hline
$11 \times 11/15 \times 15 $ & 29.89db & 28.43db & 27.92db & 30.39db &
28.23db \\
$13 \times 13/15 \times 15 $ & 30.04db & 28.58db & 27.99db & 30.53db &
28.30db \\
$15 \times 15/15 \times 15 $ & 30.17db & 28.71db & 28.06db & 30.64db &
28.36db \\
$17 \times 17/15 \times 15 $ & 30.28db & 28.81db & 28.11db & 30.73db &
28.40db \\
$19 \times 19/15 \times 15 $ & 30.38db & 28.89db & 28.16db & 30.80db &
28.43db \\ \hline
$11 \times 11/17 \times 17 $ & 29.82db & 28.40db & 27.89db & 30.39db &
28.18db \\
$13 \times 13/17 \times 17 $ & 29.98db & 28.56db & 27.96db & 30.54db &
28.26db \\
$15 \times 15/17 \times 17 $ & 30.11db & 28.69db & 28.02db & 30.66db &
28.31db \\
$17 \times 17/17 \times 17 $ & 30.22db & 28.79db & 28.08db & 30.76db &
28.36db \\
$19 \times 19/17 \times 17 $ & 30.33db & 28.87db & 28.13db & 30.84db &
28.39db \\
$21 \times 21/17 \times 17 $ & 30.42db & 28.96db & 28.17db & 30.89db &
28.41db \\ \hline
\end{tabular}
} \vskip1mm
\end{center}
\caption{\small PSNR values when Optimal Weights Filter with $K=K_0$ is applied
with different values of $m$ and $M$ ($\protect\sigma=30$).}
\label{table13}
\end{table}

\section{\label{Sec:Appendix Proofs} Proofs of the main results}

\subsection{\label{Sec: proof of Th weights 001} Proof of Theorem \protect
\ref{Th weights 001}}

We begin with some preliminary results. The following lemma can be obtained from Theorem 1 of Sacks and Ylvisaker \citep{Sacks1978linear}. For the convenience of readers, we prefer to give a direct proof adapted to our situation.

\begin{lemma}
\label{Lemma weights}Let $g_{\rho}(w)$ be defined by (\ref{def gw}). Then there are
unique weights $w_{\rho}$ which minimize $g_{\rho}(w)$ subject to (\ref{s2wx}),
given by
\begin{equation}
w_{\rho}(x)=\frac{1}{\sigma ^{2}}(b-\lambda \rho (x))^{+},
\label{eq Lemma W000}
\end{equation}%
where $b$ and $\lambda $ are determined by%
\begin{eqnarray}
\sum_{x\in \mathbf{U}_{x_{0},h}}\frac{1}{\sigma ^{2}}(b-\lambda \rho
(x))^{+} &=&1,  \label{eq Lemma W001} \\
\sum_{x\in \mathbf{U}_{x_{0},h}}\frac{1}{\sigma ^{2}}(b-\lambda \rho
(x))^{+}\rho (x) &=&\lambda .  \label{eq Lemma W002}
\end{eqnarray}
\end{lemma}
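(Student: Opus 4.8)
The plan is to treat the minimization of $g_{\rho}$ in (\ref{def gw}) over the simplex defined by (\ref{s2wx}) as a finite-dimensional convex program and to read the minimizer off its Karush--Kuhn--Tucker (KKT) conditions. First I would record that $g_{\rho}$ is \emph{strictly} convex: writing $w$ and $\rho$ as vectors indexed by $\mathbf{U}_{x_{0},h}$, the objective is $g_{\rho}(w)=(\rho^{\top}w)^{2}+\sigma^{2}\|w\|_{2}^{2}$, whose Hessian $2\rho\rho^{\top}+2\sigma^{2}I$ is positive definite because $\sigma^{2}>0$. Since the feasible set (\ref{s2wx}) is nonempty, convex and compact, strict convexity yields existence and uniqueness of the minimizer $w_{\rho}$, which settles the uniqueness claim at once.

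Next I would derive the explicit form (\ref{eq Lemma W000}). Because all constraints are affine, the KKT conditions are both necessary and sufficient for global optimality, so it suffices to solve them. Introducing a multiplier $\mu$ for the equality $\sum_{x}w(x)=1$ and multipliers $\nu(x)\geq 0$ for $w(x)\geq 0$, stationarity of the Lagrangian reads
\begin{equation*}
2\Big(\sum_{y}w(y)\rho(y)\Big)\rho(x)+2\sigma^{2}w(x)-\mu-\nu(x)=0,\qquad x\in \mathbf{U}_{x_{0},h},
\end{equation*}
together with complementary slackness $\nu(x)w(x)=0$. Setting $\lambda=\sum_{y}w(y)\rho(y)\geq 0$ (nonnegative because $w,\rho\geq 0$) and $b=\mu/2$, I would split into two cases. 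On the support $\{w(x)>0\}$, where $\nu(x)=0$, stationarity gives $w(x)=\sigma^{-2}(b-\lambda\rho(x))$ with $b-\lambda\rho(x)>0$; off the support, where $w(x)=0$, it gives $\nu(x)=-2(b-\lambda\rho(x))\geq 0$, i.e. $b-\lambda\rho(x)\leq 0$. The two cases combine into the single truncated formula $w_{\rho}(x)=\sigma^{-2}(b-\lambda\rho(x))^{+}$, which is exactly (\ref{eq Lemma W000}).

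Finally I would turn the two remaining scalar conditions into (\ref{eq Lemma W001})--(\ref{eq Lemma W002}): the feasibility constraint $\sum_{x}w_{\rho}(x)=1$ is precisely (\ref{eq Lemma W001}), while the defining relation $\lambda=\sum_{y}w_{\rho}(y)\rho(y)$ is precisely (\ref{eq Lemma W002}). Since the unique minimizer must satisfy KKT, such a pair $(b,\lambda)$ exists; conversely, any $w$ of this form meeting the two equations is feasible and KKT, hence the minimizer by convexity.

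The hard part will be the bookkeeping around $\lambda$: it is not an externally prescribed parameter but equals the functional $\sum_{y}w(y)\rho(y)$ evaluated at the optimal $w$ itself, so (\ref{eq Lemma W000})--(\ref{eq Lemma W002}) is genuinely an implicit (fixed-point) characterization rather than a closed-form solution. I would therefore present the two equations as self-consistency conditions and check the sign accounting carefully ($\lambda\geq 0$, and the correct direction $b-\lambda\rho(x)\leq 0$ off the support) so that the positive-part truncation is truly forced by complementary slackness. Pinning down a concrete $(b,\lambda)$ --- equivalently the bandwidth $a=b/\lambda$ --- is deferred to Theorem~\ref{Th weights 001} and Remark~\ref{calculate a}, and so need not be resolved here.
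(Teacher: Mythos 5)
Your proposal is correct and follows essentially the same route as the paper: strict convexity of $g_{\rho}$ for existence and uniqueness, Lagrange/KKT stationarity with complementary slackness to force the positive-part truncation, the identification $\lambda=\sum_{y}w_{\rho}(y)\rho(y)$, and the two scalar equations as feasibility plus the self-consistency condition for $\lambda$. The only cosmetic difference is that you invoke standard KKT theory directly where the paper cites Whittle's Theorem 3.9 for the multipliers.
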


\begin{proof}
Let $w^{\prime }$ be a minimizer of $g_{\rho}\left( w\right) $ under the constraint
(\ref{s2wx}). According to Theorem 3.9 of Whittle (1971 \citep{Wh}), there
are Lagrange multipliers $b\geq 0$ and $b_{0}(x)\geq 0,$ $x\in \mathbf{U}%
_{x_{0},h},$ such that the function%
\begin{equation*}
G(w)=g_{\rho}(w)-2b(\sum_{x\in \mathbf{U}_{x_{0},h}}w(x)-1)-2\sum_{x\in \mathbf{U}%
_{x_{0},h}}b_{0}(x)w(x)
\end{equation*}%
is minimized at the same point $w^{\prime }.$ Since the function $G$ is
strictly convex it admits a unique point of minimum. This implies that
there is also a unique minimizer of $g_{\rho}\left( w\right) $ under the constraint
(\ref{s2wx}) which coincides with the unique minimizer of $G.$

Let $w_{\rho}$ be the unique minimizer of $G$ satisfying the constraint (%
\ref{s2wx}). Again, using the fact that $G$ is strictly convex, for any $%
x\in \mathbf{U}_{x_{0},h},$
\begin{equation}
\frac{\partial }{\partial w\left( x\right) }G\left( w\right)\bigg|_{w=w_{\rho}}
=2\left( \sum_{y\in \mathbf{U}_{x_{0},h}}w_{\rho}(y)\rho (y)\right) \rho
(x)+2\sigma ^{2}w_{\rho}(x)-2b-2b_{0}(x)\geq 0.  \label{s5kw1}
\end{equation}%
Note that in general we do not have an equality in (\ref{s5kw1}). In
addition, by the Karush-Kuhn-Tucker condition,
\begin{equation}
b_{0}(x)w_{\rho}(x)=0.  \label{s5bx}
\end{equation}
\par
Let
\begin{equation}
\lambda =\sum_{y\in \mathbf{U}_{x_{0},h}}w_{\rho}(y)\rho (y).
\label{lambda def}
\end{equation}%
Then (\ref{s5kw1}) becomes
\begin{equation}
\frac{\partial }{\partial w\left( x\right) }G\left( w\right)\bigg|_{w=w_{\rho}}
=\lambda \rho (x)+\sigma ^{2}w_{\rho}(x)-b-b_{0}(x)\geq 0,\quad x\in \mathbf{U}%
_{x_{0},h}.  \label{s5lr}
\end{equation}

If $b_{0}(x)=0,$ then, with respect to  the single variable $w(x)$ the function $G(w)$
attains its minimum at an interior point $w_{\rho}\left( x\right) \geq 0$,
so that we have
\begin{equation*}
\frac{\partial }{\partial w\left( x\right) }G\left( w\right)\bigg|_{w=w_{\rho}}
=\lambda \rho (x)+\sigma ^{2}w_{\rho}(x)-b=0.
\end{equation*}%
From this we obtain $b-\lambda \rho (x)=\sigma w_{\rho}(x)\geq 0$, so
\begin{equation*}
w_{\rho}(x)=\frac{(b-\lambda \rho (x))^{+}}{\sigma }.
\end{equation*}

If $b_{0}(x)>0$, by (\ref{s5bx}), we have $w_{\rho}(x)=0$. Consequently, from (%
\ref{s5lr}) we have
\begin{equation}
b-\lambda \rho (x)\leq -b_{0}(x)\leq 0,  \label{s5bl2}
\end{equation}
so that we get again%
\begin{equation*}
w_{\rho}(x)=0=\frac{(b-\lambda \rho (x))^{+}}{\sigma }.
\end{equation*}%
As to the conditions (\ref{eq Lemma W001}) and (\ref{eq Lemma W002}), they
follow immediately from  the constraint (\ref{s2wx}) and  the equation (%
\ref{lambda def}).
\end{proof}
\\
\\
{ \bf Proof of Theorem \ref{Th weights 001}}. Applying Lemma \ref%
{Lemma weights} with $b=\lambda a $, we see that the unique optimal weights $%
w$ minimizing $g_{\rho}(w)$ subject to (\ref{s2wx}), are given by
\begin{equation}
w_{\rho}=\frac{\lambda }{\sigma ^{2}}(a-\rho (x))^{+},  \label{s5wl}
\end{equation}%
where $a$ and $\lambda $ satisfy%
\begin{equation}
\lambda \sum_{x\in \mathbf{U}_{x_{0},h}}(a-\rho (x))^{+}=\sigma ^{2}
\label{eq proof  Th w 001}
\end{equation}%
and%
\begin{equation}
\sum_{x\in \mathbf{U}_{x_{0},h}}(a-\rho (x))^{+}\rho (x)=\sigma ^{2}.
\label{equation of a}
\end{equation}%
Since the function
\begin{equation*}
M_{\rho }\left( t\right) =\sum_{x\in \mathbf{U}_{x_{0},h}}(t-\rho (x))^{+}\rho (x)
\end{equation*}%
is strictly increasing and continuous with $M_{\rho }\left( 0\right) =0$ and $\lim\limits_{t%
\rightarrow \infty }M_{\rho }\left( t\right) =+\infty ,$ the equation
\begin{equation*}
M_{\rho }\left( a\right) =\sigma ^{2}
\end{equation*}%
has a unique solution on $(0,\infty)$. By (\ref{eq proof Th w 001}),
\begin{equation*}
\frac{\sigma ^{2}}{\lambda }=\sum_{x\in \mathbf{U}_{x_{0},h}}(a-\rho
(x))^{+},
\end{equation*}%
which together with (\ref{s5wl}) imply (\ref{eq th weights 001}) and (\ref%
{eq th weights 002}).

\subsection{\label{Sec: proof of remark}Proof of Remark \protect\ref%
{calculate a}}

Expression (\ref{def mt}) can be rewritten as
\begin{equation}
M_{\rho}(t)=\sum_{i=1}^{M}\rho_i(t-\rho_i)^+.
\label{def remt}
\end{equation}
Since  function  $M_{\rho}(t)$ is strictly increasing with  $M_{\rho}(0)=0$ and $M_{\rho}(+\infty)=+\infty$, equation (\ref{eq th weights 002}) admits a unique solution $a$ on $(0,+\infty)$, which must be located in some interval $[\rho_{k_0},\rho_{k_0+1})$, $1\leq k_0\leq M$, where $\rho_{M+1}=\infty$ (see Figure \ref{Fig rho}).  Hence the equation (\ref{eq th weights 002}) becomes
\begin{equation}
\sum_{i=1}^{k_0}\rho_i(a-\rho_i)=\sigma^2,
\label{function M}
\end{equation}
where $ \quad \rho_{k_0}\leq a <\rho_{k_0+1}$.
From (\ref{function M}), it follows that
\begin{equation}
a=\frac{\sigma^2+\sum\limits_{i=1}^{k_0}\rho_i^2}
{\sum\limits_{i=1}^{k_0}\rho_i}, \quad \rho_{k_0}\leq a <\rho_{k_0+1}.
\label{solution mt}
\end{equation}
\par
We now show that $k_0=k^*$ (so that $a=k_0=k^*$), where $k^*:=\max\{1\leq k \leq M\, |\, a_k\geq \rho_{k} \}$. To this end, it suffices to verify that
$a_{k_0}\geq \rho_{k_0}$ and $a_k < \rho_k$ if $k_0<k\leq M$. We have already seen that $a_{k_0}\geq \rho_{k_0}$; if $k_0<k\leq M$, then $a_{k_0}<\rho_{k_0+1}\leq \rho_k$, so that
\begin{equation}
a_{k}=\frac{(\sigma^2+\sum\limits_{i=1}^{k_0}\rho_i^2)
+\sum\limits_{i=k_0+1}^{k}\rho_i^2}
{\sum\limits_{i=1}^{k}\rho_i}
=\frac{a_{k_0}\sum\limits_{i=1}^{k_0}\rho_i+\sum\limits_{k_0+1}^{k}\rho_i^2}
{\sum\limits_{i=1}^{k}\rho_i}
<\frac{\rho_{k}\sum\limits_{i=1}^{k_0}\rho_i+\sum\limits_{k_0+1}^{k}\rho_{k}\rho_i}
{\sum\limits_{i=1}^{k}\rho_i}=\rho_{k}.
\end{equation}
\par
We finally prove that if $1\leq k <M$ and $a_k<\rho_k$, then $a_{k+1}<\rho_{k+1}$, so that the last equality in (\ref{k star}) holds and that $k^*$ is the unique integer $k\in \{1,\cdots,M\}$ such that $a_k \geq \rho_k$ and $a_{k+1}<\rho_{k+1}$ if $1\leq k<M$.
 In fact, for $1\leq k<M$, the inequality $a_k<\rho_k$ implies that
 $$
 \sigma^2 + \sum_{i=1}^{k}\rho_i^2<\rho_k\sum_{i=1}^{k}\rho_i.
 $$
This, in turn, implies that
\begin{equation*}
a_{k+1}
=\frac{\sigma^2+\sum\limits_{i=1}^{k}\rho_i^2+\rho_{k+1}^2}
{\sum\limits_{i=1}^{k+1}\rho_i}
< \frac{\rho_k\sum\limits_{i=1}^{k}\rho_i+\rho_{k+1}^2}
{\sum\limits_{i=1}^{k+1}\rho_i}\leq\rho_{k+1}.
\end{equation*}
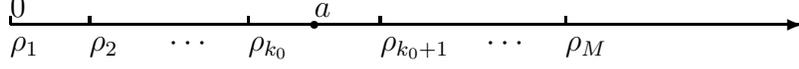
\begin{figure}
\begin{center}
\begin{picture}(0,0)
\thicklines
\put(-150,0){\vector(1, 0){300}}
\put(-150,0){\line(0, 1){3}}
\put(-150,3){0}
\put(-150,-10){$\rho_1$}
\put(-120,0){\line(0, 1){3}}
\put(-120,-10){$\rho_2$}
\put(-90,-10){$\cdots$}
\put(-60,0){\line(0, 1){3}}
\put(-60,-10){$\rho_{k_0}$}
\put(-35,0){\circle*{2.5}}
\put(-35,3){$a$}
\put(-10,0){\line(0, 1){3}}
\put(-10,-10){$\rho_{k_0+1}$}
\put(30,-10){$\cdots$}
\put(60,0){\line(0, 1){3}}
\put(60,-10){$\rho_{M}$}
\end{picture}
\end{center}
\caption{\small The  number axis of $\rho_i$, $i=1,2,\cdots,M$.}
\label{Fig rho}
\end{figure}

\subsection{\label{Sec: proof of Th oracle 001}Proof of Theorem \protect\ref%
{Th oracle 001}}

First assume that $\rho \left( x\right) =\rho_{f,x_0}(x)=|f(x)-f(x_0)|.$ Recall that $g_{\rho}$ and $w_{\rho}$ were defined by (\ref{def gw}) and (\ref{eq th weights 001}).  Using H\"{o}lder's condition (\ref%
{Local Holder cond}) we have, for any $w$,
\begin{equation*}
g_{\rho}(w_{\rho})\leq g_{\rho}(w)\leq \overline{g}(w),
\end{equation*}%
where
\begin{equation*}
\overline{g}(w)=\left( \sum_{x\in \mathbf{U}_{x_{0},h}}w(x)L\Vert
x-x_{0}\Vert _{\infty }^{\beta }\right) ^{2}+\sigma ^{2}\sum_{x\in \mathbf{U}%
_{x_{0},h}}w^{2}(x).
\end{equation*}%
In particular, denoting $\overline{w}=\arg \min_{w}\overline{g}(w),$ we get%
\begin{equation*}
g_{\rho}(w_{\rho})\leq \overline{g}(\overline{w}).
\end{equation*}%
By Theorem \ref{Th weights 001},
\begin{equation*}
\overline{w}(x)=\left( a-L\Vert x-x_{0}\Vert _{\infty }^{\beta }\right) ^{+}%
\Big/\sum\limits_{y\in \mathbf{U}_{x_{0},h}}\left( a-L\Vert x-x_{0}\Vert
_{\infty }^{\beta }\right) ^{+},
\end{equation*}%
where $a>0$ is the unique solution on $(0,\infty)$ of the equation $\overline{M}_h(a)=\sigma^2$, with
\begin{equation*}
\overline{M}_{h}\left( t\right) =\sum_{x\in \mathbf{U}_{x_{0},h}}L\Vert
x-x_{0}\Vert _{\infty }^{\beta }(t-L\Vert x-x_{0}\Vert _{\infty }^{\beta
})^{+}, \quad t\geq 0.
\end{equation*}%
Theorem \ref{Th oracle 001} will be  a consequence of the following lemma.

\begin{lemma}
\label{lm5_2}Assume that $\rho(x)=L\|x-x_0\|_{\infty}^{\beta}$ and that $h\geq c_{1}n^{-\alpha }$ with $0\leq \alpha <\frac{%
1}{2\beta +2}$, or $h=c_{1}n^{-\frac{1}{2\beta +2}}$ with $c_{1}>c_{0}=\left( \frac{\sigma ^{2}(2\beta +2)(\beta +2)}{8L^{2}%
}\right) ^{\frac{1}{2\beta +2}}.$ Then
\begin{equation}
a=c_{3}n^{-\beta /(2\beta +2)}(1+o(1))  \label{s5hk}
\end{equation}%
and
\begin{equation}
\overline{g}(\overline{w})\leq c_{4}n^{-\frac{2\beta }{2+2\beta }},
\label{s5gw4}
\end{equation}%
where $c_{3}$ and $c_{4}$ are positive constants depending only on $\beta ,$ $L$ and $%
\sigma .$
\end{lemma}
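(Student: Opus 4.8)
The plan is to reduce $\overline{g}(\overline{w})$ to a single scalar using the optimality structure of Theorem \ref{Th weights 001}, and then read off the asymptotics by replacing the grid sums with integrals. By Theorem \ref{Th weights 001} applied with $\rho(x)=L\|x-x_0\|_\infty^\beta$, the minimizer is $\overline{w}(x)=\tfrac{\lambda}{\sigma^2}(a-\rho(x))^+$, where $a$ solves $\overline{M}_h(a)=\sigma^2$ and $\lambda$ is fixed by $\lambda\sum_x(a-\rho(x))^+=\sigma^2$. Substituting into $\overline{g}$ and using the two defining identities $\sum_x\rho(x)(a-\rho(x))^+=\sigma^2$ and $\sum_x(a-\rho(x))^+=\sigma^2/\lambda$, I would compute the bias term to be exactly $\lambda^2$ and, via $\sum_x((a-\rho(x))^+)^2=a\sum_x(a-\rho(x))^+-\sum_x\rho(x)(a-\rho(x))^+$, the variance term to be $\lambda a-\lambda^2$. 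This gives the clean identity
\begin{equation*}
\overline{g}(\overline{w})=\lambda a=\frac{\sigma^2 a}{\sum_{x\in\mathbf{U}_{x_0,h}}(a-\rho(x))^+},
\end{equation*}
so that the whole problem reduces to the asymptotics of the bandwidth $a$ and of the single sum $S_0=\sum_x(a-\rho(x))^+$.

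\emph{Reduction to the interior case.} Let $r_a=(a/L)^{1/\beta}$ be the radius beyond which the weights vanish. Under either hypothesis on $h$, I would verify that $r_a\le h$ for all large $n$ (a posteriori, using the order of $a$ found below): when $h\ge c_1 n^{-\alpha}$ with $\alpha<\tfrac{1}{2\beta+2}$ this is immediate since $r_a\asymp n^{-1/(2\beta+2)}=o(n^{-\alpha})$, while in the critical case $h=c_1 n^{-1/(2\beta+2)}$ the threshold $c_0$ is precisely the constant for which $r_a=h$, so $c_1>c_0$ forces $r_a<h$. Consequently the summands $(a-\rho(x))^+$ are supported strictly inside $\mathbf{U}_{x_0,h}$, the window boundary plays no role, and both sums are effectively taken over $\{\|x-x_0\|_\infty\le r_a\}$ independently of $h$.

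\emph{Integral approximation and solving for $a$.} Since the grid spacing is $1/N$ and each pixel occupies area $1/n$, and since $\{\|u\|_\infty\le r\}$ has area $4r^2$ (coarea element $8r\,dr$), I would approximate, for a radial function $g$,
\begin{equation*}
\sum_{x\in\mathbf{U}_{x_0,h}}g(\|x-x_0\|_\infty)=8n\int_0^{r_a}g(r)\,r\,dr\,(1+o(1)).
\end{equation*}
Applying this to $\overline{M}_h(a)=\sigma^2$ yields $8nL\int_0^{r_a}r^{\beta+1}(a-Lr^\beta)\,dr=\sigma^2(1+o(1))$, and evaluating the elementary integral with $r_a^\beta=a/L$ gives $\tfrac{8n\beta}{(\beta+2)(2\beta+2)}L^{-2/\beta}a^{(2\beta+2)/\beta}=\sigma^2(1+o(1))$, whence $a=c_3 n^{-\beta/(2\beta+2)}(1+o(1))$ with $c_3$ explicit in $\beta,L,\sigma$. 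The same approximation gives $S_0=\tfrac{4n\beta}{\beta+2}L^{-2/\beta}a^{(\beta+2)/\beta}(1+o(1))$, and substituting into $\overline{g}(\overline{w})=\sigma^2 a/S_0$ produces $\overline{g}(\overline{w})=\tfrac{\sigma^2(\beta+2)L^{2/\beta}}{4\beta}\,a^{-2/\beta}(1+o(1))\le c_4\,n^{-2\beta/(2\beta+2)}$, since $a^{-2/\beta}\asymp n^{1/(\beta+1)}$ and $-1+\tfrac{1}{\beta+1}=-\tfrac{\beta}{\beta+1}$.

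\emph{Main obstacle.} The delicate point is making the Riemann-sum approximation rigorous with a genuine $o(1)$ \emph{relative} error, since the summand $r^{\beta+1}(a-Lr^\beta)^+$ is only Lipschitz (a kink at $r=r_a$) and $a$ is defined only implicitly through $\overline{M}_h(a)=\sigma^2$. The saving fact is that the number of pixels inside the support is of order $(Nr_a)^2\asymp n\,r_a^2\asymp n^{\beta/(\beta+1)}\to\infty$, so the discretization error is negligible at leading order; I would quantify this by a standard Lipschitz Riemann-sum bound of relative size $O(1/(Nr_a))\to 0$, and then transfer the $o(1)$ from $\overline{M}_h$ to $a$ using the strict monotonicity of $M_\rho$ (as exploited in Remark \ref{calculate a}), which controls the inverse map near the critical point.
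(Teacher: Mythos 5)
Your proposal is correct and follows essentially the same route as the paper: your identity $\overline{g}(\overline{w})=\sigma^2 a/\sum_{x}(a-\rho(x))^{+}$ is exactly the paper's $\overline{g}(\overline{w})=\frac{\sigma^2}{L}\frac{a}{G_{\overline{h}}}$, your reduction to the interior case via $r_a\leq h$ is the paper's observation that $\overline{M}_h(a)=\overline{M}_1(a)$ once $h\geq\overline{h}$, and your integral evaluation of the two sums matches the paper's computation using the exact count of $8k$ pixels at sup-distance $k/N$. The only difference is cosmetic (Lagrange-multiplier identities versus direct substitution, and an explicit remark on the Riemann-sum error that the paper leaves implicit), so there is nothing further to add.
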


\begin{proof}
 We first prove (\ref{s5hk}) in the case where $h=1$, i.e. $\mathbf{U}_{x_0,h}=\mathbf{I}$.  Then by the
definition of $a,$ we have
\begin{equation}
\overline{M}_{1}\left( a\right) =\sum\limits_{x\in I}(a-L\Vert x-x_{0}\Vert _{\infty
}^{\beta })^{+}L\Vert x-x_{0}\Vert _{\infty }^{\beta }=\sigma ^{2}.
\label{eq - M1 001}
\end{equation}%
Let $\overline{h}=\left( a/L\right) ^{1/\beta }$. Then $a-L\|x-x_0\|_{\infty}^{\beta}\geq 0$ if and only if $\|x-x_0\|_{\infty}\leq \overline{h}$. So from (\ref{eq - M1 001}) we get%
\begin{equation}
L^{2}\overline{h}^{\beta }\sum_{\Vert x-x_{0}\Vert _{\infty }\leq \overline{h%
}}\Vert x-x_{0}\Vert _{\infty }^{\beta }-L^{2}\sum_{\Vert x-x_{0}\Vert
_{\infty }\leq \overline{h}}\Vert x-x_{0}\Vert _{\infty }^{2\beta }=\sigma
^{2}.  \label{eq - M1 002}
\end{equation}%
By the definition of  the neighborhood $U_{x_{0},\overline{h}}$ it
is easily seen that%
\begin{equation*}
\sum_{\Vert x-x_{0}\Vert _{\infty }\leq \overline{h}}\Vert x-x_{0}\Vert
_{\infty }^{\beta }=8N^{-\beta }\sum_{k=1}^{N\overline{h}}k^{\beta +1}=8N^{2}%
\frac{\overline{h}^{\beta +2}}{\beta +2}\left( 1+o\left( 1\right) \right)
\end{equation*}%
and%
\begin{equation*}
\sum_{\Vert x-x_{0}\Vert _{\infty }\leq \overline{h}}\Vert x-x_{0}\Vert
_{\infty }^{2\beta }=8N^{-2\beta }\sum_{k=1}^{N\overline{h}}k^{2\beta
+1}=8N^{2}\frac{\overline{h}^{2\beta +2}}{2\beta +2}\left( 1+o\left(
1\right) \right) .
\end{equation*}%
Therefore, (\ref{eq - M1 002}) implies
\begin{equation*}
\frac{8L^{2}\beta }{\left( \beta +2\right) \left( 2\beta +2\right) }N^{2}%
\overline{h}^{2\beta +2}(1+o{(1)})=\sigma ^{2},
\end{equation*}%
from which we infer that%
\begin{equation}
\overline{h}=c_{0}n^{-\frac{1}{2\beta +2}}(1+o(1))  \label{eq - h bar}
\end{equation}%
with $c_{0}=\left( \frac{\sigma ^{2}\left( \beta +2\right) \left( 2\beta
+2\right) }{8L^{2}\beta }\right) ^{\frac{1}{2\beta +2}}.$ From (\ref{eq - h bar}) and the definition of $\overline{h}$, we obtain%
\begin{equation*}
a=L\overline{h}^{\beta }=Lc_{0}^{\beta }n^{-\frac{\beta }{2\beta +2}%
}(1+o(1)),
\end{equation*}
which prove (\ref{s5hk}) in the case when $h=1$.
\par
We next prove  (\ref{s5hk})  under the conditions of the lemma. If $h\geq c_{0}n^{-\alpha },$ where $0\leq \alpha <\frac{1}{2\beta +2},$
then it is clear that $h\geq \overline{h}$ for $n$ sufficiently large.
Therefore $\overline{M}_{h}\left( a\right) =\overline{M}_{1}\left( a\right) $, thus we arrive at equation (\ref{eq - M1 001}) from which we deduce (\ref{eq - h bar}). If
$h\geq c_{0}n^{-\frac{1}{2\beta +2}}$ and $c_{0}>c_{1},$ then again $h\geq
\overline{h}$ for $n$ sufficiently large. Therefore $\overline{M}_{h}\left( a\right)
=\overline{M}_{1}\left( a\right) $, and we arrive again at (\ref{eq - h bar}).
\par
 We finally prove (\ref{s5gw4}). Denote for brevity
\begin{equation*}
G_{h}=\sum_{\Vert x-x_{0}\Vert _{\infty }\leq h}(\overline{h}^{\beta }-\Vert
x-x_{0}\Vert _{\infty }^{\beta })^{+}.
\end{equation*}%
Since $h\geq \overline{h}$ for $n$ sufficiently large, we have $\overline{M}%
_{h}\left( a\right) =\overline{M}_{\overline{h}}\left( a\right) =\sigma ^{2}$
and $G_{h}=G_{\overline{h}}.$ Then it is easy to see that%
\begin{eqnarray*}
\overline{g}(\overline{w}) &=&\sigma ^{2}\frac{\overline{M}_{\overline{h}%
}\left( a\right) +\sum_{\Vert x-x_{0}\Vert _{\infty }\leq \overline{h}%
}\left( \left( \overline{a}-L\Vert x-x_{0}\Vert _{\infty }^{\beta }\right)
^{+}\right) ^{2}}{L^{2}G_{\overline{h}}^{2}} \\
&=&\frac{\sigma ^{2}}{L}\frac{a}{G_{\overline{h}}}.
\end{eqnarray*}%
Since
\begin{align*}
G_{h}& =\sum_{\Vert x-x_{0}\Vert _{\infty }\leq \overline{h}}(\overline{h}%
^{\beta }-\Vert x-x_{0}\Vert _{\infty }^{\beta }) \\
& =\overline{h}^{\beta }\sum_{1\leq k\leq N\overline{h}}8k-\frac{8}{N^{\beta
}}\sum_{1\leq k\leq N\overline{h}}k^{\beta +1} \\
& =\frac{4\beta }{\beta +2}N^{2}\overline{h}^{\beta +2}\left( 1+o\left(
1\right) \right) \\
& =\frac{4\beta }{\left( \beta +2\right) L^{1/\beta }}N^{2}a^{\left( \beta
+2\right) /\beta }\left( 1+o\left( 1\right) \right) ,
\end{align*}%
we obtain
\begin{equation*}
\overline{g}\left( \overline{w}\right) =\sigma ^{2}\frac{\left( \beta
+2\right) }{4\beta }L^{1/\beta -1}\frac{\overline{a}^{-\frac{2}{\beta }}}{%
N^{2}}\left( 1+o\left( 1\right) \right) =c_{5}n^{-\frac{2\beta }{2\beta +2}%
}\left( 1+o\left( 1\right) \right) ,
\end{equation*}%
where $c_{4}$ is a constant depending on $\beta ,$ $L$ and $\sigma .$
\end{proof}
\\
\\
{\bf{Proof of Theorem \ref{Th oracle 001}}}. As $\rho
\left( x\right) =\left\vert f\left( x\right) -f\left( x_{0}\right)
\right\vert +\delta _{n},$
we have
\begin{equation*}
\begin{split}
\left( \sum_{x\in \mathbf{U}_{x_0,h}} w(x)\rho(x)\right)^2
&\leq
\left( \sum_{x\in \mathbf{U}_{x_0,h}} w(x) |f(x)-f(x_0)|+\delta_n\right)^2
\\&\leq
2\left(\sum_{x\in \mathbf{U}_{x_0,h}} w(x) |f(x)-f(x_0)|  \right)^2+2\delta_n^2.
\end{split}
\end{equation*}
Hence
\begin{equation*}
g_{\rho}(w)\leq 2 \overline{g}(w) + 2\delta_n^2.
\end{equation*}
So
\begin{equation*}
g_{\rho}(w_{\rho})\leq g_{\rho}(\overline{w})\leq 2\overline{g}(\overline{w})+2\delta
_{n}^{2}.
\end{equation*}
Therefore, by Lemma \ref{lm5_2} and the condition that $\delta_n=O\left( n^{-\frac{\beta }{2\beta +2}}\right) $,
we obtain
\begin{equation*}
g_{\rho}(w_{\rho})=O\left( n^{-\frac{2\beta }{2\beta +2}}\right).
\end{equation*}
This gives (\ref{s2ef2}).

\subsection{\label{Sec: proof of Th adapt 001}Proof of Theorem \protect\ref%
{Th adapt 001}}

We begin with a decomposition of $\widehat{\rho }_{x_{0}}^{\prime \prime
}(x) $. Note that
\begin{equation}
\widehat{\rho }_{x_{0}}^{\prime \prime }(x)=\left( d\left( \mathbf{Y}%
_{x,\eta }^{\prime \prime },\mathbf{Y}_{x_{0},\eta }^{\prime \prime }\right)
-\sigma \sqrt{2}\right) ^{+}\leq \left\vert d\left( \mathbf{Y}_{x,\eta
}^{\prime \prime },\mathbf{Y}_{x_{0},\eta }^{\prime \prime }\right) -\sigma
\sqrt{2}\right\vert .  \label{rho hat bound}
\end{equation}%
Recall that $M^{\prime }=\mathrm{card\ }\mathbf{U}_{x_{0},h}^{\prime
}=nh^{2}/2$, $m^{\prime \prime }=\mathrm{card\ }\mathbf{V}_{x_{0},\eta
}^{\prime \prime }=n\eta ^{2}/2.$ Let $T_{x_{0},x}$ be the translation
mapping $T_{x_{0},x}y=x+(y-x_{0}).$ Denote $\Delta _{x_{0},x}\left( y\right)
=f(y)-f(T_{x_{0},x}y)$ and $\zeta \left( y\right) =\varepsilon
(y)-\varepsilon (T_{x_{0},x}y).$ Since%
\begin{equation*}
Y(y)-Y(T_{x_{0},x}y)=\Delta _{x_{0},x}\left( y\right) +\zeta \left( y\right)
,
\end{equation*}%
it is easy to see that%
\begin{equation*}
d\left( \mathbf{Y}_{x,\eta }^{\prime \prime },\mathbf{Y}_{x_{0},\eta
}^{\prime \prime }\right) ^{2}=\frac{1}{m^{\prime \prime }}\sum_{y\in
\mathbf{V}_{x_{0},\eta }^{\prime \prime }}\left( \Delta _{x_{0},x}\left(
y\right) +\zeta \left( y\right) \right) ^{2}=\Delta ^{2}\left( x\right)
+S\left( x\right) +2\sigma ^{2},
\end{equation*}%
where%
\begin{eqnarray}
\Delta ^{2}\left( x\right) &=&\frac{1}{m^{\prime \prime }}\sum_{y\in \mathbf{%
V}_{x_{0},\eta }^{\prime \prime }}\Delta _{x_{0},x}^{2}\left( y\right) ,
\label{delta bound} \\
S\left( x\right) &=&-2S_{1}\left( x\right) +S_{2}\left( x\right)
\label{centering}
\end{eqnarray}%
with%
\begin{eqnarray*}
S_{1}(x) &=&\frac{1}{m^{\prime \prime }}\sum_{y\in \mathbf{V}_{x_{0},\eta
}^{\prime \prime }}\Delta _{x_{0},x}\left( y\right) \zeta \left( y\right) ,
\\
S_{2}(x) &=&\frac{1}{m^{\prime \prime }}\sum_{y\in \mathbf{V}_{x_{0},\eta
}^{\prime \prime }}\left( \zeta \left( y\right) ^{2}-2\sigma ^{2}\right) .
\end{eqnarray*}
Notice that $\mathbb{E}S_{1}(x)=\mathbb{E}S_{2}\left( x\right) =\mathbb{E}%
S\left( x\right) =0.$ Then obviously
\begin{eqnarray}
d\left( \mathbf{Y}_{x,\eta }^{\prime \prime },\mathbf{Y}_{x_{0},\eta
}^{\prime \prime }\right) -\sigma \sqrt{2} &=&\sqrt{\Delta ^{2}(x)+S\left(
x\right) +2\sigma ^{2}}-\sqrt{2\sigma ^{2}}  \notag \\
&=&\frac{\Delta ^{2}(x)+S(x)}{\sqrt{\Delta ^{2}(x)+S\left( x\right) +2\sigma
^{2}}+\sqrt{2\sigma ^{2}}}.  \label{d bound}
\end{eqnarray}

First we prove the following lemma.

\begin{lemma}
\label{lemma-ad2-001} Suppose that the function $f$ satisfies the local H%
\"{o}lder condition (\ref{Local Holder cond}). Then, for any $x\in \mathbf{U}%
_{x_{0},h}^{\prime },$%
\begin{equation*}
\frac{1}{3}\rho _{f,x_{0}}^{2}\left( x\right) -2L^{2}\eta ^{2\beta }\leq
\Delta ^{2}\left( x\right) \leq 3\rho _{f,x_{0}}^{2}\left( x\right)
+6L^{2}\eta ^{2\beta }.
\end{equation*}
\end{lemma}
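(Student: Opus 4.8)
The plan is to reduce the statement to a pointwise bound on each summand $\Delta_{x_{0},x}^{2}(y)$ that is \emph{uniform} in $y$, and then to average, since the quantity $\Delta^{2}(x)=\frac{1}{m^{\prime\prime}}\sum_{y}\Delta_{x_{0},x}^{2}(y)$ is just a mean of such terms.

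First I would reparametrize the sum defining $\Delta^{2}(x)$: for $y\in \mathbf{V}_{x_{0},\eta}^{\prime\prime}$ write $z=y-x_{0}$, so that $\Vert z\Vert_{\infty}\leq \eta$ and $T_{x_{0},x}y=x+z$. Then $\Delta_{x_{0},x}(y)=f(x_{0}+z)-f(x+z)$, and inserting and subtracting $f(x_{0})$ and $f(x)$ gives the decomposition $\Delta_{x_{0},x}(y)=A+B-C$, where $A=f(x_{0})-f(x)$, $B=f(x_{0}+z)-f(x_{0})$ and $C=f(x+z)-f(x)$. By definition $|A|=\rho_{f,x_{0}}(x)$, whereas the local H\"older condition (\ref{Local Holder cond}), applied to the pairs $(x_{0}+z,x_{0})$ and $(x+z,x)$, each at $\Vert\cdot\Vert_{\infty}$-distance at most $\eta$, yields $|B|\leq L\eta^{\beta}$ and $|C|\leq L\eta^{\beta}$, hence $|B-C|\leq 2L\eta^{\beta}$.

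The core of the argument is then two elementary quadratic inequalities applied to $a=A$ and $b=B-C$. For the upper bound I would use $(a+b)^{2}\leq (1+t)a^{2}+(1+t^{-1})b^{2}$ with $t=2$, which gives $\Delta_{x_{0},x}^{2}(y)\leq 3\rho_{f,x_{0}}^{2}(x)+\tfrac{3}{2}(B-C)^{2}\leq 3\rho_{f,x_{0}}^{2}(x)+6L^{2}\eta^{2\beta}$. For the lower bound I would use $(a+b)^{2}\geq (1-s)a^{2}-(s^{-1}-1)b^{2}$ with $s=\tfrac{2}{3}$, which gives $\Delta_{x_{0},x}^{2}(y)\geq \tfrac{1}{3}\rho_{f,x_{0}}^{2}(x)-\tfrac{1}{2}(B-C)^{2}\geq \tfrac{1}{3}\rho_{f,x_{0}}^{2}(x)-2L^{2}\eta^{2\beta}$. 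The particular choices $t=2$ and $s=\tfrac{2}{3}$ are dictated by the requirement of reproducing exactly the constants $3,6$ and $\tfrac{1}{3},2$ appearing in the statement.

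Since both bounds are independent of $y$, averaging over the $m^{\prime\prime}$ pixels $y\in\mathbf{V}_{x_{0},\eta}^{\prime\prime}$ leaves them unchanged, which is precisely the claimed double inequality. I do not expect a genuine obstacle here; the only point requiring a little care is the bookkeeping one of checking that (\ref{Local Holder cond}) legitimately applies to the shifted point $x+z$ — which lies within $\Vert\cdot\Vert_{\infty}$-distance $h+\eta$ of $x_{0}$ rather than $h$ — and, should this marginally exceed the search window, invoking the mirror-extension convention used in the paper so that the H\"older bound on $|C|$ remains valid.
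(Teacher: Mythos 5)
Your proposal is correct and follows essentially the same route as the paper: the identical three-term decomposition $f(x_0+z)-f(x+z)=[f(x_0)-f(x)]+[f(x_0+z)-f(x_0)]-[f(x+z)-f(x)]$, with the middle and last terms controlled by the H\"older condition at scale $\eta$, followed by averaging over $y$. The only cosmetic difference is that you group the two $O(L\eta^{\beta})$ terms and apply a weighted two-term Young inequality with parameters $t=2$ and $s=\tfrac{2}{3}$, whereas the paper uses the three-term bounds $(a+b+c)^{2}\leq 3(a^{2}+b^{2}+c^{2})$ and $(a+b+c)^{2}\geq \tfrac{1}{3}a^{2}-b^{2}-c^{2}$; both yield exactly the constants $3,6$ and $\tfrac{1}{3},2$.
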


\begin{proof}
By the decomposition%
\begin{equation*}
f\left( y\right) -f\left( T_{x_{0},x}\left( y\right) \right) =\left[ f\left(
x_{0}\right) -f\left( x\right) \right] +\left[ f\left( y\right) -f\left(
x_{0}\right) \right] +\left[ f\left( x\right) -f\left( T_{x_{0},x}\left(
y\right) \right) \right]
\end{equation*}

and the inequality $\left( a+b+c\right) ^{2}\leq 3\left(
a^{2}+b^{2}+c^{3}\right) $ we obtain%
\begin{eqnarray*}
\Delta ^{2}\left( x\right) &=&\frac{1}{m^{\prime \prime }}\sum_{y\in \mathbf{%
V}_{x_{0},\eta }^{\prime \prime }}\left( f\left( y\right) -f\left(
T_{x_{0},x}\left( y\right) \right) \right) ^{2} \\
&\leq &\frac{3}{m^{\prime \prime }}\sum_{y\in \mathbf{V}_{x_{0},\eta
}^{\prime \prime }}\left( f\left( x_{0}\right) -f\left( x\right) \right) ^{2}
\\
&&\frac{3}{m^{\prime \prime }}\sum_{y\in \mathbf{V}_{x_{0},\eta }^{\prime
\prime }}\left( f\left( y\right) -f\left( x_{0}\right) \right) ^{2} \\
&&\frac{3}{m^{\prime \prime }}\sum_{y\in \mathbf{V}_{x_{0},\eta }^{\prime
\prime }}\left( f\left( x\right) -f\left( T_{x_{0},x}\left( y\right) \right)
\right) ^{2}.
\end{eqnarray*}

By the local H\"{o}lder condition (\ref{Local Holder cond}) this implies%
\begin{equation*}
\Delta ^{2}\left( x\right) \leq 3\left( f\left( x_{0}\right) -f\left(
x\right) \right) ^{2}+3L^{2}\eta ^{2\beta }+3L^{2}\eta ^{2\beta },
\end{equation*}%
which gives the upper bound. The lower bound can be proved similarly using
the inequality $\left( a+b+c\right) ^{2}\geq \frac{1}{3}a^{2}-b^{2}-c^{2}.$
\end{proof}

We first prove a large deviation inequality for $S(x)$.

\begin{lemma}
\label{Lemma s}Let $S(x)$ be defined by (\ref{centering}). Then there are
two constants $c_{1}$ and $c_{2}$ such that for any $0\leq z\leq c_{1}\left(
m^{\prime \prime }\right) ^{1/2},$
\begin{equation*}
\mathbb{P}\left( \left\vert S(x)\right\vert \geq \frac{z}{\sqrt{m^{\prime
\prime }}}\right) \leq 2\exp \left( -c_{2}z^{2}\right) .
\end{equation*}
\end{lemma}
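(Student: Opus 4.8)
The plan is to treat the two pieces of the decomposition $S(x)=-2S_{1}(x)+S_{2}(x)$ from (\ref{centering}) separately: $S_{1}$ is a \emph{linear} form in the Gaussian noise and is therefore sub-Gaussian, whereas $S_{2}$ is a \emph{quadratic} form (a recentered sum of squared Gaussians) and is only sub-exponential. Since $\{|S|\ge z/\sqrt{m''}\}\subset\{|2S_{1}|\ge z/(2\sqrt{m''})\}\cup\{|S_{2}|\ge z/(2\sqrt{m''})\}$, it suffices to prove a bound of the announced form for each summand and then combine them by a union bound, absorbing the numerical factors into $c_{1},c_{2}$ at the end. Throughout I may assume $x\neq x_{0}$, since for $x=x_{0}$ one has $\zeta\equiv 0$, so $S(x)$ is deterministic and the inequality is trivial; and for $x\neq x_{0}$ the two noise values making up each $\zeta(y)=\varepsilon(y)-\varepsilon(y+v)$, with $v=x-x_{0}$, are independent, so $\mathbb{E}\zeta(y)^{2}=2\sigma^{2}$.

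First I would dispose of $S_{1}(x)=\frac{1}{m''}\sum_{y}\Delta_{x_{0},x}(y)\zeta(y)$. Since $f$ is bounded, the coefficients $\Delta_{x_{0},x}(y)$ are bounded by a constant, and $S_{1}$ is a fixed linear combination of the i.i.d.\ Gaussian variables $\varepsilon$, hence a centered Gaussian. The covariance $\mathrm{Cov}(\zeta(y),\zeta(y'))$ vanishes unless $y'\in\{y-v,y,y+v\}$ and is at most $2\sigma^{2}$ in absolute value; this bounded-degree structure yields $\mathrm{Var}(S_{1})\le C/m''$ for a constant $C=C(\sigma,L)$. The standard Gaussian tail bound then gives $\mathbb{P}(|2S_{1}|\ge z/(2\sqrt{m''}))\le 2\exp(-c z^{2})$ for \emph{every} $z\ge 0$, with no restriction on the range.

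The heart of the matter is $S_{2}(x)=\frac{1}{m''}\sum_{y}(\zeta(y)^{2}-2\sigma^{2})$, where the obstruction is that the $\zeta(y)$ are \emph{not} independent: two of them share a noise variable exactly when $y'-y\in\{-v,v\}$. I would resolve this by partitioning $\mathbf{V}_{x_{0},\eta}^{\prime\prime}$ into two blocks $A_{0},A_{1}$ according to the parity of the position along each line $\{y+kv\}$; within a block any two indices differ by an even multiple of $v$, so the corresponding $\zeta(y)$ are pairwise uncorrelated and, being jointly Gaussian, mutually independent, as are their squares. Splitting $S_{2}=\frac{1}{m''}(\Sigma_{0}+\Sigma_{1})$ accordingly, each $\Sigma_{j}$ is a sum of independent centered sub-exponential (scaled $\chi^{2}_{1}$) variables, to which the classical Bernstein inequality applies: $\mathbb{P}(|\Sigma_{j}|\ge t)\le 2\exp(-c'\min(t^{2}/|A_{j}|,\,t))$. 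Taking $t=z\sqrt{m''}/4$ and using $|A_{j}|\le m''$, the quadratic branch $t^{2}/|A_{j}|\ge c'' z^{2}$ is the operative one precisely when $t\lesssim m''$, i.e.\ when $z\le c_{1}\sqrt{m''}$ --- which is exactly the range in the statement and explains its appearance. Combining the two blocks, and then $S_{1}$ with $S_{2}$, by the union bound and adjusting constants gives the desired $2\exp(-c_{2}z^{2})$.

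The main obstacle is precisely the dependence among the $\zeta(y)$ in the quadratic term $S_{2}$; once it is removed by the parity partition (equivalently, one could bound the Laplace transform of the Gaussian quadratic form $\varepsilon^{\top}B^{\top}B\varepsilon$ directly through its operator and Frobenius norms, as in the Hanson--Wright inequality), the remaining estimates are the routine sub-Gaussian and Bernstein bounds. The restriction $z\le c_{1}\sqrt{m''}$ is intrinsic: beyond it the $\chi^{2}$ tail of $S_{2}$ becomes genuinely exponential rather than Gaussian, so no bound of the form $\exp(-c_{2}z^{2})$ can hold.
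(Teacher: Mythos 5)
Your proof is correct, but it takes a genuinely different route from the paper's. The paper does not split $S$ into its linear and quadratic parts: it sets $\xi(y)=\zeta(y)^{2}-2\sigma^{2}-2\Delta_{x_{0},x}(y)\zeta(y)$, observes that $\xi(y)$ has a uniformly bounded exponential moment $\mathbb{E}e^{t\xi(y)}\le c_{3}$ for $|t|\le t_{0}$, and runs a single Chernoff argument with a second-order Taylor expansion of the cumulant generating function $\psi_{y}(t)=\ln\mathbb{E}e^{t\xi(y)}$; the range restriction $z\le c_{1}\sqrt{m''}$ enters there through the requirement $t=c_{4}z/\sqrt{m''}\le t_{0}/3$, playing exactly the role of the Bernstein crossover in your argument. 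The substantive difference is your handling of the dependence among the $\zeta(y)$: the paper's step $\mathbb{P}\{S(x)>z\sqrt{m''}\}\le\exp\{-tz\sqrt{m''}+\sum_{y}\psi_{y}(t)\}$ tacitly factors the moment generating function of the sum, i.e.\ treats the $\xi(y)$ as independent, which fails whenever two patch points differ by $\pm v$ with $v=x-x_{0}$ (they then share a noise value $\varepsilon(y+v)$). Your parity partition along the lines $\{y+kv\}$ (or the Hanson--Wright alternative you mention) repairs precisely this point, at the cost of a union bound over a few events absorbed into the constants; in this respect your proof is more careful than the paper's. Two small caveats: the case $x=x_{0}$ is not ``trivial'' as you claim --- there $\zeta\equiv0$ but the centering remains, so $S(x_{0})=-2\sigma^{2}$ deterministically and the stated bound can fail for large $z$; the lemma should simply be read for $x\neq x_{0}$ (harmless in its application, since $\widehat{\rho}_{x_{0}}^{\prime\prime}(x_{0})=0$ anyway). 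Also, the boundedness of the coefficients $\Delta_{x_{0},x}(y)$ is better derived from the local H\"{o}lder condition (as in Lemma \ref{lemma-ad2-001}) than from global boundedness of $f$, which the paper does not formally assume.
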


\begin{proof}
Denote $\xi \left( y\right) =\zeta \left( y\right) ^{2}-2\sigma ^{2}-2\Delta
_{x_{0},x}\left( y\right) \zeta \left( y\right) .$ Since $\zeta \left(
y\right) =\varepsilon (y)-\varepsilon (T_{x_{0},x}y)$ is a normal random
variable with mean $0$ and variance $2\sigma ^{2}$, the random variable $\xi
\left( y\right) $ has an exponential moment, i.e. there exist two positive
constants $t_{0}$ and $c_{3}$ depending only on $\beta ,$ $L$ and $\sigma
^{2}$ such that $\phi _{y}\left( t\right) =\mathbb{E}e^{t\xi (y)}\leq c_{3},$
for any $\left\vert t\right\vert \leq t_{0}.$ Let $\psi _{y}(t)=\ln \phi
_{y}\left( t\right) $ be the cumulate generating function. By Chebyshev's
exponential inequality we get,%
\begin{equation}
\mathbb{P}\{S(x)>z\sqrt{m^{\prime \prime }}\}\leq \exp \left\{-tz\sqrt{m^{\prime
\prime }}+\sum_{y\in \mathbf{\mathbf{V}^{\prime \prime }}_{x_{0},\eta }}\psi
_{y}(t)\right\},  \notag
\end{equation}%
for any $\left\vert t\right\vert \leq t_{0}$ and for any $z>0.$ By the-three
terms Taylor expansion, for $\left\vert t\right\vert \leq t_{0},$%
\begin{equation*}
\psi _{y}(t)=\psi _{y}(0)+t\psi _{y}^{\prime }(0)+\frac{t^{2}}{2}\psi
_{y}^{\prime \prime }{(\theta t)},
\end{equation*}%
where $\left\vert \theta \right\vert \leq 1,$ $\psi _{y}(0)=0,$ $\psi
_{y}^{\prime }(0)=\mathbb{E}\xi (y)=0$ and%
\begin{equation*}
0\leq \psi _{y}^{\prime \prime }(t)=\frac{\phi _{y}^{\prime \prime }\left(
t\right) \phi _{y}\left( t\right) -\left( \phi _{y}^{\prime }\left( t\right)
\right) ^{2}}{\left( \phi _{y}\left( t\right) \right) ^{2}}\leq \frac{\phi
_{y}^{\prime \prime }\left( t\right) }{\phi _{y}\left( t\right) }.
\end{equation*}%
Since, by Jensen's inequality $\mathbb{E}e^{t\xi (y)}\geq e^{t\mathbb{E}\xi
(y)}=1,$ we obtain the following upper bound:%
\begin{equation*}
\psi _{y}^{\prime \prime }(t)\leq \phi _{y}^{\prime \prime }\left( t\right) =%
\mathbb{E}\xi ^{2}(y)e^{t\xi (y)}.
\end{equation*}%
Using the elementary inequality $x^{2}e^{x}\leq e^{3x},$ $x\geq 0,$ we have,
for $\left\vert t\right\vert \leq t_{0}/3,$
\begin{equation*}
\psi _{y}^{\prime \prime }(t)\leq \frac{9}{t_{0}^{2}}\mathbb{E}\left( \frac{%
t_{0}}{3}\xi (y)\right) ^{2}e^{\frac{t_{0}}{3}\xi (y)}\leq \frac{9}{t_{0}^{2}%
}\mathbb{E}e^{t_{0}\xi (y)}\leq \frac{9}{t_{0}^{2}}c_{3}.
\end{equation*}%
This implies that for $\left\vert t\right\vert \leq t_{0},$%
\begin{equation*}
0\leq \psi _{y}(t)\leq \frac{9c_{3}}{2t_{0}^{2}}t^{2}
\end{equation*}%
and%
\begin{equation}
\mathbb{P}\left( S(x)>z\sqrt{m^{\prime \prime }}\right) \leq \exp \{-tz\sqrt{%
m^{\prime \prime }}+\frac{9c_{3}}{2t_{0}^{2}}m^{\prime \prime 2}\}.  \notag
\end{equation}%
If $t=c_{4}z/\sqrt{m^{\prime \prime }}\leq t_{0}/3$, where $c_{4}$ is a
positive constant, we obtain%
\begin{equation}
\mathbb{P}\left( S(x)>z\sqrt{m^{\prime \prime }}\right) \leq \exp \left\{
-c_{4}z^{2}\left( 1-\frac{9c_{3}}{2t_{0}^{2}}c_{4}\right) \right\} .  \notag
\end{equation}%
Choosing $c_{4}>0$ sufficiently small we get%
\begin{equation*}
\mathbb{P}\left( S(x)>z\sqrt{m^{\prime \prime }}\right) \leq \exp \left(
-c_{5}z^{2}\right)
\end{equation*}%
for some constant $c_{5}>0.$ In the same way we show that%
\begin{equation*}
\mathbb{P}\left( S(x)<-z\sqrt{m^{\prime \prime }}\right) \leq \exp \left(
-c_{5}z^{2}\right) .
\end{equation*}%
This proves the lemma.
\end{proof}

We next prove that $\rho _{x_{0}}^{\prime \prime }(x)$ is uniformly of order
$O\left( n^{-\frac{\beta }{2\beta +2}}\sqrt{\ln n}\right) $ with probability
$1-O\left( n^{-2}\right) ,$ if $h$ has the order $n^{-\frac{1}{2\beta +2}}.$

\begin{lemma}
\label{lm5_4} Suppose that the function $f$ satisfies the local H\"{o}lder
condition (\ref{Local Holder cond}). Assume that $h=c_{1}n^{-\frac{1}{2\beta
+2}}$ with $c_{1}>c_{0}=\left( \frac{\sigma ^{2}\left( \beta +2\right)
\left( 2\beta +2\right) }{8L^{2}\beta }\right) ^{\frac{1}{2\beta +2}}$ and
that $\eta =c_{2}n^{-\frac{1}{2\beta +2}}.$ Then there exists a constant $%
c_{3}>0$ depending only on $\beta ,$ $L$ and $\sigma $, such that
\begin{equation}
\mathbb{P}\left\{ \max_{x\in \mathbf{U}_{x_{0},h}}\widehat{\rho }%
_{x_{0}}^{\prime \prime }(x)\geq c_{3}n^{-\frac{\beta }{2\beta +2}}\sqrt{\ln
n}\right\} =O\left( n^{-2}\right) .  \label{s5pr}
\end{equation}
\end{lemma}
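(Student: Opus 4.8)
The plan is to reduce the uniform control of $\widehat{\rho}_{x_0}''$ to the large-deviation bound of Lemma \ref{Lemma s} through a union bound over the $M=\mathrm{card}\,\mathbf{U}_{x_0,h}$ pixels of the search window, after cleanly separating the deterministic bias contribution $\Delta^2(x)$ from the stochastic fluctuation $S(x)$ in the decomposition (\ref{d bound}).

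First I would bound $\widehat{\rho}_{x_0}''(x)$ pointwise. Starting from (\ref{rho hat bound}) and the identity (\ref{d bound}), and using that the denominator in (\ref{d bound}) is at least $\sqrt{2\sigma^2}=\sigma\sqrt 2$, one gets $\widehat{\rho}_{x_0}''(x)\le \frac{\Delta^2(x)+|S(x)|}{\sigma\sqrt 2}$. By Lemma \ref{lemma-ad2-001}, the local H\"older condition (\ref{Local Holder cond}), and $\rho_{f,x_0}(x)\le L\|x-x_0\|_\infty^\beta\le Lh^\beta$, the bias term is purely deterministic: $\Delta^2(x)\le 3L^2h^{2\beta}+6L^2\eta^{2\beta}$. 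Since $h=c_1 n^{-1/(2\beta+2)}$ and $\eta=c_2 n^{-1/(2\beta+2)}$, this yields $\Delta^2(x)=O\!\left(n^{-2\beta/(2\beta+2)}\right)$ uniformly in $x$, which is $o\!\left(n^{-\beta/(2\beta+2)}\sqrt{\ln n}\right)$.

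Consequently, for $n$ large the event $\{\widehat{\rho}_{x_0}''(x)\ge c_3 n^{-\beta/(2\beta+2)}\sqrt{\ln n}\}$ forces $|S(x)|\ge c' n^{-\beta/(2\beta+2)}\sqrt{\ln n}$, where $c'=\tfrac12\sigma\sqrt2\,c_3$ (the factor $\tfrac12$ absorbing the negligible bias term). I would then apply Lemma \ref{Lemma s} with $z=c'\sqrt{m''}\,n^{-\beta/(2\beta+2)}\sqrt{\ln n}$. Using $m''=\mathrm{card}\,\mathbf{V}_{x_0,\eta}''=n\eta^2/2=\tfrac12 c_2^2\,n^{2\beta/(2\beta+2)}$, the powers of $n$ cancel and $z=\tfrac{c'c_2}{\sqrt2}\sqrt{\ln n}\asymp\sqrt{\ln n}$, so $z^2=\kappa\ln n$ with $\kappa$ proportional to $c_3^2$. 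In particular $z$ lies in the admissible range of Lemma \ref{Lemma s} for $n$ large (that range is of order $(m'')^{1/2}\asymp n^{\beta/(2\beta+2)}$), and the lemma gives $\mathbb{P}\{|S(x)|\ge c'n^{-\beta/(2\beta+2)}\sqrt{\ln n}\}\le 2\exp(-c\kappa\ln n)=2\,n^{-\gamma}$, with $\gamma$ growing linearly in $c_3^2$.

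Finally I would take a union bound over the $M=nh^2=c_1^2\,n^{2\beta/(2\beta+2)}=O(n)$ pixels of $\mathbf{U}_{x_0,h}$, obtaining $\mathbb{P}\{\max_{x}\widehat{\rho}_{x_0}''(x)\ge c_3 n^{-\beta/(2\beta+2)}\sqrt{\ln n}\}\le M\cdot 2n^{-\gamma}=O(n^{1-\gamma})$, and choose $c_3$ large enough that $\gamma\ge 3$, which delivers the claimed $O(n^{-2})$. The main point to watch is the bookkeeping of exponents: $c_3$ must be taken large so that the exponential decay $\exp(-c\kappa\ln n)$ overcomes the polynomial number $M=O(n)$ of pixels in the union bound, while simultaneously the admissibility condition of Lemma \ref{Lemma s} must be verified — both rely on the exact cancellation $\sqrt{m''}\,n^{-\beta/(2\beta+2)}\asymp 1$ afforded by the matched bandwidths $h,\eta\asymp n^{-1/(2\beta+2)}$, which is what converts the sub-Gaussian tail into a genuinely polynomial probability.
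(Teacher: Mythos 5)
Your proposal is correct and follows essentially the same route as the paper's proof: both isolate the deterministic bias $\Delta^2(x)=O(h^{2\beta}+\eta^{2\beta})=O(n^{-2\beta/(2\beta+2)})$ via Lemma \ref{lemma-ad2-001}, bound $\widehat{\rho}_{x_0}''(x)$ by $(\Delta^2(x)+|S(x)|)/(\sigma\sqrt{2})$ using (\ref{d bound}), and control $\max_x|S(x)|$ through Lemma \ref{Lemma s} with $z\asymp\sqrt{\ln n}$ plus a union bound over the search window, choosing the constant large enough to beat the polynomial cardinality. The only cosmetic difference is that you union-bound over $M=O(n)$ pixels while the paper uses $\operatorname{card}\mathbf{U}_{x_0,h}'\asymp n^{2\beta/(2\beta+2)}$; both are absorbed by taking $c_3$ large.
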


\begin{proof}
Using Lemma \ref{Lemma s}, there are two constants $c_{4},$ $c_{5}$ such
that, for any $z$ satisfying $0\leq z\leq c_{4}\left( m^{\prime \prime
}\right) ^{1/2},$%
\begin{eqnarray*}
\mathbb{P}\left( \max_{x\in \mathbf{U}_{x_{0},h}^{\prime }}\left\vert
S(x)\right\vert \geq \frac{z}{\sqrt{m^{\prime \prime }}}\right) &\leq
&\sum_{x\in \mathbf{U}_{x_{0},h}^{\prime }}\mathbb{P}\left( \left\vert
S(x)\right\vert \geq \frac{z}{\sqrt{m^{\prime \prime }}}\right) \\
&\leq &2m^{\prime \prime }\exp \left( -c_{5}z^{2}\right) .
\end{eqnarray*}

Recall that $m^{\prime \prime }=n\eta ^{2}/2=c_{7}n^{\frac{2\beta }{2\beta +2%
}}.$ Leting $z=\sqrt{c_{6}\log m^{\prime \prime }}$ and choosing $c_{6}$
sufficiently large we obtain
\begin{equation}
\mathbb{P}\left( \max_{x\in \mathbf{U}_{x_{0},h}^{\prime }}\left\vert
S(x)\right\vert \geq c_{8}n^{-\frac{\beta }{2\beta +2}}\sqrt{\ln n}\right)
\leq \frac{c_{9}}{n^{2}}.  \label{ineq p b}
\end{equation}
Using Lemma \ref{lemma-ad2-001} and the local H\"{o}lder condition (\ref%
{Local Holder cond}) we have $\Delta ^{2}(x)\leq cL^{2}h^{2\beta },$ for $%
x\in \mathbf{U}_{x_{0},h}^{\prime }.$ From (\ref{rho hat bound}) and (\ref{d
bound}), with probability $1-O\left( n^{-2}\right) ,$ we have%
\begin{eqnarray*}
\max_{x\in \mathbf{U}_{x_{0},h}^{\prime }}\widehat{\rho }_{x_{0}}^{\prime
\prime }(x) &\leq &\max_{x\in \mathbf{U}_{x_{0},h}^{\prime }}\frac{\Delta
^{2}(x)+\left\vert S(x)\right\vert }{\sqrt{\Delta ^{2}(x)+S\left( x\right)
+2\sigma ^{2}}+\sqrt{2\sigma ^{2}}} \\
&\leq &\frac{cL^{2}h^{2\beta }+c_{8}n^{-\frac{\beta }{2\beta +2}}\sqrt{\ln n}%
}{\sqrt{2\sigma ^{2}}}.
\end{eqnarray*}%
Since $h=O\left( n^{-\frac{1}{2\beta +2}}\right) ,$ this gives the desired
result.
\end{proof}


We then prove that given $\{Y(x),x\in \mathbf{I}_{x_{0}}^{\prime \prime }\},$
the conditional expectation of $|\widehat{f}_{h,\eta }^{\prime
}(x_{0})-f(x_{0})|$ is of order $O\left( n^{-\frac{2\beta }{2\beta +2}}\sqrt{%
\ln n}\right) $ with probability $1-O\left( n^{-2}\right) .$

\begin{lemma}
\label{s5ef}Suppose that the conditions of Theorem \ref{Th adapt 001} are
satisfied. Then
\begin{equation*}
\mathbb{P}\left( \mathbb{E}\{|\widehat{f}_{h,\eta }^{\prime
}(x_{0})-f(x_{0})|^{2}\,\,\big|\,\,Y(x),x\in \mathbf{I}_{x_{0}}^{\prime
\prime }\}\geq cn^{-\frac{2\beta }{2\beta +2}}\ln n\right) =O(n^{-2}),
\end{equation*}%
where $c>0$ is a constant depending only on $\beta ,$ $L$ and $\sigma .$
\end{lemma}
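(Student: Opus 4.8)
The plan is to exploit the independence built into the splitting $\mathbf{I}=\mathbf{I}_{x_{0}}^{\prime}\cup\mathbf{I}_{x_{0}}^{\prime\prime}$. Conditionally on the $\sigma$-field $\mathcal{F}^{\prime\prime}=\sigma\left(Y(x):x\in\mathbf{I}_{x_{0}}^{\prime\prime}\right)$, the weights $\widehat{w}^{\prime\prime}$ defined by (\ref{s3ww}) are deterministic, since each $\widehat{\rho}_{x_{0}}^{\prime\prime}(x)$ is a function of the data patches $\mathbf{Y}_{x,\eta}^{\prime\prime},\mathbf{Y}_{x_{0},\eta}^{\prime\prime}$ which live entirely on $\mathbf{I}_{x_{0}}^{\prime\prime}$. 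On the other hand, $\widehat{f}_{h,\eta}^{\prime}(x_{0})$ averages $Y(x)$ over $x\in\mathbf{U}_{x_{0},h}^{\prime}\subset\mathbf{I}_{x_{0}}^{\prime}$, and these noises are independent of $\mathcal{F}^{\prime\prime}$. Hence the usual bias--variance decomposition applies conditionally with frozen weights, giving
\[
\mathbb{E}\{|\widehat{f}_{h,\eta}^{\prime}(x_{0})-f(x_{0})|^{2}\mid\mathcal{F}^{\prime\prime}\}=\Big(\sum_{x\in\mathbf{U}_{x_{0},h}^{\prime}}\widehat{w}^{\prime\prime}(x)(f(x)-f(x_{0}))\Big)^{2}+\sigma^{2}\sum_{x\in\mathbf{U}_{x_{0},h}^{\prime}}\widehat{w}^{\prime\prime}(x)^{2}\leq g_{\rho_{f,x_{0}}}(\widehat{w}^{\prime\prime}),
\]
with $g$ as in (\ref{def gw}) restricted to $\mathbf{U}_{x_{0},h}^{\prime}$. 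It then suffices to show $g_{\rho_{f,x_{0}}}(\widehat{w}^{\prime\prime})=O\left(n^{-\frac{2\beta}{2\beta+2}}\ln n\right)$ off an event of probability $O(n^{-2})$.

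First I would bound the bias term deterministically. Since $h=c_{1}n^{-\frac{1}{2\beta+2}}$, the local H\"older condition (\ref{Local Holder cond}) gives $\rho_{f,x_{0}}(x)=|f(x)-f(x_{0})|\leq L\Vert x-x_{0}\Vert_{\infty}^{\beta}\leq Lh^{\beta}=Lc_{1}^{\beta}n^{-\frac{\beta}{2\beta+2}}$ for every $x\in\mathbf{U}_{x_{0},h}^{\prime}$. As the weights are nonnegative and sum to one, the weighted average of $\rho_{f,x_{0}}$ is at most $Lh^{\beta}$, so the squared bias is $O\left(n^{-\frac{2\beta}{2\beta+2}}\right)$ surely, with no probabilistic input.

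The key observation for the variance term $\sigma^{2}\sum(\widehat{w}^{\prime\prime})^{2}$ is that it appears identically in $g_{\rho}(w)$ for \emph{every} choice of $\rho$; in particular it is dominated by the optimal value $g_{\widehat{\rho}^{\prime\prime}}(\widehat{w}^{\prime\prime})$, which I control by optimality. Since $\widehat{w}^{\prime\prime}$ minimizes $g_{\widehat{\rho}^{\prime\prime}}$ over admissible weights, I would compare it with the uniform weights $w_{0}(x)\equiv 1/M^{\prime}$, $M^{\prime}=\mathrm{card}\,\mathbf{U}_{x_{0},h}^{\prime}=nh^{2}/2$, obtaining
\[
\sigma^{2}\sum_{x\in\mathbf{U}_{x_{0},h}^{\prime}}\widehat{w}^{\prime\prime}(x)^{2}\leq g_{\widehat{\rho}^{\prime\prime}}(\widehat{w}^{\prime\prime})\leq g_{\widehat{\rho}^{\prime\prime}}(w_{0})=\Big(\frac{1}{M^{\prime}}\sum_{x}\widehat{\rho}_{x_{0}}^{\prime\prime}(x)\Big)^{2}+\frac{\sigma^{2}}{M^{\prime}}.
\]
The last term equals $2\sigma^{2}/(nh^{2})=O\left(n^{-\frac{2\beta}{2\beta+2}}\right)$ by the choice of $h$. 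For the remaining term I would invoke Lemma \ref{lm5_4}: off an event of probability $O(n^{-2})$ one has $\max_{x\in\mathbf{U}_{x_{0},h}^{\prime}}\widehat{\rho}_{x_{0}}^{\prime\prime}(x)\leq c_{3}n^{-\frac{\beta}{2\beta+2}}\sqrt{\ln n}$, so this term is at most $c_{3}^{2}n^{-\frac{2\beta}{2\beta+2}}\ln n$. Combining yields $\sigma^{2}\sum(\widehat{w}^{\prime\prime})^{2}=O\left(n^{-\frac{2\beta}{2\beta+2}}\ln n\right)$ on this event.

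Adding the deterministic bias bound to the variance bound gives $g_{\rho_{f,x_{0}}}(\widehat{w}^{\prime\prime})=O\left(n^{-\frac{2\beta}{2\beta+2}}\ln n\right)$ on the complement of the Lemma \ref{lm5_4} exceptional event, which is exactly the claimed probability estimate. I expect the main obstacle to be precisely this variance control: a priori the minimizer of $g_{\widehat{\rho}^{\prime\prime}}$ could concentrate on very few pixels and blow up $\sum(\widehat{w}^{\prime\prime})^{2}$. What rescues the argument is the uniform smallness of $\widehat{\rho}_{x_{0}}^{\prime\prime}$ from Lemma \ref{lm5_4}, which makes the well-spread uniform weights a near-optimal competitor, so that optimality forces the variance of $\widehat{w}^{\prime\prime}$ itself to be small. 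The clean separation --- bias handled directly through the true $\rho_{f,x_{0}}$ and H\"older continuity, variance transferred from the estimated objective because the variance term is independent of $\rho$ --- is what makes the conditional analysis go through, and the only probabilistic ingredient required is Lemma \ref{lm5_4}.
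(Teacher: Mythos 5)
Your proof is correct, and it follows the same skeleton as the paper's: condition on $\mathcal{F}^{\prime\prime}=\sigma(Y(x):x\in\mathbf{I}_{x_0}^{\prime\prime})$ so that the weights freeze and the exact conditional bias--variance identity holds, bound the bias deterministically by $L^2h^{2\beta}$ via the H\"older condition, control the variance through the optimality of $\widehat{w}^{\prime\prime}$ for the empirical objective $g_{\widehat{\rho}^{\prime\prime}}$, and invoke Lemma \ref{lm5_4} as the sole probabilistic input. The one genuine difference is the competitor used in the optimality step. The paper compares $\widehat{w}^{\prime\prime}$ against $w_1^{\ast}=\arg\min_w g_1(w)$, the oracle-optimal weights for the true $\rho_{f,x_0}$, and then must bound $\sigma^2\sum_x w_1^{\ast 2}(x)\le g_1(w_1^{\ast})=O\bigl(n^{-\frac{2\beta}{2\beta+2}}\bigr)$ by appealing to Lemma \ref{lm5_2} (applied with $\mathbf{U}_{x_0,h}^{\prime}$ in place of $\mathbf{U}_{x_0,h}$), which in turn rests on the asymptotic computation of the optimal bandwidth $a$ and the condition $c_1>c_0$. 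You instead compare against the uniform weights $w_0\equiv 1/M^{\prime}$, for which $g_{\widehat{\rho}^{\prime\prime}}(w_0)$ is computed in closed form: the variance part is exactly $\sigma^2/M^{\prime}=O\bigl(n^{-\frac{2\beta}{2\beta+2}}\bigr)$ by the choice of $h$, and the bias part is controlled by the same uniform bound on $\widehat{\rho}_{x_0}^{\prime\prime}$ from Lemma \ref{lm5_4}. This buys a more elementary and self-contained argument for this lemma --- no appeal to Lemma \ref{lm5_2} and no dependence on the fine constants in the bandwidth analysis --- at no cost in the rate, since the $\ln n$ factor already dominates. The paper's choice of $w_1^{\ast}$ is the more natural one if one wants the bound to track the oracle risk $g_1(w_1^{\ast})$ itself, but for the stated $O\bigl(n^{-\frac{2\beta}{2\beta+2}}\ln n\bigr)$ conclusion your route suffices and is cleaner.
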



\begin{proof}
By (\ref{OWFilter prim}) and the independence of $\varepsilon (x)$, we have
\begin{equation}
\begin{split}
& \mathbb{E}\{|\widehat{f}_{h,\eta }^{\prime }(x_{0})-f(x_{0})|^{2}\,\,\big|%
\,\,Y(x),x\in \mathbf{I}_{x_{0}}^{\prime \prime }\} \\
\leq \,\,& \left( \sum_{x\in \mathbf{U}_{x_{0},h}^{\prime }}\widehat{w}%
^{\prime \prime }(x)\rho _{f,x_{0}}(x)\right) ^{2}+\sigma ^{2}\sum_{x\in
\mathbf{U}_{x_{0},h}^{\prime }}\widehat{w}^{\prime \prime 2}(x).
\end{split}
\label{eq Lemma experance}
\end{equation}%
Since $\rho _{f,x_{0}}(x)<Lh^{\beta }$, from (\ref{eq Lemma experance}) we
get 
\begin{align}
& \mathbb{E}\{|\widehat{f}_{h,\eta }^{\prime }(x_{0})-f(x_{0})|^{2}\,\,\big|%
\,\,Y(x),x\in \mathbf{I}_{x_{0}}^{\prime \prime }\}  \notag \\*
\leq \,\,& \left( \sum_{x\in \mathbf{U}_{x_{0},h}^{\prime }}\widehat{w}%
^{\prime \prime \beta }\right) ^{2}+\sigma ^{2}\sum_{x\in \mathbf{U}%
_{x_{0},h}^{\prime }}\widehat{w}^{\prime \prime 2}(x)  \notag \\*
\leq \,\,& L^{2}h^{2\beta }+\sigma ^{2}\sum_{x\in \mathbf{U}%
_{x_{0},h}^{\prime }}\widehat{w}^{\prime \prime 2}(x)  \notag \\*
\leq \,\,& \left( \sum_{x\in \mathbf{U}_{x_{0},h}^{\prime }}\widehat{w}%
^{\prime \prime }(x)\widehat{\rho }_{x_{0}}^{\prime \prime }(x)\right)
^{2}+\sigma ^{2}\sum_{x\in \mathbf{U}_{x_{0},h}^{\prime }}\widehat{w}%
^{\prime \prime 2}(x)+L^{2}h^{2\beta }.  \label{espect mse}
\end{align}

Let $w_{1}^{\ast }=\arg \min\limits_{w}g_{1}(w)$, where
\begin{equation}
g_{1}(w)=\left( \sum_{x\in \mathbf{U}_{x_{0},h}^{\prime }}{w}(x)\rho
_{f,x_{0}}(x)\right) ^{2}+\sigma ^{2}\sum_{x\in \mathbf{U}_{x_{0},h}^{\prime
}}{w}^{2}(x).  \label{upper bound g1}
\end{equation}%
As $\widehat{w}^{\prime \prime }$ minimizes the function in (\ref{s3ww}),
from (\ref{espect mse}) we obtain
\begin{equation}
\begin{split}
& \mathbb{E}\{|\widehat{f}_{h,\eta }^{\prime }(x_{0})-f(x_{0})|^{2}\,\,\big|%
\,\,Y(x),x\in \mathbf{I}_{x_{0}}^{\prime \prime }\} \\
\leq \,\,& \left( \sum_{x\in \mathbf{U}_{x_{0},h}^{\prime }}{w}_{1}^{\ast
}(x)\widehat{\rho }_{x_{0}}^{\prime \prime }(x)\right) ^{2}+\sigma
^{2}\sum_{x\in \mathbf{U}_{x_{0},h}^{\prime }}{w}_{1}^{\ast
2}(x)+L^{2}h^{2\beta }.
\end{split}
\label{s5ef4}
\end{equation}%
By Lemma \ref{lm5_4}, with probability $1-O(n^{-2})$ we have
\begin{equation*}
\sum_{x\in \mathbf{U}_{x_{0},h}^{\prime }}{w}_{1}^{\ast }(x)\widehat{\rho }%
_{x_{0}}^{\prime \prime }(x)\leq c_{1}n^{-\frac{\beta }{2\beta +2}}\sqrt{\ln
n}.
\end{equation*}%
Therefore by (\ref{s5ef4}), with probability $1-O(n^{-2})$,
\begin{equation*}
\begin{split}
& \mathbb{E}\{|\widehat{f}_{h,\eta }^{\prime }(x_{0})-f(x_{0})|^{2}\,\,\big|%
\,\,Y(x),x\in \mathbf{I}_{x_{0}}^{\prime \prime }\} \\
\leq \,\,& \sigma ^{2}\sum_{x\in \mathbf{U}_{x_{0},h}^{\prime }}{w}%
_{1}^{\ast 2}(x)+c_{1}^{2}n^{-\frac{2\beta }{2\beta +2}}\ln n+L^{2}h^{2\beta
} \\
\leq \,\,& g_{1}(w_{1}^{\ast })+c_{1}^{2}n^{-\frac{2\beta }{2\beta +2}}\ln
n+L^{2}h^{2\beta }.
\end{split}%
\end{equation*}%
This gives the assertion of Lemma \ref{lm5_6}, as $h^{2\beta }=O\left( n^{-%
\frac{2\beta }{2\beta +2}}\right) $ and $g_{1}(w_{1}^{\ast })=O\left( n^{-%
\frac{2\beta }{2\beta +2}}\right) ,$ by Lemma \ref{lm5_2} with $\mathbf{U}%
_{x_{0},h}^{\prime }$ instead of $\mathbf{U}_{x_{0},h}.$
\end{proof}

Now we are ready to prove Theorem \ref{Th adapt 001}.\newline
\textit{Proof of Theorem \ref{Th adapt 001}.} Since the function $f$
satisfies H\"{o}lder's condition, by the definition of $g_{1}(w)$ (cf. (\ref%
{upper bound g1})) we have%
\begin{equation*}
\begin{split}
g_{1}(w)& \leq \left( \sum_{x\in \mathbf{U}_{x_{0},h}^{\prime
}}w(x)Lh^{\beta }\right) ^{2}+\sigma ^{2}\sum_{x\in \mathbf{U}%
_{x_{0},h}^{\prime }}w^{2}(x) \\
& \leq L^{2}h^{2\beta }+\sigma ^{2}\leq L^{2}+\sigma ^{2},
\end{split}%
\end{equation*}%
so that%
\begin{equation*}
\mathbb{E}\left( |\widehat{f}_{h,\eta }^{\prime }(x_{0})-f(x_{0})|^{2}\,\,%
\big|\,\,Y(x),x\in \mathbf{I}_{x_{0}}^{\prime \prime }\right) \leq g_{1}(%
\widehat{w}^{\prime \prime })\leq L^{2}+\sigma ^{2}.
\end{equation*}%
Denote by $X$ the conditional expectation in the above display and write $%
\mathbbm{1}\{\cdot \}$ for the indicator function of the set $\{\cdot \}$.
Then%
\begin{equation*}
\begin{split}
\mathbb{E}X& =\mathbb{E}X\cdot \mathbbm{1}{\{X\geq c n^{-\frac{2\beta
}{2\beta +2}}\ln n\}}+\mathbb{E}X\cdot \mathbbm{1}{\{X< c n^{-\frac{2\beta
}{2\beta +2}}\ln n\}} \\
& \leq \left( L^{2}+\sigma ^{2}\right) \mathbb{P}\{X\geq cn^{-\frac{2\beta }{%
2\beta +2}}\ln n\}+cn^{-\frac{2\beta }{2\beta +2}}\ln n.
\end{split}%
\end{equation*}%
So applying Lemma \ref{s5ef}, we see that
\begin{equation*}
\begin{split}
\mathbb{E}\left( |\widehat{f}_{h,\eta }^{\prime
}(x_{0})-f(x_{0})|^{2}\right) & =\mathbb{E}X \\
& \leq O(n^{-2})+cn^{-\frac{2\beta }{2\beta +2}}\ln n \\
& =O\left( n^{-\frac{2\beta }{2\beta +2}}\ln n\right) .
\end{split}%
\end{equation*}%
This proves Theorem \ref{Th adapt 001}.

\subsection{\label{Sec: proof of Th adapt 002}Proof of Theorem \protect\ref%
{Th adapt 002}}

We keep the notations of the prevoius subsection. The following result gives
a two sided bound for $\widehat{\rho }_{x_{0}}^{\prime \prime }(x).$

\begin{lemma}
\label{lemma-ad2-002} Suppose that the function $f$ satisfies the local H%
\"{o}lder condition (\ref{Local Holder cond}). Assume that $%
h=c_{1}n^{-\alpha }$ with $c_{1}>0$ and $\alpha <\frac{1}{2\beta +2}$ and
that $\eta =c_{2}n^{-\frac{1}{2\beta +2}}.$ Then there exists positive
constants $c_{3},$ $c_{4},$ $c_{5}$ and $c_{6}$ depending only on $\beta ,$ $%
L$ and $\sigma $, such that
\begin{equation}
\mathbb{P}\left\{ \max_{x\in \mathbf{U}_{x_{0},h}}\left( \widehat{\rho }%
_{x_{0}}^{\prime \prime }(x)-c_{3}\rho _{f,x_{0}}^{2}\left( x\right) \right)
\leq c_{4}n^{-\frac{\beta }{2\beta +2}}\sqrt{\ln n}\right\} =1-O\left(
n^{-2}\right)   \label{eq ad2-001}
\end{equation}%
and%
\begin{equation}
\mathbb{P}\left\{ \max_{x\in \mathbf{U}_{x_{0},h}}\left( \rho
_{f,x_{0}}^{2}\left( x\right) -c_{5}\widehat{\rho }_{x_{0}}^{\prime \prime
}(x)\right) \leq c_{6}n^{-\frac{\beta }{2\beta +2}}\sqrt{\ln n}\right\}
=1-O\left( n^{-2}\right) .  \label{eq ad2-002}
\end{equation}
\end{lemma}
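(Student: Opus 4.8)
The plan is to establish both bounds \emph{deterministically} on a single event of probability $1-O(n^{-2})$ on which the stochastic term $S(x)$ in the decomposition (\ref{d bound}) is uniformly small, the deterministic input being the two-sided control of $\Delta^2(x)$ from Lemma \ref{lemma-ad2-001}. Exactly as in the proof of Lemma \ref{lm5_4}, I would combine the large-deviation estimate of Lemma \ref{Lemma s} with a union bound over the pixels entering the maximum. Here $\mathrm{card}\,\mathbf{U}_{x_0,h}=O(nh^2)=O(n^{1-2\alpha})$ is polynomial in $n$ because $\alpha<\frac{1}{2\beta+2}<\tfrac12$, so taking the threshold $z=\sqrt{c\log m''}$ with $c$ large enough yields a constant $c_7$ and an event $\mathcal{A}$ with $\mathbb{P}(\mathcal{A})=1-O(n^{-2})$ on which $\max_{x\in\mathbf{U}_{x_0,h}}|S(x)|\le c_7\,n^{-\frac{\beta}{2\beta+2}}\sqrt{\ln n}$; write $\tau_n=n^{-\frac{\beta}{2\beta+2}}\sqrt{\ln n}$. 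I would also record the elementary facts $\eta^{2\beta}=c_2^{2\beta}n^{-\frac{2\beta}{2\beta+2}}=O(\tau_n)$ and $h\le c_1$, so that $\rho_{f,x_0}(x)\le Lh^\beta\le Lc_1^\beta$ is bounded. On $\mathcal{A}$ we have $\Delta^2(x)+S(x)+2\sigma^2\ge 2\sigma^2-c_7\tau_n>0$ for large $n$, so (\ref{d bound}) is valid.

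For (\ref{eq ad2-001}) I would bound the denominator in (\ref{d bound}) below by $\sqrt2\,\sigma$; together with (\ref{rho hat bound}) this gives $\widehat{\rho}_{x_0}^{\prime\prime}(x)\le(\Delta^2(x)+|S(x)|)/(\sqrt2\,\sigma)$. Substituting the upper bound $\Delta^2(x)\le 3\rho_{f,x_0}^2(x)+6L^2\eta^{2\beta}$ from Lemma \ref{lemma-ad2-001} and using $|S(x)|\le c_7\tau_n$ and $\eta^{2\beta}=O(\tau_n)$ yields, uniformly on $\mathcal{A}$,
\[
\widehat{\rho}_{x_0}^{\prime\prime}(x)\le \frac{3}{\sqrt2\,\sigma}\,\rho_{f,x_0}^2(x)+c_4\tau_n,
\]
which is (\ref{eq ad2-001}) with $c_3=3/(\sqrt2\,\sigma)$.

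For (\ref{eq ad2-002}) the map $\rho^2\mapsto\widehat{\rho}$ is nonlinear (through the positive part), so I would split on the size of $\rho_{f,x_0}^2(x)$. Fix a large constant $C$. If $\rho_{f,x_0}^2(x)\le C\tau_n$, then since $\widehat{\rho}_{x_0}^{\prime\prime}(x)\ge0$ the quantity $\rho_{f,x_0}^2(x)-c_5\widehat{\rho}_{x_0}^{\prime\prime}(x)\le C\tau_n$ trivially. If $\rho_{f,x_0}^2(x)>C\tau_n$, the lower bound $\Delta^2(x)\ge\frac13\rho_{f,x_0}^2(x)-2L^2\eta^{2\beta}$ of Lemma \ref{lemma-ad2-001}, with $|S(x)|\le c_7\tau_n$ and $\eta^{2\beta}=O(\tau_n)$, gives $\Delta^2(x)+S(x)\ge\frac13\rho_{f,x_0}^2(x)-C_1\tau_n\ge\frac16\rho_{f,x_0}^2(x)>0$ once $C\ge 6C_1$. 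Then the positive part in (\ref{rho hat bound}) is inactive, and bounding the denominator of (\ref{d bound}) above by a constant $C''$ (finite since $\rho$, $\eta^{2\beta}$ and $S$ are all bounded on $\mathcal{A}$) gives $\widehat{\rho}_{x_0}^{\prime\prime}(x)\ge\rho_{f,x_0}^2(x)/(6C'')$, i.e.\ $\rho_{f,x_0}^2(x)\le 6C''\,\widehat{\rho}_{x_0}^{\prime\prime}(x)$. With $c_5=6C''$ the left-hand side of (\ref{eq ad2-002}) is nonpositive in this case; taking the maximum over $x$ and combining the two cases gives (\ref{eq ad2-002}) with $c_6=C$.

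The routine algebra is all deterministic on $\mathcal{A}$; the two points needing care are, first, verifying that the union bound still closes at rate $n^{-2}$ after $h$ has been enlarged to order $n^{-\alpha}$ (the number of pixels grows to $O(n^{1-2\alpha})$ but stays polynomial, while $z/\sqrt{m''}=O(\tau_n)$ because $m''=n\eta^2/2$ is a fixed power of $n$); and second, the case split in the lower bound, which is unavoidable because $\widehat{\rho}_{x_0}^{\prime\prime}$ can be compared to $\rho_{f,x_0}^2$ only in the regime where $\rho_{f,x_0}^2$ dominates the noise level $\tau_n$, the complementary small-$\rho$ regime being absorbed into the $\tau_n$ error term.
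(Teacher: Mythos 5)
Your proof is correct and follows essentially the same route as the paper: the same high-probability event controlling $\max_{x}|S(x)|$ via Lemma \ref{Lemma s} and a union bound, combined with the two-sided bound on $\Delta^{2}(x)$ from Lemma \ref{lemma-ad2-001} and the identity (\ref{d bound}), bounding the denominator below by $\sqrt{2}\sigma$ for the upper bound and above by a constant for the lower bound. The only cosmetic difference is that in the lower bound the paper avoids your case split on the size of $\rho_{f,x_{0}}^{2}(x)$ by using $(\Delta^{2}(x)+S(x))^{+}\geq \Delta^{2}(x)-|S(x)|$ and absorbing the resulting $-|S(x)|-O(\eta^{2\beta})$ terms directly into the $O(n^{-\beta/(2\beta+2)}\sqrt{\ln n})$ error, so the split is not actually unavoidable as you claim.
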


\begin{proof}
As in the proof of Lemma \ref{lm5_4}, we have
\begin{equation*}
\mathbb{P}\left( \max_{x\in \mathbf{U}_{x_{0},h}^{\prime }}\left\vert
S(x)\right\vert \geq c_{7}n^{-\frac{\beta }{2\beta +2}}\sqrt{\ln n}\right)
\leq \frac{c_{8}}{n^{2}}.
\end{equation*}%
Using Lemma \ref{lemma-ad2-001}, for any $x\in \mathbf{U}_{x_{0},h}^{\prime
},$%
\begin{equation}
\frac{1}{3}\rho _{f,x_{0}}^{2}\left( x\right) -2L^{2}\eta ^{2\beta }\leq
\Delta ^{2}\left( x\right) \leq 3\rho _{f,x_{0}}^{2}\left( x\right)
+6L^{2}\eta ^{2\beta }.  \label{eqadxxx002}
\end{equation}%
From (\ref{rho hat bound}) we have%
\begin{equation*}
d\left( \mathbf{Y}_{x,\eta }^{\prime \prime },\mathbf{Y}_{x_{0},\eta
}^{\prime \prime }\right) -\sigma \sqrt{2}=\frac{\Delta ^{2}(x)+S(x)}{\sqrt{%
\Delta ^{2}(x)+S\left( x\right) +2\sigma ^{2}}+\sqrt{2\sigma ^{2}}}.
\end{equation*}%
For the upper bound we have, for any $x\in \mathbf{U}_{x_{0},h}^{\prime },$%
\begin{equation*}
\widehat{\rho }_{x_{0}}^{\prime \prime }(x)=\left( d\left( \mathbf{Y}%
_{x,\eta }^{\prime \prime },\mathbf{Y}_{x_{0},\eta }^{\prime \prime }\right)
-\sigma \sqrt{2}\right) ^{+}\leq \frac{3\rho _{f,x_{0}}^{2}\left( x\right)
+6L^{2}\eta ^{2\beta }+\left\vert S(x)\right\vert }{\sqrt{2\sigma ^{2}}}
\end{equation*}%
Therefore, with probability $1-O\left( n^{-2}\right) ,$%
\begin{eqnarray*}
\max_{x\in \mathbf{U}_{x_{0},h}^{\prime }}\left( \widehat{\rho }%
_{x_{0}}^{\prime \prime }(x)-\frac{3\rho _{f,x_{0}}^{2}\left( x\right) }{%
\sqrt{2\sigma ^{2}}}\right)  &\leq &\frac{6L^{2}\eta ^{2\beta }+c_{7}n^{-%
\frac{\beta }{2\beta +2}}\sqrt{\ln n}}{\sqrt{2\sigma ^{2}}} \\
&\leq &c_{8}n^{-\frac{\beta }{2\beta +2}}\sqrt{\ln n}.
\end{eqnarray*}

For the lower bound, we have, for any $x\in \mathbf{U}_{x_{0},h}^{\prime },$
we have%
\begin{eqnarray*}
\widehat{\rho }_{x_{0}}^{\prime \prime }(x) &=&\left( d\left( \mathbf{Y}%
_{x,\eta }^{\prime \prime },\mathbf{Y}_{x_{0},\eta }^{\prime \prime }\right)
-\sigma \sqrt{2}\right) ^{+} \\
&=&\frac{\left( \Delta ^{2}(x)+S(x)\right) ^{+}}{\sqrt{\Delta
^{2}(x)+S\left( x\right) +2\sigma ^{2}}+\sqrt{2\sigma ^{2}}} \\
&\geq &\frac{\left( \Delta ^{2}(x)+S(x)\right) ^{+}}{\sqrt{\Delta
^{2}(x)+c_{7}n^{-\frac{\beta }{2\beta +2}}\sqrt{\ln n}+2\sigma ^{2}}+\sqrt{%
2\sigma ^{2}}} \\
&\geq &c_{9}\left( \Delta ^{2}(x)+S(x)\right) ^{+} \\
&\geq &c_{9}\left( \Delta ^{2}(x)-\left\vert S(x)\right\vert \right) .
\end{eqnarray*}%
Taking into account (\ref{eqadxxx002}), on the set $\left\{ \max_{x\in
\mathbf{U}_{x_{0},h}^{\prime }}\left\vert S(x)\right\vert <c_{7}n^{-\frac{%
\beta }{2\beta +2}}\sqrt{\ln n}\right\} ,$%
\begin{eqnarray*}
\widehat{\rho }_{x_{0}}^{\prime \prime }(x) &\geq &c_{9}\left( \frac{1}{3}%
\rho _{f,x_{0}}^{2}\left( x\right) -2L^{2}\eta ^{2\beta }-\left\vert
S(x)\right\vert \right)  \\
&\geq &c_{10}\left( \rho _{f,x_{0}}^{2}\left( x\right) -\eta ^{2\beta }-n^{-%
\frac{\beta }{2\beta +2}}\sqrt{\ln n}\right)
\end{eqnarray*}

Therefore, with probability $1-O\left( n^{-2}\right) ,$%
\begin{eqnarray*}
\max_{x\in \mathbf{U}_{x_{0},h}^{\prime }}\left( c_{10}\rho _{f,x_{0}}^{2}\left(
x\right) -\widehat{\rho }_{x_{0}}^{\prime \prime }(x)\right)  &\leq
&c_{10}\left( \eta ^{2\beta }+n^{-\frac{\beta }{2\beta +2}}\sqrt{\ln n}%
\right)  \\
&\leq &c_{11}n^{-\frac{\beta }{2\beta +2}}\sqrt{\ln n}.
\end{eqnarray*}%
So the lemma is proved.
\end{proof}


We then prove that given $\{Y(x),x\in \mathbf{I}_{x_{0}}^{\prime \prime }\}$%
, the conditional expectation of $|\widehat{f}_{h,\eta }^{\prime
}(x_{0})-f(x_{0})|$ is of order $O\left( n^{-\frac{\beta }{2\beta +2}}\sqrt{%
\ln n}\right) $ with probability $1-O\left( n^{-2}\right) $.

\begin{lemma}
\label{lemma-ad-003}Suppose that the conditions of Theorem \ref{Th adapt 002}
are satisfied. Then
\begin{equation*}
\mathbb{P}\left( \mathbb{E}\{|\widehat{f}_{h,\eta }^{\prime
}(x_{0})-f(x_{0})|^{2}\,\,\big|\,\,Y(x),x\in \mathbf{I}_{x_{0}}^{\prime
\prime }\}\geq cn^{-\frac{\beta }{2\beta +2}}\ln n\right) =O(n^{-2}),
\end{equation*}%
where $c>0$ is a constant depending only on $\beta $, $L$ and $\sigma $. %
\label{lm5_6}
\end{lemma}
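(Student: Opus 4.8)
The plan is to follow the route of Lemma~\ref{s5ef}, but to replace the crude bias bound $\rho_{f,x_0}(x)\le Lh^\beta$---useless here, because $h=c_1n^{-\alpha}$ is now of larger order $n^{-\alpha}$ with $\alpha\le\frac{1}{2\beta+2}$---by the two-sided estimates of Lemma~\ref{lemma-ad2-002}. First I would record the conditional bias--variance decomposition exactly as in (\ref{eq Lemma experance}): since $\widehat{w}^{\prime\prime}$ is a function of $\{Y(x):x\in\mathbf{I}_{x_0}^{\prime\prime}\}$ only, while the average in (\ref{OWFilter prim}) runs over $\mathbf{U}_{x_0,h}^{\prime}\subset\mathbf{I}_{x_0}^{\prime}$, the independence of the two halves of the image gives
\[
\mathbb{E}\{|\widehat{f}_{h,\eta}^{\prime}(x_0)-f(x_0)|^2\mid Y(x),x\in\mathbf{I}_{x_0}^{\prime\prime}\}\le\left(\sum_{x}\widehat{w}^{\prime\prime}(x)\rho_{f,x_0}(x)\right)^2+\sigma^2\sum_{x}\widehat{w}^{\prime\prime 2}(x),
\]
where here and below all sums run over $x\in\mathbf{U}_{x_0,h}^{\prime}$. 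Writing $g_{\widehat{\rho}^{\prime\prime}}(w)=(\sum_x w(x)\widehat{\rho}_{x_0}^{\prime\prime}(x))^2+\sigma^2\sum_x w^2(x)$ for the objective minimized in (\ref{s3ww}), the task is to bound both terms with probability $1-O(n^{-2})$.

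For the bias I would invoke Jensen's inequality (the weights are nonnegative and sum to one) to pass to $\sum_x\widehat{w}^{\prime\prime}(x)\rho_{f,x_0}^2(x)$, then use the bound (\ref{eq ad2-002}) of Lemma~\ref{lemma-ad2-002}, which on an event of probability $1-O(n^{-2})$ gives $\rho_{f,x_0}^2(x)\le c_5\widehat{\rho}_{x_0}^{\prime\prime}(x)+c_6n^{-\frac{\beta}{2\beta+2}}\sqrt{\ln n}$ uniformly in $x$. This yields
\[
\left(\sum_x\widehat{w}^{\prime\prime}\rho_{f,x_0}\right)^2\le\sum_x\widehat{w}^{\prime\prime}\rho_{f,x_0}^2\le c_5\sum_x\widehat{w}^{\prime\prime}\widehat{\rho}_{x_0}^{\prime\prime}+c_6n^{-\frac{\beta}{2\beta+2}}\sqrt{\ln n}\le c_5\sqrt{g_{\widehat{\rho}^{\prime\prime}}(\widehat{w}^{\prime\prime})}+c_6n^{-\frac{\beta}{2\beta+2}}\sqrt{\ln n},
\]
the last inequality because $g_{\widehat{\rho}^{\prime\prime}}(\widehat{w}^{\prime\prime})\ge(\sum_x\widehat{w}^{\prime\prime}\widehat{\rho}_{x_0}^{\prime\prime})^2$. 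The variance term is trivially $\le g_{\widehat{\rho}^{\prime\prime}}(\widehat{w}^{\prime\prime})$, so the conditional risk is at most $c_5\sqrt{g_{\widehat{\rho}^{\prime\prime}}(\widehat{w}^{\prime\prime})}+g_{\widehat{\rho}^{\prime\prime}}(\widehat{w}^{\prime\prime})+c_6n^{-\frac{\beta}{2\beta+2}}\sqrt{\ln n}$.

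It then remains to show $g_{\widehat{\rho}^{\prime\prime}}(\widehat{w}^{\prime\prime})=O(n^{-\frac{2\beta}{2\beta+2}}\ln n)$ with probability $1-O(n^{-2})$. By minimality of $\widehat{w}^{\prime\prime}$ I would compare it with the oracle weights $\overline{w}$ for $\overline{\rho}(x):=L\Vert x-x_0\Vert_\infty^\beta$ on $\mathbf{U}_{x_0,h}^{\prime}$ (the analogue on $\mathbf{U}_{x_0,h}^{\prime}$ of $\overline{w}$ in the proof of Theorem~\ref{Th oracle 001}), so that $g_{\widehat{\rho}^{\prime\prime}}(\widehat{w}^{\prime\prime})\le g_{\widehat{\rho}^{\prime\prime}}(\overline{w})$. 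Since $h$ is at least of the critical order $n^{-1/(2\beta+2)}$, Lemma~\ref{lm5_2} (applied on $\mathbf{U}_{x_0,h}^{\prime}$) gives $\sigma^2\sum_x\overline{w}^2\le\overline{g}(\overline{w})=O(n^{-\frac{2\beta}{2\beta+2}})$ and $\overline{a}=O(n^{-\frac{\beta}{2\beta+2}})$; as $\overline{w}$ is supported where $\overline{\rho}<\overline{a}$, on that support $\rho_{f,x_0}^2\le\overline{\rho}^2\le\overline{a}\,\overline{\rho}$, so (\ref{eq ad2-001}) gives
\[
\sum_x\overline{w}\,\widehat{\rho}_{x_0}^{\prime\prime}\le c_3\sum_x\overline{w}\,\rho_{f,x_0}^2+c_4n^{-\frac{\beta}{2\beta+2}}\sqrt{\ln n}\le c_3\overline{a}\sqrt{\overline{g}(\overline{w})}+c_4n^{-\frac{\beta}{2\beta+2}}\sqrt{\ln n}=O\left(n^{-\frac{\beta}{2\beta+2}}\sqrt{\ln n}\right),
\]
where I used $\sum_x\overline{w}\,\overline{\rho}\le\sqrt{\overline{g}(\overline{w})}$. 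Squaring, the bias part of $g_{\widehat{\rho}^{\prime\prime}}(\overline{w})$ is $O(n^{-\frac{2\beta}{2\beta+2}}\ln n)$ and dominates, which proves the claim. Substituting back, the leading contributions are $c_5\sqrt{g_{\widehat{\rho}^{\prime\prime}}(\widehat{w}^{\prime\prime})}$ and $c_6n^{-\frac{\beta}{2\beta+2}}\sqrt{\ln n}$, both $O(n^{-\frac{\beta}{2\beta+2}}\sqrt{\ln n})\le c\,n^{-\frac{\beta}{2\beta+2}}\ln n$, which is the assertion.

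The main obstacle is conceptual rather than computational. For large $h$ the window $\mathbf{U}_{x_0,h}^{\prime}$ contains pixels with $\rho_{f,x_0}(x)$ of order one, so the bias can no longer be annihilated by shrinking the window and must be controlled purely through the data-driven weights. Since $\widehat{\rho}_{x_0}^{\prime\prime}$ behaves like $\rho_{f,x_0}^2$ for small signals while carrying an additive stochastic error $|S(x)|$ of order $n^{-\frac{\beta}{2\beta+2}}\sqrt{\ln n}$, the inversion (\ref{eq ad2-002}) can only push $\rho_{f,x_0}^2$ on the retained pixels down to this noise floor. It is precisely this additive floor---rather than the window geometry---that limits the bias and degrades the rate from the optimal $n^{-\frac{2\beta}{2\beta+2}}$ to $n^{-\frac{\beta}{2\beta+2}}$, in agreement with the remark following Theorem~\ref{Th adapt 002}.
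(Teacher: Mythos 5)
Your proof is correct, and it follows the same overall skeleton as the paper's (conditional bias--variance decomposition, both directions of Lemma \ref{lemma-ad2-002}, minimality of $\widehat{w}^{\prime\prime}$, and Lemma \ref{lm5_2} on $\mathbf{U}_{x_0,h}^{\prime}$), but two of your intermediate steps genuinely diverge from the paper and are, if anything, cleaner. First, to convert the bias from $\rho_{f,x_0}$ to $\widehat{\rho}_{x_0}^{\prime\prime}$ the paper writes $\rho_{f,x_0}\lesssim\sqrt{\widehat{\rho}_{x_0}^{\prime\prime}}+\text{noise}$ and then uses a truncation of $\sqrt{\widehat{\rho}_{x_0}^{\prime\prime}}$ at level $n^{-\frac{\beta}{2(2\beta+2)}}$, which produces the multiplicative penalty $n^{\frac{\beta}{2\beta+2}}$ in front of $g_{\widehat{\rho}^{\prime\prime}}(\widehat{w}^{\prime\prime})$; you instead use Jensen on $\rho^2$ and the elementary bound $\sum_x w(x)\widehat{\rho}_{x_0}^{\prime\prime}(x)\le\sqrt{g_{\widehat{\rho}^{\prime\prime}}(w)}$, so the penalty appears as a square root of $g$ rather than a factor of $n^{\frac{\beta}{2\beta+2}}$ --- both mechanisms degrade the rate from $n^{-\frac{2\beta}{2\beta+2}}$ to $n^{-\frac{\beta}{2\beta+2}}$ and both land on the stated bound. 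Second, the paper compares $\widehat{w}^{\prime\prime}$ with $w_1^{\ast}=\arg\min g_1$ and bounds $\sum_x w_1^{\ast}\rho_{f,x_0}^2$ crudely by $L h^{\beta}\sqrt{g_1(w_1^{\ast})}$, whereas you compare with the explicit weights $\overline{w}$ for $\overline{\rho}(x)=L\Vert x-x_0\Vert_{\infty}^{\beta}$ and exploit their support $\{\overline{\rho}<\overline{a}\}$ together with $\overline{a}=O(n^{-\frac{\beta}{2\beta+2}})$ to get $\sum_x\overline{w}\,\rho_{f,x_0}^2\le\overline{a}\sqrt{\overline{g}(\overline{w})}=O(n^{-\frac{2\beta}{2\beta+2}})$; this makes the dominance of the stochastic floor $n^{-\frac{\beta}{2\beta+2}}\sqrt{\ln n}$ in $\sum_x\overline{w}\,\widehat{\rho}_{x_0}^{\prime\prime}$ completely explicit. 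The only caveat, which you inherit from the paper rather than introduce, is the borderline case $\alpha=\frac{1}{2\beta+2}$ with small $c_1$, where Lemmas \ref{lm5_2} and \ref{lemma-ad2-002} are stated under slightly more restrictive hypotheses than Theorem \ref{Th adapt 002}.
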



\begin{proof}
By (\ref{OWFilter prim}) and the independence of $\varepsilon (x)$, we have
\begin{equation*}
\mathbb{E}\{|\widehat{f}_{h,\eta }^{\prime }(x_{0})-f(x_{0})|^{2}\,\,\big|%
\,\,Y(x),x\in \mathbf{I}_{x_{0}}^{\prime \prime }\} \\
\leq \,\, \left( \sum_{x\in \mathbf{U}_{x_{0},h}^{\prime }}\widehat{w}%
^{\prime \prime }(x)\rho _{f,x_{0}}(x)\right) ^{2}+\sigma ^{2}\sum_{x\in
\mathbf{U}_{x_{0},h}^{\prime }}\widehat{w}^{\prime \prime 2}(x).
\end{equation*}%
Since, by Lemma \ref{lemma-ad2-002}, with probability $1-O\left(
n^{-2}\right) ,$

\begin{equation*}
\max_{x\in \mathbf{U}_{x_{0},h}}\left( \rho _{f,x_{0}}^{2}\left( x\right)
-c_{1}\widehat{\rho }_{x_{0}}^{\prime \prime }(x)\right) \leq c_{2}n^{-\frac{%
\beta }{2\beta +2}}\sqrt{\ln n},
\end{equation*}
we get 
(with probability $1-O\left( n^{-2}\right) $),%
\begin{align}
& \mathbb{E}\{|\widehat{f}_{h,\eta }^{\prime }(x_{0})-f(x_{0})|^{2}\,\,\big|%
\,\,Y(x),x\in \mathbf{I}_{x_{0}}^{\prime \prime }\}  \notag \\
& \leq c_{3}\,\left( \sum_{x\in \mathbf{U}_{x_{0},h}^{\prime }}\widehat{w}%
^{\prime \prime }(x)\sqrt{\widehat{\rho }_{x_{0}}^{\prime \prime }(x)}%
\right) ^{2}+c_{2}n^{-\frac{\beta }{2\beta +2}}\sqrt{\ln n}+\sigma
^{2}\sum_{x\in \mathbf{U}_{x_{0},h}^{\prime }}\widehat{w}^{\prime \prime
2}(x).  \label{expect2}
\end{align}%
A simple truncation argument, using the decomposition
\begin{eqnarray*}
\widehat{\rho }_{x_{0}}^{\prime \prime }(x) &=&\widehat{\rho }%
_{x_{0}}^{\prime \prime }(x)\mathbbm{1}\left\{ \widehat{\rho }%
_{x_{0}}^{\prime \prime }(x)\leq n^{-\frac{\beta }{2\beta +2}}\right\} \\
&&+\widehat{\rho }_{x_{0}}^{\prime \prime }(x)\mathbbm{1}\left\{ \widehat{%
\rho }_{x_{0}}^{\prime \prime }(x)>n^{-\frac{\beta }{2\beta +2}}\right\} ,
\end{eqnarray*}%
gives
\begin{eqnarray}
\sum_{x\in \mathbf{U}_{x_{0},h}^{\prime }}\widehat{w}^{\prime \prime }(x)%
\sqrt{\widehat{\rho }_{x_{0}}^{\prime \prime }(x)} &\leq &n^{-\frac{1}{2}%
\frac{\beta }{2\beta +2}}\sum_{x\in \mathbf{U}_{x_{0},h}^{\prime }}\widehat{w%
}^{\prime \prime }(x)+n^{\frac{1}{2}\frac{\beta }{2\beta +2}}\sum_{x\in
\mathbf{U}_{x_{0},h}^{\prime }}\widehat{w}^{\prime \prime }(x)\widehat{\rho }%
_{x_{0}}^{\prime \prime }(x)  \notag \\
&\leq &n^{-\frac{1}{2}\frac{\beta }{2\beta +2}}+n^{\frac{1}{2}\frac{\beta }{%
2\beta +2}}\sum_{x\in \mathbf{U}_{x_{0},h}^{\prime }}\widehat{w}^{\prime
\prime }(x)\widehat{\rho }_{x_{0}}^{\prime \prime }(x).  \label{expect3}
\end{eqnarray}

From (\ref{expect2}) and (\ref{expect3}) one gets%
\begin{eqnarray*}
&&\mathbb{E}\{|\widehat{f}_{h,\eta }^{\prime }(x_{0})-f(x_{0})|^{2}\,\,\big|%
\,\,Y(x),x\in \mathbf{I}_{x_{0}}^{\prime \prime }\} \\
&\leq &c_{4}n^{\frac{\beta }{2\beta +2}}\,\left( \left( \sum_{x\in \mathbf{U}%
_{x_{0},h}^{\prime }}\widehat{w}^{\prime \prime }(x)\widehat{\rho }%
_{x_{0}}^{\prime \prime }(x)\right) ^{2}+\sigma ^{2}\sum_{x\in \mathbf{U}%
_{x_{0},h}^{\prime }}\widehat{w}^{\prime \prime 2}(x)\right) +c_{5}n^{-\frac{%
\beta }{2\beta +2}}\sqrt{\ln n}.
\end{eqnarray*}

Let $w_{1}^{\ast }=\arg \min\limits_{w}g_{1}(w)$, where $g_{1}$ was defined
in (\ref{s5ef4}). As $\widehat{w}^{\prime \prime }$ minimize the function in
(\ref{s3ww}), from (\ref{espect mse}) we obtain
\begin{equation}
\begin{split}
& \mathbb{E}\{|\widehat{f}_{h,\eta }^{\prime }(x_{0})-f(x_{0})|^{2}\,\,\big|%
\,\,Y(x),x\in \mathbf{I}_{x_{0}}^{\prime \prime }\} \\
& \leq \,\,c_{4}n^{\frac{\beta }{2\beta +2}}\,\left( \left( \sum_{x\in \mathbf{U}%
_{x_{0},h}^{\prime }}{w}_{1}^{\ast }(x)\widehat{\rho }_{x_{0}}^{\prime
\prime }(x)\right) ^{2}+\sigma ^{2}\sum_{x\in \mathbf{U}_{x_{0},h}^{\prime }}%
{w}_{1}^{\ast 2}(x)\right) +c_{5}n^{-\frac{\beta }{2\beta +2}}\sqrt{\ln n}.
\end{split}
\label{s5ef4bis}
\end{equation}%
By Lemma \ref{lemma-ad2-002}, with probability $1-O\left( n^{-2}\right) ,$
\begin{equation*}
\max_{x\in \mathbf{U}_{x_{0},h}}\left( \widehat{\rho }_{x_{0}}^{\prime
\prime }(x)-c_{6}\rho _{f,x_{0}}^{2}\left( x\right) \right) \leq c_{7}n^{-%
\frac{\beta }{2\beta +2}}\sqrt{\ln n}.
\end{equation*}%
Therefore, with probability $1-O(n^{-2}),$
\begin{equation*}
\begin{split}
& \mathbb{E}\{|\widehat{f}_{h,\eta }^{\prime }(x_{0})-f(x_{0})|^{2}\,\,\big|%
\,\,Y(x),x\in \mathbf{I}_{x_{0}}^{\prime \prime }\} \\
& \leq c_{8}n^{\frac{\beta }{2\beta +2}}\left( \left( \sum_{x\in \mathbf{U}%
_{x_{0},h}^{\prime }}w_{1}^{\ast }(x)\rho _{f,x_{0}}(x)\right) ^{2}+\sigma
^{2}\sum_{x\in \mathbf{U}_{x_{0},h}^{\prime }}w_{1}^{\ast 2}(x)\right) +c_{9}n^{-%
\frac{\beta }{2\beta +2}}\sqrt{\ln n} \\
& =c_{8}n^{\frac{\beta }{2\beta +2}}g_{1}(w_{1}^{\ast })+c_{9}n^{-\frac{\beta }{%
2\beta +2}}\sqrt{\ln n}.
\end{split}%
\end{equation*}%
This gives the assertion of Lemma \ref{lm5_6}, as $g_{1}(w_{1}^{\ast
})=O\left( n^{-\frac{2\beta }{2\beta +2}}\right) $ by Lemma \ref{lm5_2} with
$\mathbf{U}_{x_{0},h}^{\prime }$ instead of $\mathbf{U}_{x_{0},h}$.
\end{proof}

\textit{Proof of Theorem \ref{Th adapt 002}}. Since the function $f$
satisfies H\"{o}lder's condition, by the definition of $g_{1}(w)$ (cf. (\ref%
{upper bound g1})) we have (see the proof of Theorem \ref{Th adapt 001})%
\begin{equation*}
g_{1}(w)\leq L^{2}+\sigma ^{2}
\end{equation*}%
so that
\begin{equation*}
\mathbb{E}\left( |\widehat{f}_{h,\eta }^{\prime }(x_{0})-f(x_{0})|^{2}\,\,%
\big|\,\,Y(x),x\in \mathbf{I}_{x_{0}}^{\prime \prime }\right) \leq g_{1}(%
\widehat{w}^{\prime \prime })\leq L^{2}+\sigma ^{2}.
\end{equation*}%
Denote by $X$ the conditional expectation in the above display. Then
\begin{equation*}
\begin{split}
\mathbb{E}X& =\mathbb{E}X\cdot \mathbbm{1}{\{X\geq cn^{-\frac{\beta
}{2\beta +2}}\ln n\}}+\mathbb{E}X\cdot \mathbbm{1}{\{X<
cn^{-\frac{\beta }{2\beta +2}}\ln n\}} \\
& \leq \left( L^{2}+\sigma ^{2}\right) \mathbb{P}\{X\geq cn^{-\frac{\beta }{%
2\beta +2}}\ln n\}+cn^{-\frac{\beta }{2\beta +2}}\ln n.
\end{split}%
\end{equation*}%
So applying Lemma \ref{lemma-ad-003}, we see that
\begin{equation*}
\begin{split}
\mathbb{E}\left( |\widehat{f}_{h,\eta }^{\prime
}(x_{0})-f(x_{0})|^{2}\right) & =\mathbb{E}X \\
& \leq O(n^{-2})+cn^{-\frac{\beta }{2\beta +2}}\ln n \\
& =O\left( n^{-\frac{\beta }{2\beta +2}}\ln n\right) .
\end{split}%
\end{equation*}%
This proves Theorem \ref{Th adapt 002}

\section{Conclusion}

A new image denoising filter to deal with the additive Gaussian white noise
model based on a weights optimization problem is proposed. The proposed
algorithm is computationally fast and its implementation is straightforward.
Our work leads to the following conclusions.

\begin{enumerate}
\item In the non-local means filter the choice of the Gaussian kernel is not
justified. Our approach shows that it is preferable to choose the triangular
kernel.

\item The obtained estimator is shown to converge at the usual optimal rate,
under some regularity conditions on the target function. To the best of our
knowledge such convergence results have not been established so far.

\item Our filter is parameter free in the sense that it chooses
automatically the bandwidth parameter.

\item Our numerical results confirm that optimal choice of the kernel
improves the performance of the non-local means filter, under the same
conditions.
\end{enumerate}


\begin{thebibliography}{99}
\bibitem{Bu} A.~Buades, B.~Coll, and J.M. Morel.
\newblock {A review of
image denoising algorithms, with a new one}.
\newblock {\em Multiscale
Model. Simul.}, 4(2):490--530, 2006.

\bibitem{buades2009note} T.~Buades, Y.~Lou, JM~Morel, and Z.~Tang. %
\newblock{A note on multi-image denoising}. \newblock In \emph{Int. workshop
on Local and Non-Local Approximation in Image Processing}, pages 1--15,
August 2009.

\bibitem{cai2008two} J.F. Cai, R.H. Chan, and M.~Nikolova.
\newblock {Two-phase approach for deblurring images corrupted by impulse plus
  Gaussian noise}. \newblock {\em Inverse Problems and Imaging},
2(2):187--204, 2008.

\bibitem{Dabov2007color} K.~Dabov, A.~Foi, V.~Katkovnik, and K.~Egiazarian.
\newblock {Color image denoising via sparse 3D collaborative filtering with
  grouping constraint in luminance-chrominance space}. \newblock In \emph{%
IEEE Int. Conf. Image Process.}, volume~1, pages 313--316, September 2007.

\bibitem{Dabov2007image} K.~Dabov, A.~Foi, V.~Katkovnik, and K.~Egiazarian.
\newblock {Image denoising by sparse 3-D transform-domain collaborative
  filtering}. \newblock {\em IEEE Trans. Image Process.}, 16(8):2080--2095,
2007.

\bibitem{donoho1994ideal} D.L. Donoho and J.M. Johnstone.
\newblock {Ideal
spatial adaptation by wavelet shrinkage}. \newblock {\em Biometrika},
81(3):425, 1994.

\bibitem{FanGijbels1996} J.Q. Fan and I.~Gijbels.
\newblock {Local
polynomial modelling and its applications}. \newblock In \emph{Chapman} \&
\emph{Hall, London}, 1996.

\bibitem{Garnett2005universal} R.~Garnett, T.~Huegerich, C.~Chui, and W.~He. %
\newblock {A universal noise removal algorithm with an impulse detector}. %
\newblock {\em IEEE Trans. Image Process.}, 14(11):1747--1754, 2005.

\bibitem{Katkovnik2010local} V.~Katkovnik, A.~Foi, K.~Egiazarian, and
J.~Astola.
\newblock {From local kernel to nonlocal multiple-model image
denoising}. \newblock {\em Int. J. Comput. Vis.}, 86(1):1--32, 2010.

\bibitem{kervrann2006optimal} C.~Kervrann and J.~Boulanger. %
\newblock{Optimal spatial adaptation for patch-based image denoising}.
\newblock {\em
IEEE Trans. Image Process.}, 15(10):2866--2878, 2006.

\bibitem{kervrann2008local} C.~Kervrann and J.~Boulanger.
\newblock {Local adaptivity to variable smoothness for exemplar-based image
  regularization and representation}. \newblock {\em Int. J. Comput. Vis.},
79(1):45--69, 2008.

\bibitem{kervrann2010bayesian} C.~Kervrann, J.~Boulanger, and P.~Coup{\'{e}}%
.
\newblock {Bayesian non-local means filter, image redundancy and adaptive
  dictionaries for noise removal}. \newblock In \emph{Proc. Conf.
Scale-Space and Variational Meth. (SSVM' 07)}, pages 520--532. Springer,
June 2007.

\bibitem{li2010new} B.~Li, Q.S. Liu, J.W. Xu, and X.J. Luo.
\newblock {A new
method for removing mixed noises}.
\newblock {\em Sci. China Ser. F
(Information sciences)}, 54:1--9, 2010.

\bibitem{lou2010image} Y.~Lou, X.~Zhang, S.~Osher, and A.~Bertozzi. %
\newblock {Image recovery via nonlocal operators}.
\newblock {\em J. Sci.
Comput.}, 42(2):185--197, 2010.

\bibitem{Nazin2008direct} A.V. Nazin, J.~Roll, L.~Ljung, and I.~Grama.
\newblock {Direct weight optimization in statistical estimation and system
  identification}.
\newblock {\em System Identification and Control Problems (SICPRO��08),
  Moscow}, January 28-31 2008.

\bibitem{polzehl2003image} J.~Polzehl and V.~Spokoiny.
\newblock {Image
denoising: pointwise adaptive approach}. \newblock {\em Ann. Stat.},
31(1):30--57, 2003.

\bibitem{polzehl2006propagation} J.~Polzehl and V.~Spokoiny. %
\newblock{Propagation-separation approach for local likelihood estimation}. %
\newblock\emph{Probab. Theory Rel.}, 135(3):335--362, 2006.

\bibitem{polzehl2000adaptive} J.~Polzehl and V.G. Spokoiny. %
\newblock{Adaptive weights smoothing with applications to image restoration}%
. \newblock {\em J. Roy. Stat. Soc. B}, 62(2):335--354, 2000.

\bibitem{Roll2003local} J.~Roll.
\newblock {Local and piecewise affine
approaches to system identification}.
\newblock {\em Ph.D. dissertation, Dept. Elect. Eng., Link\"{o}ing University,
  Link\"{o}ing, Sweden}, 2003.

\bibitem{Roll2004} J.~Roll and L.~Ljung.
\newblock {Extending the direct
weight optimization approach}. \newblock In \emph{Technical Report
LiTH-ISY-R-2601. Dept. of EE, Link\"{o}ping Univ., Sweden}, 2004.

\bibitem{roll2005nonlinear} J.~Roll, A.~Nazin, and L.~Ljung. %
\newblock{Nonlinear system identification via direct weight optimization}. %
\newblock\emph{Automatica}, 41(3):475--490, 2005.

\bibitem{Sacks1978linear} J.~Sacks and D.~Ylvisaker.
\newblock {Linear
estimation for approximately linear models}. \newblock {\em Ann. Stat.},
6(5):1122--1137, 1978.

\bibitem{tomasi1998bilateral} C.~Tomasi and R.~Manduchi.
\newblock{Bilateral
filtering for gray and color images}. \newblock In \emph{Proc. Int. Conf.
Computer Vision}, pages 839--846, January 04-07 1998.

\bibitem{Wh} P.~Whittle.
\newblock {Optimization under constraints: theory and applications of nonlinear
  programming}. \newblock In \emph{Wiley-Interscience, New York}, 1971.

\bibitem{yaroslavsky1985digital} L.~P. Yaroslavsky.
\newblock {Digital
picture processing. An introduction}. \newblock In \emph{Springer-Verlag,
Berlin}, 1985.
\end{thebibliography}
\end{document}